\documentclass[11pt]{article}
\usepackage{graphicx}
\usepackage{fullpage}
\usepackage{amsmath,amsthm,amssymb}
\usepackage{color}
\usepackage{verbatim}
\usepackage{mathabx}
\usepackage{algorithm}
\usepackage{algpseudocode}

\bibliographystyle{plainurl}

\begin{document}
\def\lf{\left\lfloor}   
\def\rf{\right\rfloor}
\def\lc{\left\lceil}   
\def\rc{\right\rceil}

\newtheorem{theorem}{Theorem}[section]
\newtheorem{lemma}[theorem]{Lemma}
\newtheorem{proposition}[theorem]{Proposition}
\newtheorem{claim}[theorem]{Claim}
\newtheorem{corollary}[theorem]{Corollary}
\newtheorem{definition}[theorem]{Definition}
\newtheorem{observation}[theorem]{Observation}
\newtheorem{fact}[theorem]{Fact}
\newtheorem{property}{Property}
\newtheorem{remark}{Remark}[section]
\newtheorem{notation}{Notation}[section]
\newtheorem{example}{Example}[section]
\newtheorem{conjecture}{Conjecture}
\newtheorem{question}[conjecture]{Question}

\newcommand{\eps}{\varepsilon}
\newcommand{\tw}{\mathsf{tw}}
\newcommand{\pw}{\mathsf{pw}}
\newcommand{\tail}{\mathsf{tail}}
\newcommand{\diam}{\mathsf{diam}}
\newcommand{\dist}{\mathsf{dist}}
\newcommand{\ball}{\mathsf{ball}}
\newcommand{\OPT}{\mathsf{OPT}}
\newcommand{\dem}{\mathsf{dem}}
\newcommand{\lef}{\mathsf{left}}
\newcommand{\ri}{\mathsf{right}}
\newcommand{\XXX}{\textcolor{red}{XXX}}
\newcommand{\ImSucc}{S}
\newcommand{\Succ}{S'}
\newcommand{\len}{\mathsf{len}}
\newcommand{\supp}{\mathsf{supp}}
\newcommand{\UN}{\mathsf{UN}}
\newcommand{\nptime}{\textsf{NP }}
\newcommand{\dtime}{\textsf{DTIME }}
\newcommand{\zpptime}{\textsf{ZPP }}
\newcommand{\CC}[1]{\textcolor{red}{c_{#1}}}
\newcommand{\etal}{\textit{et al}.}
\newcommand{\poly}{\textit{poly}}
\newcommand{\spcut}{\text{Sparsest Cut }}
\newcommand{\fcg}{\text{flow-cut gap }}
\newcommand{\mcut}{\text{Multicut }}
\newcommand{\mflow}{\mathsf{maxflow }}
\newcommand{\gap}{\mathsf{gap }}

\title{On constant multi-commodity flow-cut gaps\\ for directed minor-free graphs}

\author{
Ario Salmasi\thanks{Dept.~of Computer Science \& Engineering, The Ohio State University, \texttt{salmasi.1@osu.edu}.}
\and
Anastasios Sidiropoulos\thanks{Dept.~of Computer Science, University of Illinois at Chicago, \texttt{sidiropo@uic.edu}.}
\and
Vijay Sridhar\thanks{Dept.~of Computer Science \& Engineering, The Ohio State University, \texttt{sridhar.38@osu.edu}.}
}

\clearpage

\date{}
\maketitle

\thispagestyle{empty}

\begin{abstract}
The multi-commodity flow-cut gap is a fundamental parameter that affects the performance of several divide \& conquer algorithms,
and has been extensively studied for various classes of undirected graphs.
It has been shown by Linial, London and Rabinovich \cite{linial1994geometry} and by Aumann and Rabani \cite{aumann1998log} that for general $n$-vertex graphs it is bounded by $O(\log n)$ and the Gupta-Newman-Rabinovich-Sinclair conjecture \cite{gupta2004cuts} asserts that it is $O(1)$ for any family of graphs that excludes some fixed minor.

The flow-cut gap is poorly understood for the case of directed graphs.
We show that for uniform demands it is $O(1)$ on directed series-parallel graphs, and on directed graphs of bounded pathwidth.
These are the first constant upper bounds of this type for some non-trivial family of directed graphs.
We also obtain $O(1)$ upper bounds for the general multi-commodity flow-cut gap on directed trees and cycles.
These bounds are obtained via new embeddings and Lipschitz quasipartitions for quasimetric spaces, which generalize analogous results form the metric case, and could be of independent interest.
Finally, we discuss limitations of methods that were developed for undirected graphs, such as random partitions, and random embeddings.
\end{abstract}

\pagebreak{}
\setcounter{page}{1}

\section{Introduction}
The multi-commodity flow-cut gap is a fundamental parameter that has been proven instrumental in the design of routing and divide \& conquer algorithms in graphs.
Bounds on this parameter generalize the max-flow/min-cut theorem, and lead to deep connections between algorithm design, graph theory, and geometry \cite{linial1994geometry,aumann1998log,arora2008euclidean}.
While the flow-cut gap for several classes of undirected graphs has been studied extensively, the case of directed graphs is poorly understood despite significant efforts. In this work we make progress towards overcoming this limitation by showing constant flow-cut gaps for some directed graph families. Consequently we also develop constant-factor approximation algorithms for certain directed cut problems on these graphs.

\subsection{Multi-commodity flow-cut gaps}
A \emph{multi-commodity flow} instance in an undirected graph $G$ is defined by two non-negative functions: $c \colon E(G) \to \mathbb{R}$ and $d \colon V(G) \times V(G) \to \mathbb{R}$. We refer to $c$ and $d$ as the \emph{capacity} and \emph{demand} functions respectively. The \emph{maximum concurrent flow} is the maximal value $\varepsilon$ such that for every $u,v \in V(G)$, $\varepsilon\cdot d(u,v)$ can be simultaneously routed between $u$ and $v$, without violating the edge capacities. We refer to this value as $\mflow(G,c,d)$.

For every $S \subseteq V(G)$, the \emph{sparsity} of $S$ is defined as follows:
\[
\frac{\sum_{(u,v)\in E(G)} c(u,v) |\mathbf{1}_S(u) - \mathbf{1}_S(v)|}{\sum_{u,v\in V(G)} d(u,v) |\mathbf{1}_S(u) - \mathbf{1}_S(v)|},
\]
where $\mathbf{1}_S \colon V(G) \to \{0,1\}$ is the indicator variable for membership in $S$. The sparsity of a cut is a natural upper bound for $\mflow(G,c,d)$. The \emph{multi-commodity max-flow min-cut gap} for $G$, denoted by $\gap(G)$, is the maximum gap between the value of the flow and the upper bounds given by the sparsity formula, over all multi-commodity flow instances on $G$.
The flow-cut gap on undirected graphs has been studied extensively, and several upper and lower bounds have been obtained for various graph classes. The gap is referred to as the \emph{uniform} multi-commodity flow-cut gap for the special case where there is a unit demand between every pair of vertices. 
Leighton and Rao \cite{leighton1999multicommodity} showed that the uniform flow-cut gap is $\Theta(\log n)$ in undirected graphs.
Subsequently Lineal, London and Rabinovich \cite{linial1994geometry} showed that the non-uniform multi-commodity flow-cut gap for the \spcut problem with $k$ demand pairs is upper bounded by $O(\log k)$.
Besides these there are various studies of the flow-cut gap for specific graph families. 
A central conjecture posed by Gupta, Newman, Rabinovich, and Sinclair in \cite{gupta2004cuts} asserts the following.

\begin{conjecture}[GNRS Conjecture \cite{gupta2004cuts}] \label{conj:gnrs}
For every family of finite graphs ${\cal F}$, we have $\gap({\cal F}) = O(1)$ iff ${\cal F}$ forbids some minor.
\end{conjecture}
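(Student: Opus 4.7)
The biconditional splits into two implications of very different difficulty. The easier forward direction I would prove by contrapositive: if ${\cal F}$ forbids no minor, then for every $H$ some $G \in {\cal F}$ contains $H$ as a minor; in particular ${\cal F}$ has arbitrarily large constant-degree expanders as minors. Combining the Leighton--Rao $\Omega(\log n)$ lower bound on the uniform gap of an $n$-vertex expander with the monotonicity of $\gap$ under taking minors---lift a flow instance on $H$ to $G$ by giving very large capacity to edges inside each branch set and routing demands only between branch-set representatives, so that both the max-flow and the min-sparsity-cut values on $G$ agree with those on $H$ up to $1+o(1)$ factors---one concludes $\gap({\cal F}) = \omega(1)$.

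The substantive direction is that every $H$-minor-free family has $\gap$ bounded by a constant depending only on $H$. My plan is to pass to the equivalent geometric formulation of Linial--London--Rabinovich and Aumann--Rabani: $\gap(G)$ equals, up to a universal constant, the worst-case $L_1$-distortion of finite metrics supported on $V(G)$. The goal thus becomes showing that every shortest-path metric on an $H$-minor-free graph embeds into $L_1$ with distortion bounded by some $c_H$.

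The natural attack is via the Robertson--Seymour structure theorem, which expresses every $H$-minor-free graph as a bounded-depth clique-sum of almost-embeddable pieces: graphs embeddable on a surface of genus $g(H)$, with a bounded number of apex vertices and vortices of bounded pathwidth. I would handle each building block separately---Gupta--Newman--Rabinovich--Sinclair gives $O(1)$ distortion for bounded-pathwidth graphs, and apices add only a constant multiplicative loss via a standard gluing argument---and then invoke the clique-sum composition results of Lee--Naor and Lee--Sidiropoulos, which show that $L_1$-embeddability is preserved under clique-sums over bounded-size cliques with constant-factor loss.

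The hard part will be the bounded-genus base case, which already subsumes the notorious open question of whether planar-graph metrics embed into $L_1$ with constant distortion; the current best is Rao's $O(\sqrt{\log n})$ bound via padded random partitions. Overcoming this would likely require either a fundamentally new planar embedding that exploits the combinatorics of face cycles more carefully than existing random-partition arguments allow, or a direct measure-theoretic construction that bypasses the planar reduction altogether. This is precisely the obstacle that keeps the conjecture open and that motivates the present paper to instead establish constant gaps for restricted, structurally simpler families in the directed setting, where the planar barrier can be sidestepped.
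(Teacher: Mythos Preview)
The statement you were asked to prove is not a theorem but an open \emph{conjecture}. The paper does not prove it; it merely quotes the GNRS conjecture as background and motivation, and then notes that it has been verified only for a few special families (series-parallel, $O(1)$-outerplanar, $O(1)$-pathwidth, certain planar metrics). There is therefore no ``paper's own proof'' to compare your proposal against.

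To your credit, you essentially recognize this yourself in the final paragraph: you correctly identify that the substantive direction would require, in particular, a constant-distortion $L_1$ embedding of all planar-graph metrics, which is a famous open problem (Rao's $O(\sqrt{\log n})$ is still the state of the art), and you correctly observe that this obstacle is precisely why the present paper retreats to restricted families in the directed setting. Your outline of the easy direction (no forbidden minor $\Rightarrow$ expanders as minors $\Rightarrow$ $\Omega(\log n)$ gap via Leighton--Rao plus minor-monotonicity of the gap) is the standard folklore argument and is fine.

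One small attribution slip: Gupta--Newman--Rabinovich--Sinclair handled series-parallel graphs, not bounded-pathwidth graphs; the bounded-pathwidth case is due to Lee and Sidiropoulos. This does not affect the substance of your outline, since both results would be needed as ingredients in any structure-theorem-based attack, but it is worth getting right.
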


Conjecture 1 has been verified for the case of series-parallel graphs \cite{gupta2004cuts}, $O(1)$-outerplanar graphs \cite{chekuri2006embedding}, $O(1)$-pathwidth graphs \cite{lee2009geometry}, and for some special classes of planar metrics \cite{sidiropoulos2013non}.
For graphs excluding any fixed minor the flow-cut gap with $k$ terminal pairs is known to be $O(1)$ for uniform demands and $O(\log k)$ for arbitrary demands \cite{klein1993excluded}. 

For the case of directed graphs, the flow-cut gap is defined in terms of 
the Directed Non-Bipartite \spcut problem which is an asymmetric variant of the \spcut problem, and is defined as follows. Let $G$ be a directed graph and let $c : E(G) \to \mathbb{R}_{\geq 0}$ be a capacity function. Let $T = \{(s_1,t_1),(s_2,t_2),\ldots,(s_k,t_k)\}$ be a set of terminal pairs, where each terminal pair $(s_i,t_i)$ has a non-negative demand $\dem(i)$. A cut in $G$ is a subset of directed edges of $E(G)$. For a cut $S \subseteq E(G)$ in $G$, let $I_S$ be the set of all indices $i \in \{1,2,\ldots,k\}$ such that all paths from $s_i$ to $t_i$ have at least one edge in $S$. Let $D(S) = \sum_{i\in I_S} \dem(i)$ be the demand separated by $S$. Let $W(S) = \frac{C(S)}{D(S)}$ be the \emph{sparsity} of $S$. The goal is to find the cut with minimum sparsity. The LP relaxation of this problem corresponds to the dual of the LP formulation of the directed maximum concurrent flow problem, and the integrality gap of this LP relaxation is the directed multi-commodity flow-cut gap. Hajiyaghayi and R\"{a}cke \cite{aprxdirectedsparsestcut} showed an upper bound of $O(\sqrt{n})$ for the flow-cut gap. This upper bound on the gap has been further improved by Agarwal, Alon and Charikar to $\tilde{O}(n^{11/23})$ in \cite{agarwal2007improved}. For directed graphs of treewidth $t$, it has been shown that the gap is at most $t \log^{O(1)} n $ by M{\'e}moli, Sidiropoulos and Sridhar \cite{memoli2016quasimetric}. On the lower bound side Saks \etal~\cite{saks2004lower} showed that for general directed graphs the flow-cut gap is at least $k-\eps$, for any constant $\eps>0$, and for any $k=O(\log n/\log\log n)$. Chuzhoy and Khanna showed a $\tilde{\Omega}(n^{\frac{1}{7}})$ lower bound for the flow-cut gap in \cite{chuzhoy2009polynomial}.


A natural generalization of the GNRS Conjecture for directed graphs poses the question of whether the multi-commodity flow-cut gap is $O(1)$ for any family of minor free directed graphs. In this paper, we provide the first constant gaps for some non-trivial family of graphs. Throughout this paper, when we refer to a directed family of graphs we mean that it is obtained from an undirected family of graphs by assigning arbitrary directions to the edge sets. We state below our two main results pertaining to the flow-cut gap.

\begin{theorem}\label{thm:uniformgap}
The uniform multi-commodity flow-cut gap on directed series-parallel graphs and directed bounded pathwidth graphs is $O(1)$.  
\end{theorem}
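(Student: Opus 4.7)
The plan is to prove the bound via a single mechanism: construct \emph{Lipschitz quasipartitions} for the shortest-path quasimetrics arising on these graph classes, and combine them across geometric scales to extract a cut matching the LP dual lower bound up to a constant factor. By LP duality, bounding the uniform flow-cut gap by $\alpha$ on $G$ reduces to the following: given any edge-length function $\ell$ on $G$ with associated shortest-path quasimetric $d$, exhibit a cut $S \subseteq E(G)$ of sparsity at most $\alpha \cdot \sum_{e} c(e)\ell(e) / \sum_{u\ne v} d(u,v)$, up to the appropriate asymmetric accounting. The algorithmic task is thus to round a quasimetric to a sparse cut.

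I would introduce a notion of \emph{Lipschitz $\Delta$-quasipartition}: a distribution over partitions of $V(G)$ into pieces of quasidiameter $O(\Delta)$ such that every ordered pair $(u,v)$ is separated with probability $O(d(u,v)/\Delta)$. A telescoping argument over geometric scales $\Delta = 2^i$ converts such a family into either a low-distortion $\ell_1$-style embedding of $d$, or, more directly, into a random cut whose expected sparsity matches the dual bound. Only a constant Lipschitz constant (independent of $n$) is admissible here, since any $\Omega(\log n)$ loss would only recover the Leighton--Rao-type bound already known for general directed graphs.

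For directed bounded pathwidth, I would use the path decomposition $(X_1,\ldots,X_m)$ of width $t$: at scale $\Delta$, randomly chop the bag-index set $\{1,\ldots,m\}$ into intervals whose induced pieces have quasidiameter at most $\Delta$, e.g.\ via a greedy left-to-right sweep with a random shift. Adjacent intervals meet only through $O(1)$ bags of size at most $t+1$, so the separation probability of a pair $(u,v)$ can be charged against the quasimetric length of a walk connecting them along the decomposition, yielding the Lipschitz property with constants depending only on $t$. For directed series-parallel graphs, I would instead induct on the $2$-terminal recursive decomposition: combine the quasipartitions of $G_1$ and $G_2$ at each parallel or series composition, showing that both the quasidiameter bound and the Lipschitz constant increase by at most constant multiplicative factors per composition step, with the composition step at series nodes requiring a careful alignment of the partition classes meeting at the shared terminal.

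The principal technical obstacle is the quasipartition construction in the asymmetric setting. Classical random-partition schemes (Calinescu--Karloff--Rabani, Bartal, and their refinements) exploit the symmetry $d(u,v)=d(v,u)$, whereas a quasimetric has two directed distances and correspondingly several competing notions of \emph{quasidiameter}, \emph{ball}, and \emph{Lipschitz constant}. Identifying the right formulation so that (i) gluing at series/parallel compositions or path-decomposition interfaces does not blow up the constants, and (ii) the Lipschitz probability bound converts via telescoping into a cut of the desired sparsity, is where I expect the main technical novelty of the proof to lie.
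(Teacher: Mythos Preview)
Your high-level strategy --- produce Lipschitz random quasipartitions of the LP quasimetric and convert them into a sparse cut --- is indeed the paper's mechanism, and the reduction from such quasipartitions to the uniform gap is exactly Theorem~\ref{thm:sparsestcut}. However, several concrete steps in your proposal would not go through as written.

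\textbf{The central gap is your definition of quasipartition.} You define it as a \emph{vertex partition} into pieces of quasidiameter $O(\Delta)$, with the Lipschitz bound $\Pr[(u,v)\text{ separated}]=O(d(u,v)/\Delta)$ stated for ordered pairs. But in a vertex partition, $(u,v)$ is separated iff $(v,u)$ is, so your Lipschitz condition forces $\Pr[\text{separated}]=O(\min(d(u,v),d(v,u))/\Delta)$. Now take any $u,v$ with $d(u,v)=0$ and $d(v,u)$ large (ubiquitous in directed graphs). The Lipschitz condition forbids ever separating them, so they lie in the same cluster, whose quasidiameter is then at least $d(v,u)$ --- contradicting the diameter bound. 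No symmetric vertex partition can satisfy both requirements simultaneously. The paper's quasipartition is instead a transitive reflexive \emph{relation} (a preorder), where one may have $(v,u)\in Q$ but $(u,v)\notin Q$; $r$-boundedness requires only $d(x,y)\le r$ whenever $(x,y)\in Q$. This asymmetric object is what makes the construction possible, and the conversion to an actual edge cut happens later, in the proof of Theorem~\ref{thm:sparsestcut}, by passing to strongly connected components.

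\textbf{Your pathwidth construction is essentially the scheme the paper rules out.} Chopping the bag-index axis with a random shift is the Klein--Plotkin--Rao template, and Section~\ref{sec:limits} gives an explicit pathwidth-$2$ instance on which any constant number of such BFS-style sweeps fails to produce an $r$-bounded quasipartition. The paper's actual construction (Section~\ref{sec:pathwidth}) first embeds isometrically into a \emph{path of cliques}, then performs $k^2$ iterations: in each, it fixes a shortest left-to-right path $p_i$ in the residual graph, does a BFS chop from its source, and deletes the horizontal edges of $p_i$; then repeats right-to-left. The $r$-boundedness analysis (Lemma~\ref{lem:inductive_step}) is a delicate induction on the width of a surviving path and on which $p_i,q_j$ it avoids, and the Lipschitz constant ends up $2^{O(k^2)}$ rather than a polynomial in $k$.

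\textbf{For series-parallel}, the paper does not induct on the two-terminal decomposition as you propose (the GNRS-style argument), but instead embeds isometrically into a \emph{tree of hexagons}, normalizes so that every child path is either \emph{tight} or \emph{slack}, and runs a global algorithm combining BFS from the root with BFS along \emph{complementary paths} $\overline{p_j}$, followed by a recursive propagation step down child paths (Steps~4 and~7). Your inductive sketch might be salvageable, but the difficulty you flag --- that constants could blow up at composition nodes --- is real, and the paper sidesteps it entirely with the global construction.

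Finally, the multi-scale telescoping is unnecessary here: for the \emph{uniform} gap a single scale $r\approx 1/n^2$ suffices, followed by a Leighton--Rao-style ball-growing/SCC argument (proof of Theorem~\ref{thm:sparsestcut}). Telescoping across scales is what one would use for the \emph{non-uniform} gap via directed $\ell_1$, which the paper handles only for trees and cycles.
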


\begin{theorem}\label{thm:nonuniformgap}
The non-uniform multi-commodity flow-cut gap on directed cycles and directed trees is $O(1)$.  
\end{theorem}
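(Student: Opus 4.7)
For directed trees the plan is essentially combinatorial and relies on the \emph{unique-directed-path} property: between any two vertices of a tree there is a single undirected path, so at most one of its orientations is a valid directed path. After discarding trivially unreachable pairs (which force both $\mflow = 0$ and the sparsity of the empty cut to be $0$, leaving the ratio well-defined only by convention), let $P_i$ denote the unique directed $s_i$-to-$t_i$ path. The max concurrent flow LP then collapses, since the only way to serve pair $i$ is along $P_i$ and the edge capacity constraint reads $\sum_{i:\, e \in P_i} \varepsilon\, \dem(i) \le c(e)$; hence
\[
\mflow(G,c,\dem) \;=\; \min_{e \in E(G)} \frac{c(e)}{\sum_{i\,:\,e \in P_i} \dem(i)},
\]
which is exactly the sparsity of the corresponding single-edge cut. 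For a general cut $S = \{e_1,\dots,e_m\}$, each separated pair contributes $\dem(i)$ at least once to $\sum_{j=1}^m \sum_{i:\, e_j \in P_i} \dem(i)$, so $D(S)$ is bounded by this double sum, and the mediant inequality yields
\[
W(S) \;=\; \frac{\sum_j c(e_j)}{D(S)} \;\ge\; \min_{j} \frac{c(e_j)}{\sum_{i\,:\,e_j \in P_i}\dem(i)} \;\ge\; \mflow.
\]
Thus for directed trees the gap is in fact $1$.

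For directed cycles the unique-path property fails: pairs whose endpoints straddle the sources and sinks of the cycle may admit two distinct directed paths, one along each arc. Here I would use a quasimetric embedding / Lipschitz random-partition argument, in the spirit of the tools advertised in the abstract. Let $\beta$ be an optimal edge-weighting in the LP dual of max concurrent flow, so that $\sum_e c(e)\beta_e / \sum_i \dem(i)\,\dist_\beta(s_i,t_i) = \mflow$, where $\dist_\beta$ is the shortest directed-path quasimetric induced by $\beta$. I would then construct a random integer cut by introducing a potential $\pi(v) = \dist_\beta(r,v)$ for a uniformly random reference vertex $r$ on the cycle, and cutting every forward edge that crosses a random threshold $\tau$. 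Averaging over $r$ and $\tau$, the cyclic structure should give expected cut capacity $O(1)\cdot\sum_e c(e)\beta_e$ and expected separated demand $\Omega(1)\cdot\sum_i \dem(i)\,\dist_\beta(s_i,t_i)$, producing by a standard derandomization an integer cut of sparsity $O(1)\cdot\mflow$.

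The hard part is the cycle case. The tree-style mediant argument breaks as soon as a pair can split its flow between two directed paths, and the embedding/quasipartition must simultaneously account for both arcs of every such two-way pair without overcounting their contribution. I expect the technical heart of the proof to lie in tuning the random cut so that every pair is separated with probability $\Theta(\dist_\beta(s_i,t_i))$ while each edge is cut with probability $O(\beta_e)$; once this Lipschitz quasipartition is in hand, the usual LP-duality pipeline delivers the claimed $O(1)$ flow-cut gap for directed cycles, and combined with the exact bound above for directed trees yields Theorem~\ref{thm:nonuniformgap}.
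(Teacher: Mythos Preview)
Your tree argument is correct and more direct than the paper's. The paper proves the tree case via an embedding: it shows (Theorem~\ref{thm:treel1}) that any directed-tree quasimetric embeds isometrically into directed $\ell_1$ by picking a single random edge with probability proportional to its weight, and then invokes the Charikar--Makarychev--Makarychev theorem relating directed-$\ell_1$ distortion to the non-uniform flow-cut gap. Your combinatorial argument bypasses the embedding machinery and reaches the same conclusion (gap exactly $1$) in a few lines; the two are dual views of the same fact, but yours is self-contained and does not need the $\ell_1$/flow-cut black box.

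For cycles, your high-level plan matches the paper's---produce a Lipschitz random quasipartition (equivalently, an $O(1)$-distortion embedding into a convex combination of $0$--$1$ quasimetrics) and then apply the Charikar--Makarychev--Makarychev reduction---but the concrete construction you sketch (one random reference vertex $r$, potential $\pi(v)=\dist_\beta(r,v)$, one random threshold $\tau$) is not enough. The obstruction is exactly the case you flag as ``the hard part'': \emph{short pairs} $(u,v)$ for which both the clockwise and the counterclockwise $u$-to-$v$ paths have length small relative to the diameter. Separating such a pair requires cutting an edge on \emph{each} of the two directed paths, and a single threshold on a single potential does not force these two cutting events to occur simultaneously with probability $\Omega(d(u,v)/\Delta)$. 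The paper resolves this with substantial extra structure (Section~\ref{sec:cyclel1}): it first proves that all short pairs have their sources in one contiguous arc $A$ and their sinks in another arc $B$ (Lemma~\ref{lem:cycle_structure}), locates the meet-points $a_1',a_m'$ of the extremal sources, and then runs eight additional \emph{coupled} threshold steps (Steps~5--12 of the algorithm), anchored at $a_1,a_m,a_1',a_m'$ and driven by a common random variable $z_3$, so that for every short pair the two required cuts are correlated. Without this structural lemma and the coupled thresholds, the lower bound $\Pr[(u,v)\notin Q]\ge \Omega(d(u,v)/\Delta)$ fails for short pairs and the $O(1)$ bound does not follow; your sketch would need to supply an analogous mechanism.
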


\subsection{Cut problems of directed graphs}
Better bounds on the flow-cut gap typically also imply better approximation ratios for solving cut problems. For the Directed Non-Bipartite \spcut problem the flow-cut gap upper bounds of \cite{aprxdirectedsparsestcut} and \cite{agarwal2007improved} are also accompanied by $O(\sqrt{n})$ and $O(\tilde{n}^{11/23})$ polynomial time approximation algorithms respectively. Similarly for graphs of treewidth $t$, a $t \log^{O(1)} n $ polynomial time approximation algorithm is also provided in \cite{memoli2016quasimetric}.

Another closely related cut problem is the Directed \mcut problem which is defined as follows. Let $G$ be a directed graph and let $c : E(G) \to \mathbb{R}_{\geq 0}$ be a capacity function. Let $T = \{(s_1,t_1),(s_2,t_2),\ldots,(s_k,t_k)\}$ be a set of terminal pairs. A \emph{cut} in $G$ is a subset of $E(G)$. The \emph{capacity} of a cut $S$ is $c(S) = \sum_{e \in S}c(e)$. The goal is to find a cut separating all terminal pairs, minimizing the capacity of the cut. This problem is NP-hard. An $O(\sqrt{n \log{n}})$ approximation algorithm for Directed \mcut was presented by Cheriyan, Karloff and Rabani \cite{cheriyan2005approximating}. Subsequently an $\tilde{O}(n^{2/3}/\OPT^{1/3})$-approximation was given due to Kortsarts, Kortsarz and Nutov \cite{aprxdirectedmulticut3}. Finally \cite{agarwal2007improved} also gives an improved $\tilde{O}(n^{11/23})$-approximation algorithm for this problem. Again for graphs of treewidth $t$ a $t \log^{O(1)} n $ approximation algorithm was also shown in \cite{memoli2016quasimetric}.

On the hardness side \cite{chuzhoy2006hardness} demonstrated an $\Omega(\frac{\log{n}}{\log{\log{n}}})$-hardness for the Directed Non-Bipartite \spcut problem and the Directed \mcut problem under the assumption that \nptime $\not\subseteq$ \dtime $(n^{\log{n}^{O(1)}})$. This was further improved by them in a subsequent work \cite{chuzhoy2009polynomial} to obtain an $2^{\Omega(\log^{1 - \varepsilon}{n})}$-hardness result for both problems for any constant $\varepsilon > 0$ assuming that \nptime $\subseteq$ \zpptime.

Our main results for these problems are the following theorems.

\begin{theorem}\label{thm:spcutresults}
There exists a polynomial time $O(1)$-approximation algorithm for the Uniform Directed \spcut problem on series parallel graphs and graphs of bounded pathwidth. 
\end{theorem}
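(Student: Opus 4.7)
The plan is to derive the approximation algorithm directly from the flow-cut gap bound of Theorem~\ref{thm:uniformgap} via LP-rounding. Let $G$ be a directed series-parallel graph or a directed graph of bounded pathwidth, with a capacity function $c$ and uniform demand between every ordered pair of vertices. Write $\OPT$ for the minimum sparsity of any cut in $G$.

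First, I would formulate the standard LP relaxation of Uniform Directed \spcut, whose feasible solutions are the quasimetrics on $V(G)$ and whose objective is to minimize $\sum_{(u,v)\in E(G)} c(u,v)\, d(u,v)$ subject to $\sum_{u,v \in V(G)} d(u,v) = 1$. This LP is polynomial-sized and can be solved in polynomial time (e.g., by the ellipsoid method, or equivalently by solving the dual maximum concurrent flow LP). Let $d^\star$ be an optimal solution; by LP duality its value equals $\mflow(G,c,d)$, and since the sparsity of any cut corresponds to a feasible integral quasimetric, we have $\mathrm{LP} \le \OPT$.

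Next, I would invoke Theorem~\ref{thm:uniformgap}: the uniform flow-cut gap on $G$ is bounded by some constant $\alpha = O(1)$. Hence there exists a cut $S \subseteq E(G)$ with sparsity at most $\alpha \cdot \mathrm{LP} \le \alpha \cdot \OPT$. The remaining task, and the main step, is to exhibit a polynomial-time procedure that, given $d^\star$, actually constructs such a cut. I expect the proof of Theorem~\ref{thm:uniformgap} to proceed by either (i) embedding the quasimetric $d^\star$ into a nonnegative combination of cut quasimetrics (a directed analogue of $\ell_1$) with $O(1)$ distortion, or (ii) producing a distribution over cuts via the Lipschitz quasipartitions mentioned in the abstract. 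In case (i), the cut of minimum sparsity among those appearing in the combination suffices, and in case (ii), sampling from the distribution and taking the best of polynomially many trials yields, with high probability, a cut of sparsity $O(\alpha) \cdot \mathrm{LP}$; a straightforward enumeration or derandomization then removes the randomness.

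The principal obstacle is verifying that the embedding or quasipartition used to establish Theorem~\ref{thm:uniformgap} is itself computable in polynomial time from the LP solution, rather than only an existential object. This will hold provided the recursive decomposition underlying series-parallel or bounded-pathwidth structure can be computed efficiently (which is well known) and the per-level rounding step is polynomial. Given these ingredients, outputting the best cut encountered during the construction yields a polynomial-time $O(1)$-approximation for Uniform Directed \spcut on the claimed graph classes, completing the proof of Theorem~\ref{thm:spcutresults}.
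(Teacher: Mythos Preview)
Your proposal is correct and follows essentially the same route as the paper: solve the LP relaxation, apply the polynomial-time Lipschitz quasipartition constructions (Theorems~\ref{thm:tw2} and~\ref{thm:boundedpathwidthalgo}) to the resulting quasimetric, and round to a cut via the argument encapsulated in Theorem~\ref{thm:sparsestcut}. The paper's proof is in fact just a one-line citation of these three results, and your anticipated ``option (ii)'' together with the derandomization remark matches exactly what is done, including the explicit enumeration of the quasipartition support for the bounded-pathwidth case.
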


\begin{theorem}\label{thm:multicutresults}
There exists a polynomial time $O(1)$-approximation algorithm for the Directed \mcut problem on series parallel graphs and graphs of bounded pathwidth. 
\end{theorem}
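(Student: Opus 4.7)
The plan is to combine the natural LP relaxation of Directed Multicut with the flow-cut gap result established in Theorem~\ref{thm:uniformgap} and the algorithmic result of Theorem~\ref{thm:spcutresults}. I would first solve the standard LP that assigns a non-negative length $x(e)$ to each edge, requires every directed $s_i$-to-$t_i$ path to have length at least $1$, and minimizes $\sum_e c(e) x(e)$ in polynomial time. The optimum $\mathrm{OPT}_{\mathrm{LP}}$ is a lower bound on the integral multicut value and will serve as the benchmark throughout.

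Second, using the LP solution as a guide, I would iteratively apply the $O(1)$-approximate Uniform Directed Sparsest Cut subroutine of Theorem~\ref{thm:spcutresults}. At each step, on the subgraph induced by terminal pairs whose source can still reach their sink, the subroutine returns a cut whose ratio of cost to number of pairs separated is within a constant factor of optimal. A Garg--Vazirani--Yannakakis-style charging argument, adapted to the directed setting, then converts this into a multicut. The naive analysis gives an $O(\log k)$-approximation, since each round removes at least a constant fraction of the remaining pairs and pays a constant factor against the current LP mass.

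To sharpen the bound from $O(\log k)$ to $O(1)$, I would exploit the structural decomposition of the graph class instead of iterating globally. For bounded pathwidth graphs, I would process a path decomposition $(X_1,\dots,X_m)$ sequentially: at each bag, round the LP solution locally to a cut of the $O(1)$ edges crossing the bag, separating precisely those pairs whose directed path is ``resolved'' at this position in the decomposition. Because each edge participates in only $O(1)$ bags and each terminal pair is resolved at exactly one bag, the produced multicut has cost within $O(1)$ of $\mathrm{OPT}_{\mathrm{LP}}$. For series-parallel graphs, I would recurse on the SP-tree, treating series and parallel compositions separately, with each pair charged to the unique SP-tree node at which its two terminals are first separated; combining local cuts up the tree charges each edge $O(1)$ times.

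The main obstacle is the directed asymmetry in the rounding step. Standard undirected Multicut rounding uses region growing around terminals, where the boundary of a ball is charged to its volume via a logarithmic potential, inherently losing a $\log k$ factor. In directed graphs, reachability balls are not symmetric and this charging fails, as already observed in earlier work on directed cut problems. I expect the resolution to come from replacing region growing with the explicit structural decomposition outlined above: because directed $s_i$-to-$t_i$ paths in bounded pathwidth or series-parallel graphs are forced to respect the underlying decomposition, local combinatorial rounding at each bag or SP-tree node suffices, and the $O(1)$ uniform flow-cut gap of Theorem~\ref{thm:uniformgap} is applied on auxiliary unit-demand subinstances whose structure is preserved by the decomposition.
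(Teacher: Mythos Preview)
Your proposal has a genuine gap. The paper's proof is much more direct than what you attempt, and the route you take does not close.

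The paper does not go through Theorem~\ref{thm:uniformgap} or Theorem~\ref{thm:spcutresults} at all. Instead it solves the LP relaxation, obtaining edge lengths $x(e)$; these induce a shortest-path quasimetric $M$ on $G$ in which every $(s_i,t_i)$ has distance at least $1$. Then it applies the Lipschitz quasipartition results directly: Theorem~\ref{thm:tw2} (treewidth~2, hence series-parallel) and Theorem~\ref{thm:boundedpathwidth} (bounded pathwidth) give, for any $r>0$, an $r$-bounded $\beta$-Lipschitz distribution ${\cal D}$ over quasipartitions of $M$, with $\beta=O(1)$. Sampling $Q\sim{\cal D}$ with $r=1-\varepsilon$ and setting $S_Q=E(G)\setminus Q$ yields a feasible multicut (since $Q$ is $(1-\varepsilon)$-bounded, no $(s_i,t_i)$ pair can remain connected in $Q$), and
\[
\mathbb{E}[c(S_Q)] = \sum_{e\in E(G)} c(e)\,\Pr[e\notin Q] \;\le\; \sum_{e} c(e)\cdot \frac{\beta\,x(e)}{1-\varepsilon} \;\le\; \frac{\beta}{1-\varepsilon}\,\OPT_{\mathrm{LP}}.
\]
That is the whole argument.

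Your plan, by contrast, first reaches only $O(\log k)$ via GVY-style iteration (which you correctly flag), and then tries to recover $O(1)$ by ``local rounding at each bag'' of the path decomposition or at each node of the SP-tree. This second step is not a proof: you never specify what the local rounding rule is, why the cuts chosen at different bags are consistent, or how to charge the total cost against $\OPT_{\mathrm{LP}}$ without overcounting. The assertions that ``each edge participates in only $O(1)$ bags'' and that ``each terminal pair is resolved at exactly one bag'' are not both true in a way that combines into a charging scheme---a vertex (hence an edge's endpoints) can appear in many bags of a path decomposition, and a directed $s_i$--$t_i$ path can weave through the decomposition repeatedly. Making this precise would amount to rebuilding, bag by bag, exactly the random quasipartition that Sections~\ref{sec:treewidth} and~\ref{sec:pathwidth} already construct. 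You should invoke Theorems~\ref{thm:tw2} and~\ref{thm:boundedpathwidth} on the LP quasimetric directly and skip the detour through sparsest cut.
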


We remark that in the above results the running time in the case of graphs of pathwidth $k$ is $n^{O(1)}$.
That is, the running time does not depend on $k$.
Typically, algorithms for graphs of pathwidth $k$ have running time of the form either $f(k)n^{O(1)}$, or $n^{f(k)}$, for some function $f$, due to the use of dynamic programming.
Our algorithms are based on LP relaxations, and thus avoid this overhead.

\subsection{Quasimetric spaces and embeddings}

\paragraph{Random quasipartitions.}
A Quasimetric space is a pair $(X,d)$ where $X$ is a set of points and $d:X\times X\to \mathbb{R}_{+}\cup\{+\infty\}$, that satisfies the following two conditions:

\begin{description}
\item{(1)}
For all $x,y\in X$, $d(x,y)=0$ iff $x=y$.
\item{(2)}
For all $x,y,z\in X$, $d(x,y)\leq d(x,z)+d(z,y)$.
\end{description}

The notion of random quasipartitions was introduced in \cite{memoli2016quasimetric}. A quasipartition is a transitive reflexive relation. Let $M=(X,d)$ be a quasimetric space. For a fixed $r \geq 0$, we say that a quasipartition $Q$ of $M$ is $r$-bounded if for every $x,y \in X$ with $(x,y) \in Q$, we have $d(x,y) \leq r$. Let ${\cal D}$ be a distribution over $r$-bounded quasipartitions of $M$.
We say that ${\cal D}$ is $r$-bounded.
Let $\beta > 0$.
We say that ${\cal D}$ is \emph{$\beta$-Lipschitz} if for any $x,y\in X$, we have that 
\[
\Pr_{P\sim {\cal D}}[(x,y)\notin P] \leq \beta \frac{d(x,y)}{r}.
\]

Given a distribution $\mathcal{D}$ over quasipartitions we sometimes use the term random quasipartition (with distribution $\mathcal{D}$) to refer to any quasipartition $P$ sampled from $\mathcal{D}$.
We consider the quasimetric space obtained from the shortest path distance of a directed graph. M\'emoli, Sidiropoulos and Sridhar in \cite{memoli2016quasimetric} find an $O(1)$-Lipschitz distribution over $r$-bounded quasipartitions of tree quasimetric spaces. They also prove the existence of a $O(t\log n)$-Lipschitz distribution over $r$-bounded quasipartitions for any quasimetric that is obtained from a directed graph of treewidth $t$.

Our main results for finding Lipschitz quasipartitions are the following theorems.

\begin{theorem}\label{thm:tw2}
Let $G$ be a directed graph of treewidth 2.
Let $M = (V(G),d_G)$ denote the shortest-path quasimetric space of $G$.
Then for all $r > 0$, there exists some $O(1)$-Lipschitz distribution over $r$-bounded quasipartitions of $M$.
\end{theorem}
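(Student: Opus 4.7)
The plan is to exploit the series-parallel structure underlying treewidth-$2$ graphs and reduce to the tree case handled in~\cite{memoli2016quasimetric}. The underlying undirected graph of $G$ is a partial $2$-tree, hence series-parallel, and its block-cut decomposition has blocks that are $2$-connected. Because cut vertices act as $1$-separators, quasipartitions for each block can be combined across the block-cut tree with only constant overhead, essentially as in the tree case. Thus I reduce to the situation where the underlying graph of $G$ is $2$-connected and series-parallel, and hence admits an SP-decomposition tree $T$ whose leaves are the directed edges and whose internal nodes are either series or parallel compositions of two-terminal subgraphs with distinguished poles $s_H,t_H$; edge orientations are arbitrary and need not be consistent with the SP structure.

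The quasipartition $Q$ is then constructed by induction on $T$. At a leaf (a single directed edge) the relation is trivial. At a series composition of $H_1$ and $H_2$ glued along a single vertex $v$, I independently sample quasipartitions $Q_1,Q_2$ and set $Q$ to be the transitive closure of $Q_1\cup Q_2$; since $v$ is a cut vertex the resulting relation is $O(r)$-bounded and the constant blowup can be absorbed by rescaling $r$. For a parallel composition of $H_1$ and $H_2$ along the shared poles $s,t$, the structural observation is that every directed path crossing between $H_1$ and $H_2$ must pass through $s$ or $t$, so for $x\in H_1,y\in H_2$ the distance $d_G(x,y)$ is controlled by distances within $H_1$ and within $H_2$ from $x,y$ to the poles.

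To obtain a constant Lipschitz factor across a parallel composition, I would sample a single shared random scale $\rho\in[0,r]$ that simultaneously governs how $H_1$ and $H_2$ are partitioned relative to $s$ and $t$. This couples the events ``$x$ is separated from its nearest pole in $H_1$'' and ``$y$ is separated from the same pole in $H_2$'', so that for nearby $x,y$ the joint separation probability remains $O(d_G(x,y)/r)$. The interior partitions of $H_1$ and $H_2$ are then sampled conditionally from the inductive hypothesis, using the tree result at the base of the recursion when a subgraph has been simplified to a path-like structure.

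The main obstacle is this parallel step: a separation between $x\in H_1$ and $y\in H_2$ can arise either from a cut inside one of the branches or from the interface decision, and the analysis must establish that the coupling yields a total separation probability of $O(d_G(x,y)/r)$ without the error compounding over the levels of $T$. The transitive closure used in the series case can also enlarge quasipartition radii; controlling this blowup by an absolute constant, and then rescaling $r$ accordingly at the end, is what ultimately yields a genuine $O(1)$-Lipschitz, $r$-bounded distribution.
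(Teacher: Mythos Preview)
Your proposal identifies the right high-level structure (series-parallel decomposition) and even names the key obstacle, but it does not actually overcome it, and the naive inductive scheme you describe would not yield an $O(1)$ constant.

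Concretely, at a series node you form $Q=\overline{Q_1\cup Q_2}$ and note that the radius at worst doubles; you then say the blowup ``can be absorbed by rescaling $r$''. But the SP-decomposition tree $T$ can have depth $\Theta(n)$, and each series node contributes a factor of $2$ to the radius. Rescaling $r$ by the product of these factors turns the Lipschitz constant into something exponential in $\mathrm{depth}(T)$. The same compounding afflicts the parallel step: you describe a ``shared random scale $\rho$'' and a conditional sample from the inductive hypothesis, but give no mechanism that prevents the Lipschitz loss at each parallel node from multiplying over the levels of $T$. In both cases you correctly flag the obstacle but offer no device that makes the total loss collapse to a constant independent of $T$.

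The paper's proof avoids this by \emph{not} working inductively on the SP tree at all. It first embeds into a ``tree of hexagons'', then normalizes so that every child path is either \emph{tight} ($\len(\Gamma)=w(e)$) or \emph{slack} ($\len(\Gamma)\ge 2w(e)$), at the cost of a single factor~$2$. The algorithm then picks one global offset $z\in[0,r]$ and cuts edges according to distances from a fixed root and along certain ``complementary'' paths; crucially, cuts are then \emph{propagated} down child paths (Steps~4 and~7). The analysis of the propagation step is what replaces your inductive compounding: for a slack child the propagation probability picks up a factor $\le 1/2$, so summing over all ancestor levels gives a geometric series that converges to a constant (Lemma~3.11). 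This global-offset-plus-geometric-decay idea (adapted from the undirected GNRS argument, with additional machinery---complementary and flattened-complementary paths---to handle the two directions) is exactly the missing ingredient in your plan.
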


\begin{theorem}\label{thm:boundedpathwidth}
Let $G$ be a directed graph of of pathwidth $k$.
Let $M = (V(G),d_G)$ denote the shortest-path quasimetric space of $G$. Then for all $\Delta > 0$, there exists some $2^{O(k^2)}$-Lipschitz distribution over $\Delta$-bounded quasipartitions of $M$.
\end{theorem}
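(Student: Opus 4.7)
The plan is to induct on $k$, establishing a recurrence $\beta_k \leq 2^{O(k)} \beta_{k-1}$ on the Lipschitz constant. The base case $k = 0$ is immediate: a pathwidth-$0$ graph has no edges, so distances between distinct vertices are $+\infty$ and the identity quasipartition $\{(v,v) : v \in V(G)\}$ is vacuously Lipschitz. The recurrence then telescopes to $\beta_k = 2^{O(k^2)}$, matching the target bound.

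For the inductive step, I would fix a path decomposition $(X_1, \ldots, X_m)$ of $G$ with $|X_i| \leq k+1$ and select a set of \emph{portals} $P \subseteq V(G)$ by choosing one vertex from each bag (with the same vertex reused across consecutive bags when possible). Then the residual graph $G \setminus P$ admits the path decomposition $(X_i \setminus P)_i$ with bag size at most $k$, hence pathwidth at most $k-1$. The random quasipartition $Q$ is built in two layers: an upper layer of portal captures and a lower layer given by recursing on $G \setminus P$.

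Concretely, for each portal $p \in P$ I would draw independent random radii $r_p^+, r_p^- \in [0, \Delta/2]$ from an appropriate distribution, and form the capture relation $R = \{(p, v) : d_G(p, v) \leq r_p^+\} \cup \{(v, p) : d_G(v, p) \leq r_p^-\}$. Applying the inductive hypothesis to $G \setminus P$ at scale $\Delta/2$ yields a $\beta_{k-1}$-Lipschitz quasipartition $Q'$, and the final $Q$ is the transitive closure of $R \cup Q'$. The radius choice $\Delta/2$ ensures that any composition of two such relations stays within $\Delta$, giving $\Delta$-boundedness. For the Lipschitz bound, fix $x, y$ with $d_G(x, y) = \ell$: then $(x, y) \in Q$ provided either (i) some portal $p$ lying on a shortest $x$-$y$ path has balls capturing $x$ backward and $y$ forward, which fails with probability $O(\ell/\Delta)$ per portal by standard ball-growing, or (ii) $x, y \in V(G) \setminus P$ and $(x, y) \in Q'$, which fails with probability $O(\beta_{k-1} \ell / \Delta)$ by induction. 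Combining the $O(k)$ relevant portals with the recursive contribution gives $\beta_k \leq 2^{O(k)} \beta_{k-1}$.

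The main obstacle I anticipate is controlling $\Delta$-boundedness under the transitive closure of $R \cup Q'$ without inflating the Lipschitz constant: chains like $(x, p) \in R$, $(p, z) \in Q'$, $(z, p') \in R$ could in principle create long-distance pairs, so the radius $\Delta/2$ and the recursive scale must be calibrated so that any composable chain has effective length at most two or three. A secondary difficulty is arguing that the union bound depends only on the $O(k)$ portals in the $O(1)$ bags lying along a shortest $x$-$y$ path---not on the total number of portals, which could be as large as $m$---by exploiting the linear structure of the path decomposition; this is the sharper analogue of the treewidth-$t$ construction in \cite{memoli2016quasimetric}, and is what should eliminate the $\log n$ factor present in that setting.
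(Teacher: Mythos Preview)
Your approach is structurally quite different from the paper's: there is no recursion on pathwidth and no portals. Instead, the paper embeds $G$ isometrically into a ``path of $(k{+}1)$-cliques'' $B_1,\ldots,B_l$ and then performs $k^2$ rounds of global distance-based cutting from sources in $B_1$ (and another $k^2$ rounds from sources in $B_l$), deleting the horizontal edges of one shortest cross path after each round. The Lipschitz constant coming out of this is only $O(k^2)$ (Lemma~\ref{lem:probabilitybound}); the entire $2^{O(k^2)}$ blow-up lives in the \emph{boundedness} argument (Lemma~\ref{lem:inductive_step}), which is an induction on the width of surviving paths and the round indices, after which one rescales $r=\Delta/2^{\alpha k^2}$.

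Your proposal has a genuine gap exactly where you flag the ``main obstacle'': the transitive closure of $R\cup Q'$ is not $O(\Delta)$-bounded, and halving the scale does not fix it. Nothing prevents chains of the form
\[
x_0 \xrightarrow{\,R\,} p_1 \xrightarrow{\,R\,} x_1 \xrightarrow{\,Q'\,} x_2 \xrightarrow{\,R\,} p_2 \xrightarrow{\,R\,} x_3 \xrightarrow{\,Q'\,} \cdots,
\]
where each link has length at most $\Delta/2$ but the number of links is unbounded: your portal set $P$ has one vertex per bag, hence $\Theta(m)$ portals, and in a directed graph a forward $\Delta/2$-ball around one portal can easily contain another portal (or a vertex that $Q'$ then carries near another portal). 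Shrinking the radius bounds each link, not the number of links. In the undirected setting this never arises because a Lipschitz \emph{partition} is an equivalence relation and closure is automatic; for quasipartitions it is the whole difficulty, and the paper's $k^2$-round scheme plus Lemma~\ref{lem:inductive_step} is engineered precisely to certify that every surviving directed path decomposes into $k^{O(k^2)}$ pieces of length at most $r$.

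A smaller point: in your Lipschitz analysis, the recursive bound in case~(ii) is $\beta_{k-1}\,d_{G\setminus P}(x,y)/(\Delta/2)$, which is only comparable to $\ell$ when some shortest $x$--$y$ path avoids $P$; you correctly fall back on case~(i) otherwise, but then a \emph{single} portal on that path already gives failure probability $O(\ell/\Delta)$, so the ``$O(k)$ relevant portals'' and the $2^{O(k)}$ factor in your recurrence have no evident source. The exponential in this theorem comes from boundedness, not from Lipschitzness.
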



\paragraph{Random embeddings.}
Before stating our embedding results, we first need to introduce some notations and definitions. Let $M=(X,d)$ and $M'=(X',d')$ be quasimetric spaces.
A mapping $f:X\to X'$ is called an \emph{embedding of distortion $c\geq 1$} if there exists some $\alpha>0$, such that for all $x,y\in X$, we have 
$d(x,y)\leq \alpha \cdot d'(f(x),f(y)) \leq c\cdot d(x,y)$. 
We say that $f$ is \emph{isometric} when $c=1$.
Let ${\cal D}$ be a distribution over pairs $(M',f)$, where $f:X\to X'$.
We say that ${\cal D}$ is a random embedding of distortion $c\geq 1$ if for all $x,y\in X$, the following conditions are satisfied:
\begin{description}
\item{(1)}
 $\Pr_{(M',f)\sim {\cal D}}[d'(f(x),f(y)) \geq d(x,y)] = 1$.
\item{(2)}
 $\mathbf{E}_{(M',f)\sim {\cal D}}[d'(f(x),f(y))]\leq c\cdot d(x,y)$.
\end{description}

\paragraph{Directed $\ell_1$ (Charikar \etal~ \cite{charikar2006directed})}
 The directed $\ell_1$ distance between two points $x$ and $y$ is given by $d_{\ell_1}(x,y) = \sum\limits_i |x_i -y_i| + \sum\limits_i |x_i| - \sum\limits_i |y_i| $. 

The following theorems are our main results for random embeddings.

\begin{theorem}\label{thm:cycle_ell1}
Let $G$ be a directed cycle and let $M=(V(G),d_G)$ be the shortest-path quasimetric space of $G$.
Then $M$ admits a constant-distortion embedding into directed $\ell_1$.
Moreover the embedding is computable in polynomial time.
\end{theorem}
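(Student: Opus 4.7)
The plan is to construct the embedding explicitly via a family of directed cut coordinates and verify the distortion via a case analysis. Label the vertices of $G$ in clockwise order as $v_0, v_1, \ldots, v_{n-1}$, with edges $e_i = v_i \to v_{(i+1) \bmod n}$ of weight $w_i > 0$ and total length $L = \sum_i w_i$, and write $\pi(u,v) = d_G(u,v)$ for the clockwise distance. Note that $\pi(u,v) + \pi(v,u) = L$ for every pair $u \ne v$.

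The main building block is an arc-cut family. For each ordered pair of distinct indices $(i,j)$, the clockwise arc $A_{i,j} = \{v_i, v_{i+1}, \ldots, v_{j-1}\}$ (indices mod $n$) defines a directed cut; I include a coordinate with value $\lambda_{i,j}\,\mathbf{1}[v \in A_{i,j}]$ and weight $\lambda_{i,j} \ge 0$. For a pair $(u,v) = (v_a, v_b)$ with $d := \pi(u,v)$, the arcs separating $u$ from $v$ (those with $u \in A_{i,j}$ and $v \notin A_{i,j}$) are exactly those whose incoming edge $e_{i-1}$ lies on the clockwise arc from $v_b$ to $v_a$ and whose outgoing edge $e_{j-1}$ lies on the clockwise arc from $v_a$ to $v_b$. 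The product weighting $\lambda_{i,j} = w_{i-1}\,w_{j-1}/L$ factorizes the sum over separating arcs as $\pi(v,u)\,\pi(u,v)/L = d(L-d)/L$, which is within a factor of two of $d$ whenever $d \le L/2$.

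Since the arc family alone collapses as $d$ approaches $L$, I would augment it with a second family of cut coordinates designed to contribute $\Omega(d)$ precisely in the regime $d > L/2$, where $u$ and $v$ lie on a short counterclockwise path. A natural candidate is to include, for each vertex $v_k$ and each length $m \ge 1$, a coordinate indexed by the cw arc $S^m_k = \{v_k, v_{k+1}, \ldots, v_{k+m-1}\}$ with a weight $\mu^m_k$ tied to the weight of the outgoing edge $e_{k+m-1}$ of the arc; appropriate choices of $\mu^m_k$ ensure that the total contribution to a pair $(u,v)$ scales with $\pi(u,v)$ in the long-distance regime while remaining $O(\pi(u,v))$ in the short-distance regime, breaking the inherent ``symmetry'' of the product weighting which makes arc cuts alone produce the same contribution to $(u,v)$ and $(v,u)$.

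The main obstacle is the simultaneous calibration of the two families so that the embedded directed distance is within a constant factor of $\pi(u,v)$ in every regime of $d/L$: the corrective family must not overshoot on pairs with $d \ll L/2$, and the arc family must not be overwhelmed in the intermediate regime. This is handled by splitting into the regimes $d \le L/2$ and $d > L/2$ and establishing matching upper and lower bounds in each, tuning the multiplicative constants of the two families to absorb the transition. Computability in polynomial time is immediate since every coordinate is given by an explicit formula in the edge weights $w_i$, and the final embedding lies in $\mathbb{R}^{O(n^2)}$.
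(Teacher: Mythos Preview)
Your setup misreads the model. In this paper a ``directed cycle'' is obtained from an undirected cycle by replacing every undirected edge $\{v_i,v_{i+1}\}$ with \emph{two} directed edges $(v_i,v_{i+1})$ and $(v_{i+1},v_i)$, each carrying its own weight; hence from any $u$ to any $v$ there are both a clockwise path $\overrightarrow{p}(u,v)$ and a counterclockwise path $\overleftarrow{p}(u,v)$, and $d_G(u,v)=\min\{\overrightarrow{d}(u,v),\overleftarrow{d}(u,v)\}$. Your construction instead treats a unidirectional cycle: only clockwise edges $e_i$, $d_G(u,v)=\pi(u,v)$ equal to the clockwise distance, and the identity $\pi(u,v)+\pi(v,u)=L$. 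That identity fails in the bidirectional model (where in general $d_G(u,v)+d_G(v,u)\neq L$), and so does the whole factorization $d(L-d)/L$ on which your arc-cut family rests. The hard case in the paper is precisely the one your model cannot express: the ``short'' pairs $(u,v)$ for which \emph{both} $\overrightarrow{d}(u,v)$ and $\overleftarrow{d}(u,v)$ are small compared to the diameter. The paper shows these pairs have special structure (Lemma~\ref{lem:cycle_structure}: all sources lie on one arc $A$, all sinks on another arc $B$) and builds the embedding around it via a random quasipartition (Steps 5--12 of the algorithm) that is then converted to directed cut metrics. None of this is addressed by your arc-cut construction.

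Even if one restricts to the unidirectional cycle you describe, the proposal is incomplete: the corrective family is only sketched (``a natural candidate is \ldots'', ``appropriate choices of $\mu^m_k$ ensure that \ldots'') without actually specifying the weights $\mu^m_k$ or verifying the claimed two-sided bounds in either regime. The asserted difficulty --- breaking the symmetry so that the second family contributes $\Omega(d)$ when $d>L/2$ while staying $O(d)$ when $d\le L/2$ --- is exactly the point that needs a concrete construction and a proof, and it is absent.
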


\begin{theorem}\label{thm:treel1}
Let $G=(V,E)$ be a directed tree, and let $X=(V,d)$ be the quasimetric induced by $G$. Then $X$ embeds into directed $\ell_1$.
\end{theorem}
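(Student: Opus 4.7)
The plan is to exhibit an explicit isometric embedding of $X$ into (extended) directed $\ell_1$ using two coordinates per edge of $G$. First, note that the per-coordinate contribution of directed $\ell_1$ between non-negative entries simplifies to $|x_i - y_i| + |x_i| - |y_i| = 2(x_i - y_i)^+$, where $(a)^+ = \max(a,0)$; this formula extends unambiguously to entries in $[0,+\infty]$ and will be the most convenient form for the verification.

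For each directed edge $e=(u,v)\in E$ of weight $w_e$, removing $e$ from the underlying undirected tree yields two connected components; write $S_e$ for the one containing the tail $u$ and $T_e$ for the one containing the head $v$. Define $f:V\to [0,+\infty]^{2|E|}$ by assigning each $e$ a pair of coordinates $(a_e, b_e)$: set $f(x)_{a_e} = w_e/2$ if $x\in S_e$ and $0$ otherwise, and $f(x)_{b_e} = +\infty$ if $x\in T_e$ and $0$ otherwise. Intuitively the $a_e$ coordinates are a directed analogue of the classical cut-metric embedding of an undirected tree into $\ell_1$, charging only the tail-to-head traversal of $e$, while the $b_e$ coordinates inject an infinite penalty whenever the edge $e$ would have to be traversed head-to-tail.

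To verify isometry, fix $x,y \in V$ and let $P$ denote the unique path between them in the underlying undirected tree. If $d_G(x,y)<+\infty$, every edge $e\in E(P)$ is traversed from tail to head, so $x\in S_e$ and $y\in T_e$: coordinate $a_e$ contributes $2(w_e/2 - 0)^+ = w_e$ and coordinate $b_e$ contributes $2(0 - \infty)^+ = 0$; edges $e\notin E(P)$ place $x$ and $y$ on the same side of $e$ and contribute $0$ in both coordinates. Summing gives $d_{\ell_1}(f(x),f(y)) = \sum_{e\in E(P)} w_e = d_G(x,y)$. If instead $d_G(x,y)=+\infty$, some edge $e^*\in E(P)$ must be traversed against its orientation, placing $x\in T_{e^*}$ and $y\in S_{e^*}$; then the coordinate $b_{e^*}$ contributes $2(\infty - 0)^+ = +\infty$, so $d_{\ell_1}(f(x),f(y))=+\infty = d_G(x,y)$.

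The one genuine subtlety is the handling of $+\infty$ entries: the per-coordinate formulation $\sum_i 2(x_i-y_i)^+$ is unambiguous on $[0,+\infty]^N$, whereas the sum-of-norms form $\sum|x_i|-\sum|y_i|$ can exhibit an indeterminate $\infty-\infty$ for pairs whose underlying path mixes both orientations, so I would state the equivalence up front and work with the per-coordinate version throughout. If one prefers to remain strictly inside finite-valued directed $\ell_1$, replacing $+\infty$ by any $M > \sum_{e\in E} w_e$ yields a constant-distortion embedding, which is all that is required for the flow-cut application in Theorem~\ref{thm:nonuniformgap}.
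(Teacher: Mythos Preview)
Your construction is built on the wrong model. In this paper a \emph{directed tree} is obtained from an undirected tree by replacing each undirected edge $\{u,v\}$ with \emph{both} directed edges $(u,v)$ and $(v,u)$, each carrying its own (finite) weight; see the preliminaries section. Consequently the induced quasimetric $d$ takes only finite values, and the whole discussion of $+\infty$ distances and $+\infty$ coordinates is addressing a problem that does not arise. Worse, in the paper's model your $b_e$ coordinates actively destroy the embedding: if $\{u,v\}$ lies on the path from $x$ to $y$ with $x$ on the $u$-side, then the reverse edge $e'=(v,u)$ has $x\in T_{e'}$ and $y\in S_{e'}$, so $f(x)_{b_{e'}}=+\infty$ and $f(y)_{b_{e'}}=0$, contributing $2(\infty-0)^+=+\infty$ even though $d(x,y)$ is finite. (Even under your single-orientation reading there is a gap: for any off-path edge $e$ with both $x,y\in T_e$ you get $f(x)_{b_e}=f(y)_{b_e}=+\infty$, and $(\infty-\infty)^+$ is not ``unambiguous'' as you claim.)

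The good news is that your $a_e$ coordinates, taken over all directed edges and with the $b_e$ coordinates simply deleted, already give the isometric embedding in the paper's model: an undirected edge $\{u,v\}$ on the $x$--$y$ path contributes $w_{(u,v)}$ from $a_{(u,v)}$ and $0$ from $a_{(v,u)}$, exactly matching the weight of the traversed arc. This is precisely the paper's argument phrased deterministically: the paper samples a single directed edge $e$ with probability $w(e)/W$, removes it to obtain a directed cut metric, and observes that $\Pr[(x,y)\text{ separated}]=d(x,y)/W$; your $a_e$ coordinate is the corresponding cut indicator scaled by $w_e/2$. So once you adopt the correct definition of directed tree and drop the $b_e$ layer, your proof and the paper's coincide.
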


\paragraph{Limitations.}
We further discuss some limitations of methods that were developed for undirected graphs. Klein, Plotkin, and Rao in \cite{klein1993excluded} introduced the notion of random partitions for undirected graphs. In Section \ref{sec:limits}, we show that this method can not be used or generalized for the case of directed graphs. Furthermore, we complete our paper with a lower bound result that is stated in the following theorem.

\begin{theorem}\label{thm:lowerbound}
There exists a directed cycle $G = (V,E)$ such that any non-contracting random embedding of $G$ into directed trees has distortion $\Omega(n)$. 
\end{theorem}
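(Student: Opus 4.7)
The plan is to exploit the incompatibility between the cyclic structure of a directed cycle and the partial-order (DAG) structure of any directed tree. Take $G$ to be the directed cycle on $n$ vertices with $V(G)=\{v_0,\ldots,v_{n-1}\}$ and arcs $v_i\to v_{(i+1)\bmod n}$; its shortest-path quasimetric satisfies $d_G(v_i,v_{(i+1)\bmod n})=1$ for every $i$.

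First I would prove a structural lemma for directed trees: in any directed tree $T$, if $u_0,u_1,\ldots,u_{m-1}$ are vertices with $d_T(u_i,u_{(i+1)\bmod m})<\infty$ for every $i$, then $u_0=u_1=\cdots=u_{m-1}$. This is routine: each finite-distance pair gives a directed path in $T$, so $u_0\preceq_T u_1\preceq_T \cdots \preceq_T u_{m-1}\preceq_T u_0$ in the reachability relation, and antisymmetry of this partial order forces equality.

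Next I would apply this lemma to a random sample $(T,f)\sim\mathcal{D}$, where $\mathcal{D}$ is any non-contracting random embedding of $G$ into directed trees of distortion $c$. Setting $X_i=d_T(f(v_i),f(v_{(i+1)\bmod n}))$, non-contraction gives $X_i\geq 1$, and the lemma applied to $f(v_0),\ldots,f(v_{n-1})$ forces at least one $X_i=\infty$ in every realization (otherwise $f$ would be constant on $V(G)$, contradicting $X_i\geq 1$). Thus $\sum_i X_i=\infty$ almost surely, and Tonelli's theorem then yields
\[
\sum_{i=0}^{n-1}\mathbf{E}[X_i]\;=\;\mathbf{E}\!\left[\sum_{i=0}^{n-1}X_i\right]\;=\;\infty.
\]
Consequently some $\mathbf{E}[X_{i^{\ast}}]=\infty$, and the distortion bound $\mathbf{E}[X_{i^{\ast}}]\leq c\cdot d_G(v_{i^{\ast}},v_{(i^{\ast}+1)\bmod n})=c$ forces $c=\infty$, which in particular is $\Omega(n)$.

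The main conceptual obstacle is noticing that the cyclic closure requirement in $G$ is incompatible with the partial-order nature of reachability in any DAG; once this is observed, the rest is essentially bookkeeping. The only subtlety is correctly handling the $\infty$-valued entries of the tree quasimetric: they are compatible with the non-contracting condition but automatically blow up the expected distance, because at least one consecutive cycle arc must carry infinite $d_T$. This is in sharp contrast with the undirected setting, where a cycle embeds into a random tree metric with only $O(\log n)$ distortion.
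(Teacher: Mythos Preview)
Your argument rests on a misreading of the definition of ``directed tree'' used in this paper. A directed tree here is obtained from an undirected tree by replacing each undirected edge $\{u,v\}$ by \emph{both} directed edges $(u,v)$ and $(v,u)$, possibly with different weights; see the Notation section. In particular, a directed tree is \emph{not} a DAG: every vertex is reachable from every other vertex, all pairwise distances are finite, and the reachability relation is the full relation rather than a partial order. Your structural lemma (``if $d_T(u_i,u_{(i+1)\bmod m})<\infty$ for all $i$ then all $u_i$ coincide'') is therefore false in this setting, and the entire argument collapses: no $X_i$ is forced to be infinite, and the expected sum $\sum_i \mathbf{E}[X_i]$ is perfectly finite for any fixed tree embedding.

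The paper's proof uses a genuinely quantitative argument. Via a minimax/averaging step it suffices to lower bound $\sum_{i=0}^{n-1} d_H(v_i,v_{(i+1)\bmod n})$ for any single non-contracting tree embedding $H$. One observes that this sum is the length of a closed walk in $H$ visiting all leaves of the underlying tree, hence it traverses every directed edge of $H$ at least once. For each underlying tree edge $\{u,v\}$ the non-contraction hypothesis together with the identity $d_G(u,v)+d_G(v,u)=n$ in the cycle forces $w_H(u,v)+w_H(v,u)\geq n$. Summing over the $n-1$ underlying edges gives a lower bound of $n(n-1)$ on the walk length, and dividing by $n$ yields the $\Omega(n)$ distortion. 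The $\Omega(n)$ here is a genuine finite lower bound, not an artifact of infinite distances.
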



\subsection{Organization}
In Sections \ref{sec:treewidth} and \ref{sec:pathwidth}, we provide efficient algorithms for computing random quasipartitions for directed graphs of treewidth 2 and bounded pathwidth graphs respectively. In Section \ref{sec:cyclel1}, we describe an algorithm for computing an $O(1)$-distortion embedding of the directed cycles into directed $\ell_1$. In Section \ref{sec:treel1}, we provide an algorithm for embedding directed trees into directed $\ell_1$ with distortion one. In Section \ref{sec:application} we discuss the applications to directed cut problems.
In Section \ref{sec:limits} we discuss the limitations of random partitions for the directed case, and finally in Section \ref{sec:lowerbound} we provide a lower for non-contracting embeddings of directed cycle into directed trees.

\section{Notation and preliminaries}
We now introduce some notation that will be used throughout the paper.

\paragraph{Directed graphs and treewidth.}
From any undirected graph $G^\UN$ we can obtain a directed graph $G$ as follows.
We set $V(G) = V(G^\UN)$ and $E(G) = \{(u,v),(v,u) : \{u,v\} \in E(G^\UN) \}$; i.e.~for every undirected edge $\{u,v\}$ in $E(G^\UN)$ we add directed edges $(u,v)$ and $(v,u)$ to $E(G)$.
We refer to $G^\UN$ as the \emph{underlying undirected graph} of $G$.
We say that $G$ is a directed graph of treewidth $k$ if its underlying undirected graph has treewidth $k$, for some $k\geq 1$.
Similarly, we say that $G$ is a directed tree (resp.~directed cycle) if its underlying undirected graph is a tree (resp.~cycle).

\paragraph{Directed cut metrics and 0-1 quasimetrics (Charikar \etal~ \cite{charikar2006directed}) }

Given a set $X$ and a subset $S \subset X$, the corresponding \emph{directed cut metric} distance for any pair of elements $u,v \in X$ is given by,
\begin{align*}
d_S(x,y) &= \left\{\begin{array}{ll}
1 & \text{ if } x \in S, y \not\in S ;\\
0 & \text{ otherwise };
\end{array}\right.
\end{align*}
 
A \emph{0-1 Quasimetric space} is a pair $(X,d)$ where for all $u,v \in X$ we have that $d(u,v) = 0$ or $d(u,v) =1$ and for all $u,v,w \in X$ we have that $d(u,w) \leq d(u,v) + d(v,w)$.

\section{Lipschitz quasipartitions of treewidth-2 directed graphs}\label{sec:treewidth}

In this Section we provide a proof for Theorem \ref{thm:tw2}. Note that since all series-parallel graphs have treewidth at most $2$ this result automatically holds for any series-parallel graph. We present an efficient algorithm for computing a random quasipartition of a directed graph of treewidth 2.
We begin by describing some special type of graphs of treewidth 2, which we refer to as \emph{trees of hexagons}.
We show that any graph of treewidth 2 admits an isometric embedding into some tree of hexagons.
We then further show how to preprocess a tree of hexagons such that it can be inductively constructed via a sequence of either \emph{slack} or \emph{tight} paths similar to \cite{gupta2004cuts}.
Finally, we present the algorithm for computing the random quasipartition, and we analyze the correctness of the algorithm.

\subsection{Trees of hexagons}
Let $G$ be a directed graph of treewidth $2$. We can construct $G^\UN$ as follows.
Start with a single edge and sequentially perform the following operation. Pick an arbitrary existing edge $e = \{u,v\}$. Add a new vertex $x$ and edges $\{x,u\}$ and $\{x,v\}$.
Let $\Gamma = (u,x,v)$ be the added path.
We say that $e$ is the \emph{parent} of $\Gamma$. Finally, remove an arbitrary subset of edges. Now we may assume w.l.o.g.~that no edges are removed in the last step while constructing $G^\UN$. Suppose this is not the case then we can replace the removed edges with edges that have weight equal to the shortest path distance between the two end points. This will ensure that the induced shortest path quasimetric of $G$ remains the same. The parent relation induces a rooted tree decomposition $({\cal T}, {\cal X})$ of $G^\UN$ of width $2$, where each bubble induces a triangle in $G^\UN$. 

For any path $\Gamma = (u,x,v)$ of $G^\UN$ whose parent edge is $e = \{u,v\}$ we say that the directed edge $(u,v)$ is the parent of the directed path $(u,x,v)$ in $G$ and the directed edge $(v,u)$ is the parent of the directed path $(v,x,u)$ in $G$.

We construct a new graph $G'$ as follows. We start with $G$ and modify it in the following fashion. For all $B \in {\cal X}$ we consider the sub-graph $G[B]$ and proceed as follows. Let $u,v,w$ be the vertices of $B$. We duplicate each vertex of $B$ and add edges of weight $0$ in both directions between the two copies of a vertex (See Figure \ref{fig:Hexagons}). By doing so, every directed triangle in $G[B]$ corresponds to a directed hexagon in $G'$, where $u',u'',v',v'',w',w''$ are the vertices of the hexagon. The edges $e_1=(u,v)$, $e_2=(v,w)$, $e_3=(w,u)$ correspond to $e'_1=(u'',v')$, $e'_2=(v'',w')$, and $e'_3=(w'',u')$ respectively in $G'$ with the same weight. Similarly, the edges $e_4=(v,u)$, $e_5=(w,v)$, and $e_6=(u,w)$ correspond to $e'_4=(v',u'')$, $e'_5=(w',v'')$, and $e'_6=(u',w'')$ respectively in $G'$ with the same weight (See Figure \ref{fig:Hexagons}). Therefore, we get a new graph $G'$.

For any triangle $\{u,w,v\}$ in $G$ where the edge $e = \{u,v\}$ is the parent of the path $\Gamma = \{u,w,v\}$, let the corresponding hexagon in $G'$ be $\{u',u'',w',w'',v',v''\}$. We call the directed edge $e' = (u'',v')$ the \emph{parent} edge of the directed path $\Gamma' = \{u'',u',w'',w',v'',v'\}$ and of every edge in it. Similarly we call $e'' = (v',u'')$ the parent edge of the directed path $\Gamma'' = \{v',v'',w',w'',u',u''\}$ and every edge in it.
This parent relation induces a rooted tree decomposition of $G'^\UN$ , $({\cal T'}, {\cal X'})$ of width $5$, where each bubble $B \in {\cal X'}$ induces a hexagon in $G'^\UN$.  We say that $G'$ is a \emph{tree of hexagons}.

Let $e_1 = (u_1,v_1)$ be the parent edge of a path $\Gamma_1$. For any edge $e_0 = (u_0,v_0)$ in $\Gamma_1$ we define the \emph{tail} of $e_0$, $\tail(u_0,v_0)$, to be the subpath of $\Gamma_1$ from $u_1$ to $v_0$. 

The above discussion immediately implies the following.

\begin{lemma}[Embedding into a tree of hexagons]\label{lem:embedding_hexagon}
There exists a polynomial-time algorithm which given a directed graph $G$ of treewidth 2, computes an isometric embedding of $G$ into some tree of hexagons $G'$. 
\end{lemma}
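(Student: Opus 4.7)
The plan is to verify that the construction of $G'$ described above yields an isometric embedding of $G$ into the tree of hexagons $G'$, and that the whole procedure is polynomial-time. Define $f : V(G) \to V(G')$ by $f(v) = v'$ (choosing one of the two copies of $v$). To show $d_{G'}(f(u),f(v)) = d_G(u,v)$ for all $u,v \in V(G)$, I would establish two complementary local properties of the construction and then chain them along shortest paths.

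For the upper bound I would check a \emph{routing} property: every directed edge $(u,v) \in E(G)$ of weight $w$ can be simulated in $G'$ by a directed walk from $u'$ to $v'$ of total weight $w$. This follows because, inside the hexagon corresponding to any bag $B$ containing both $u$ and $v$, the edge $(u,v)$ maps by construction to a directed arc of weight $w$ whose tail is some copy of $u$ and whose head is some copy of $v$; the bidirectional zero-weight arcs between $u'$ and $u''$ (resp.\ between $v'$ and $v''$) then let us start at $u'$ and end at $v'$ at no extra cost. Chaining these walks along a shortest $u$-$v$ path in $G$ gives a walk in $G'$ of weight $d_G(u,v)$, hence $d_{G'}(f(u),f(v)) \le d_G(u,v)$.

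For the lower bound I would use the \emph{projection} $\pi : V(G') \to V(G)$ defined by $\pi(x') = \pi(x'') = x$. A direct case analysis of the hexagon construction shows that every directed arc of $G'$ is of one of two types: either a zero-weight arc between two copies of the same vertex of $G$ (killed by $\pi$), or a nonzero arc whose endpoints are copies of distinct vertices $a,b$, in which case $(a,b) \in E(G)$ with the same weight. Applying $\pi$ pointwise to any directed path in $G'$ from $u'$ to $v'$ therefore yields a directed walk from $u$ to $v$ in $G$ of weight at most the original, so $d_G(u,v) \le d_{G'}(f(u),f(v))$. Polynomial running time is immediate: a width-$2$ tree decomposition of $G^{\UN}$ is constructible in polynomial time and $G'$ has size $O(|V(G)| + |E(G)|)$.

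The main thing to verify carefully is the routing property, since the hexagon attached to each bag depends on a cyclic orientation of the underlying triangle, and two bags sharing a vertex $v$ may use $v'$ and $v''$ in different roles (``head-side'' vs.\ ``tail-side'') with respect to the arcs incident to $v$. The bidirectional zero-weight arcs between $v'$ and $v''$ are included precisely to absorb this ambiguity, and the key step is checking that they always suffice to bridge two adjacent hexagons at a shared vertex regardless of how their orientations are chosen. A minor technicality worth noting is that, strictly speaking, the shortest-path distance on $V(G')$ is only a pseudo-quasimetric (two distinct copies of the same vertex are at distance $0$); this is harmless for the lemma since the restriction of $d_{G'}$ to $f(V(G))$ is a genuine quasimetric isometric to $(V(G), d_G)$, but if desired one can quotient $G'$ by the zero-distance relation to obtain an honest quasimetric.
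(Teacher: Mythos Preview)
Your proposal is correct and follows the same approach as the paper: the paper's proof consists solely of the sentence ``The above discussion immediately implies the following,'' referring to the explicit construction of $G'$ given just before the lemma. Your routing/projection argument is exactly the right way to unpack that sentence, and in fact supplies more detail than the paper does; the technical point you flag about the zero-weight arcs bridging copies across adjacent hexagons is precisely what makes the construction work, and your observation that $d_{G'}$ is only a pseudo-quasimetric on $V(G')$ is accurate but, as you note, irrelevant for the embedding claim.
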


\begin{figure}
\begin{center}
\scalebox{0.30}{\includegraphics{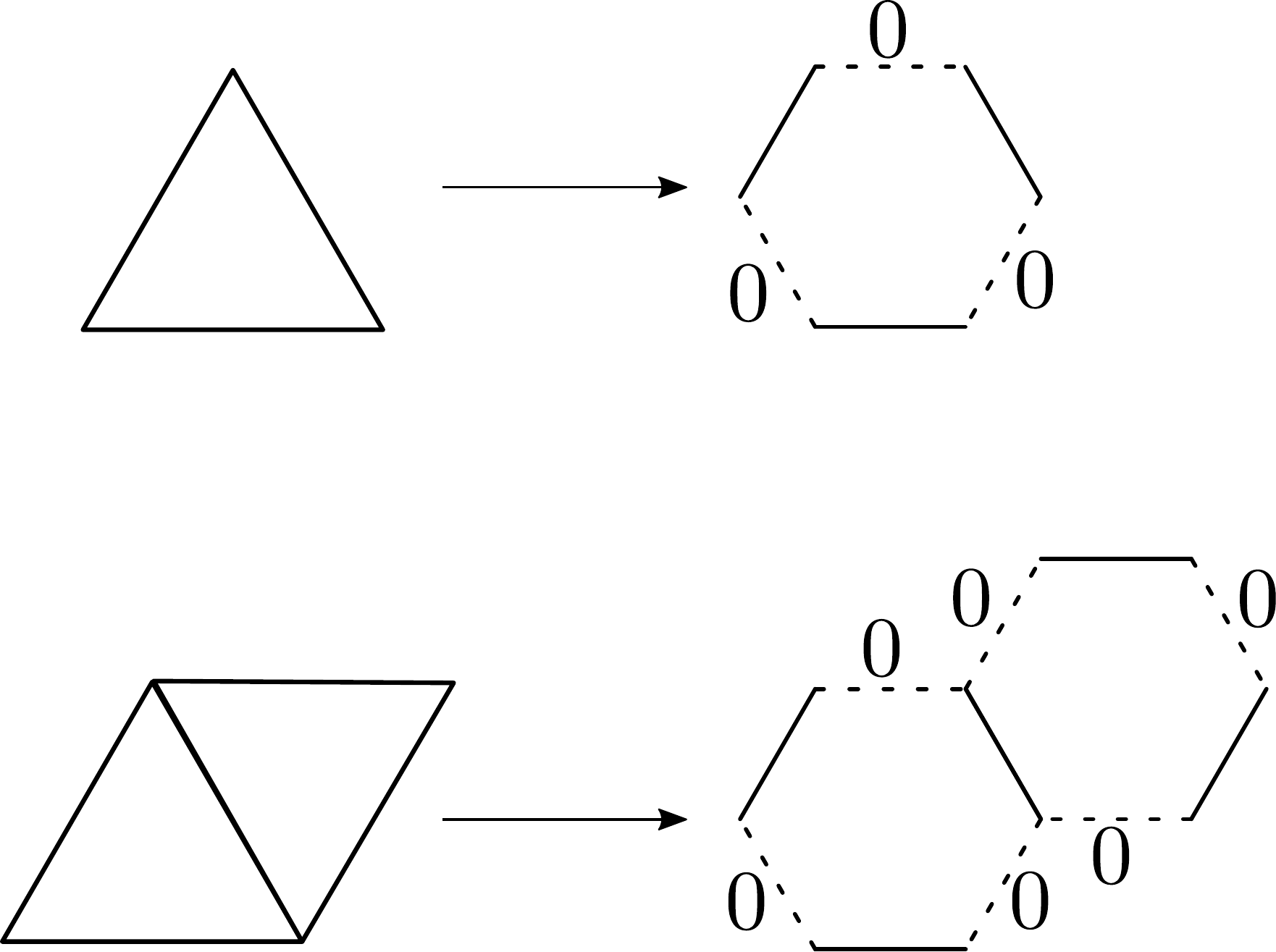}}
\caption{Hexagons}
\label{fig:Hexagons}
\end{center}
\end{figure}

\subsection{Slack and tight paths}
Let $G$ be the input graph, and let $w$ be a non-negative weight function on the edges of $G$. By Lemma \ref{lem:embedding_hexagon} we may assume w.l.o.g.~that $G$ is a tree of hexagons, and we have a tree decomposition of $G'^\UN$ , $({\cal T}, {\cal X})$ rooted at some $B^* \in {\cal X}$. 
For any two vertices $x,y \in V(G)$, we pick a unique shortest path from $x$ to $y$ denoted by $p_{xy}$ to use in our algorithm. We always pick $p_{xy}$ to be a shortest path with the fewest number of edges. If there are multiple such paths we pick one maintaining the condition that the intersection of any two shortest paths is a (possiply empty) path. For a path $P$, let $\len(P)$ denote the length of $P$. Let $e \in E(G)$. For any child path $\Gamma$ of $e$, we say that $\Gamma$ is \emph{slack} if $\len(\Gamma) \geq 2w(e)$, and we say that $\Gamma$ is \emph{tight} if $\len(\Gamma) = w(e)$.

Let $x \in B^*$ be an arbitrary vertex. Let $l \colon {\cal X} \to \mathbb{Z}_{\geq 0}$ be a level function where $l(B^*) = 0$, and for any other $B \in {\cal X}$, $l(B)$ denotes the length of the shortest path from $B^*$ to $B$ in ${\cal T}$. 
Let $B_1, B_2, \ldots, B_k$ be the leaf vertices of ${\cal T}$. For every $i \in {1,2,\ldots,k}$, let $l_i \in B_i$ be an arbitrary vertex in $G$. For every $i \in {1,2,\ldots,k}$, let $P_i$ be the unique shortest paths in ${\cal T}$ from $B^*$ to $B_i$.
 Let $p_i = p_{xl_i}$ and let $q_i = p_{l_{i}x}$. For every $p_i$, we define the \emph{complementary} path $\hat{p_i}$ as follows. Let $P_i$ be the path of hexagons corresponding to $p_i$. Let $P'_i$ be the subgraph of $P_i$ obtained by deleting all the edges of $p_i$, i.e. $P'_i = P_i \setminus p_i$. We set $\hat{p_i}$ to be the unique path from $x$ to $l_i$ in $P'_i$ (See Figure \ref{fig:complementary}). For every $p_i$, we define the \emph{flattened complementary} graph $\overline{p_i}$ as follows. Start with $\overline{p_i}=\hat{p_i}$, and for every tight path $\Gamma$ with a parent edge $e \in \overline{p_i}$ add $\Gamma$ to $\overline{p_i}$ and repeat until we don't add any new paths. We can similarly define the \emph{complementary} path $\hat{q_i}$ and the \emph{flattened complementary} graph $\overline{q_i}$ for every $q_i$.

\begin{figure}
\begin{center}
\scalebox{0.2}{\includegraphics{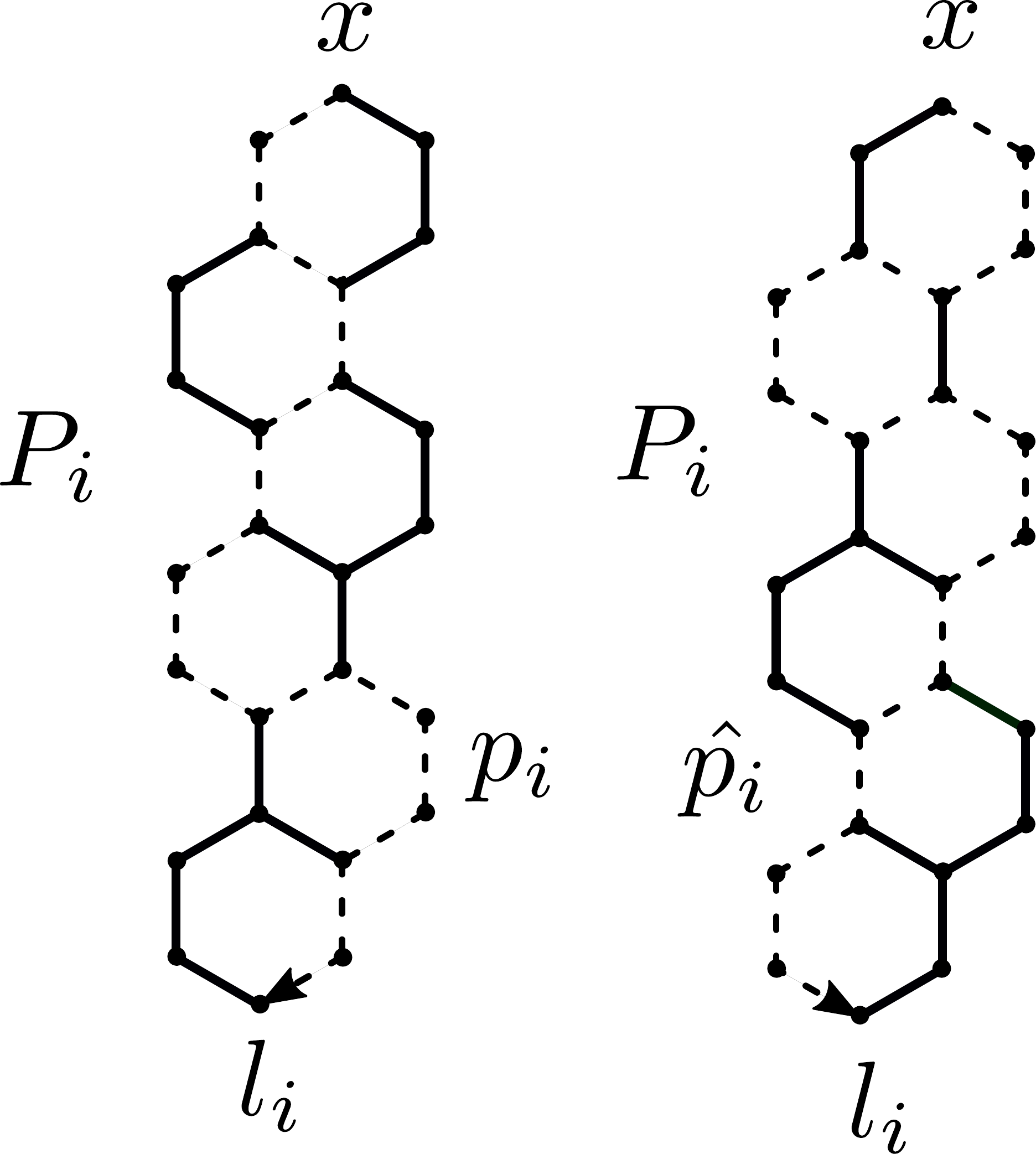}}
\caption{Complementary path}
\label{fig:complementary}
\end{center}
\end{figure}

Let $P$ be a path in $G$.
We say that $P$ is \emph{down-monotone} if when traversing $P$ we visit the bubbles of ${\cal T}$ in non-decreasing distance from the root of ${\cal T}$.
Similarly, we say that $P$ is \emph{up-monotone} if when traversing $P$ we visit the bubbles of ${\cal T}$ in non-increasing distance from the root of ${\cal T}$.


We say that some tree of hexagons $G$ is \emph{canonical} if for all $e\in E(G)$, every child of $e$ is either tight or slack.
We first show that any directed graph of treewidth 2 admits a constant-distortion embedding into a canonical directed tree of hexagons.
This allows us to focus on canonical graphs.

\begin{lemma}[Embedding into a canonical graph]
Given a directed tree of hexagons $G$, we can compute in polynomial time some embedding of $G$ into some canonical tree of hexagons $G'$, with distortion at most $2$.
\end{lemma}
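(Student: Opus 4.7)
The plan is to construct $G'$ with the same vertex set, edge set, and hexagon structure as $G$, modifying only the edge weights; the embedding $f$ is then the identity on vertices. I will maintain the invariant that $w(e)/2 \le w'(e) \le w(e)$ for every edge $e$, which immediately yields distortion at most $2$: for any path $P$, its length under $w'$ lies in $[\len(P)/2,\,\len(P)]$, so $d_{G'}(u,v) \in [d_G(u,v)/2,\,d_G(u,v)]$ for every pair $u,v$, i.e., the identity is a distortion-$2$ embedding with scaling constant $\alpha=2$.

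To define $w'$, I perform a top-down (BFS) pass over the tree of hexagons. For the two parent edges of the root hexagon, initialize $w'(e) = w(e)$. When a hexagon $B$ is processed, its parent edges already carry $w'$ values assigned by the processing of its parent hexagon (or by the root initialization). For each of the two child paths $\Gamma$ of $B$, with parent edge $e_p$ and two non-zero edges $e_a, e_b$, let $\ell = w(e_a)+w(e_b)$. If $\ell \ge 2w'(e_p)$ (the \emph{slack case}), set $w'(e_a) = w(e_a)$ and $w'(e_b) = w(e_b)$, so that $\Gamma$ becomes slack under $w'$. Otherwise (the \emph{tight case}) set
\[
w'(e_a) = w(e_a)\cdot\frac{w'(e_p)}{\ell},\qquad w'(e_b) = w(e_b)\cdot\frac{w'(e_p)}{\ell},
\]
so that $w'(e_a)+w'(e_b) = w'(e_p)$ and $\Gamma$ becomes tight.

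The main point to verify is the inductive maintenance of the weight invariant, and here the key inequality is $w'(e_p) \le w(e_p) \le \ell$. The first part is the invariant at the parent, and the second is the triangle inequality in $G$, which holds because the preprocessing described in the excerpt forces each edge weight of $G$ to equal the corresponding shortest-path distance. In the tight case we additionally have $w'(e_p) > \ell/2$, so the scale factor $w'(e_p)/\ell$ lies in $(1/2,1]$, giving $w'(e_a)\in(w(e_a)/2,w(e_a)]$ and analogously for $e_b$; the slack case preserves the original weights and is trivially consistent with the invariant. A potential subtlety, though not a real obstacle, is that a single edge $e$ may serve as the parent of several child paths lying in distinct child hexagons; these are processed independently, since the feasibility bound $w'(e)\le \ell$ applies separately to each, and the non-zero edges of distinct child paths of $e$ are disjoint.

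Consequently, after the pass every child path in $G'$ is either tight or slack, so $G'$ is canonical by construction, the invariant $w(e)/2 \le w'(e)\le w(e)$ holds on every edge, and the algorithm runs in polynomial time since each of the $O(|V(G)|)$ hexagons is processed in constant time.
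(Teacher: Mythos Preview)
Your proof is correct and takes essentially the same approach as the paper: a top-down pass over the tree of hexagons that leaves already-slack child paths unchanged and rescales each remaining child path $\Gamma$ by $w'(e_p)/\len(\Gamma)$ to make it tight, with the scale factor lying in $(1/2,1]$. Your treatment is in fact more careful than the paper's, since you explicitly state and verify the invariant $w(e)/2 \le w'(e) \le w(e)$ via the chain $w'(e_p)\le w(e_p)\le \ell$, whereas the paper simply asserts that ``the length of each edge changes by at most a factor of~2.''
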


\begin{proof}
The algorithm proceeds by inductively modifying the graph $G$.
We intially mark all edges as \emph{unresolved}.
We mark all edges with no parent as \emph{resolved}.
While there are unresolved edges, we pick some unresolved edge $e$, whose parent $e'$ is resolved, and let $\Gamma$ be the child path of $e'$ that contains $e$.
If $\Gamma$ is neither slack nor tight then for all $e\in E(\Gamma)$, we set $w(e)=w(e) \cdot w(e')/\len(\Gamma)$.
We all mark edges in $\Gamma$ as resolved.
We set $G'$ be the graph obtained at the end of this inductive process.
It is immediate the $G'$ is canonical.
At each iteration the number of unresolved edges decreases by at least one, so the algorithm terminates in polynomial time.
By the definition of tight and slack paths, it follows that the length of each edge changes by at most a factor of 2.
Thus the distortion of the induced embedding is at most 2, which concludes proof.
\end{proof}

\subsection{Computing a random quasipartition}

The algorithm for computing a random quasipartition is as follows.
The input consists of some directed tree of hexagons $G$, a non-negative weight function $w$ on the edges of $G$, and some $r>0$.
The output is a random $r$-bounded quasipartition of the shortest-path quasimetric space of $G$.


\begin{description}
\item{\textbf{Input:}}
A directed canonical tree of hexagons $G$, and a tree decomposition of $G^\UN$ , $({\cal T}, {\cal X})$ rooted at some $B^* \in {\cal X}$. and $r>0$.
\item{\textbf{Output:}}
Random quasipartition of the shortest-path quasimetric space of $G$, $(M,d_G)$.
\item{\textbf{Initialization.}}
Set $G^*=G$ and $Q=E(G)$.

\item{\textbf{Step 1.}} 
Let $x \in B^*$ be an arbitrary vertex. Pick $z \in [0,r]$ uniformly at random.

\item{\textbf{Step 2.}} For all $(u,v) \in E(G^*)$ remove $(u,v)$ from $Q$ if $d_G(x,v)> i\cdot r + z$ and $d_G(x,u) \leq i\cdot r + z$ for some integer $i \geq 0$.
\item{\textbf{Step 3.}} For all $(u,v) \in E(G^*)$ remove $(u,v)$ from $Q$ if $d_{\overline{{p_j}}}(x,v)> i\cdot r + z$ and $d_{\overline{{p_j}}}(x,u) \leq i\cdot r + z$ for some integer $i \geq 0$ and some integer $j \in \{0,\ldots,k \} $.

\item{\textbf{Step 4.}} 
For all $(u,v) \in E(G^*)$ that are removed from $Q$ in Step 3 do the following:

\begin{description}
\item{\textbf{Step 4.1.}} For each uncut child path $\Gamma = (u,w,q,o,p,v)$ of $(u,v)$ remove one of the edges $(u,w)$, $(w,q)$, $(q,o)$, $(o,p)$ or $(p,v)$ from $Q$, chosen randomly with probability $\frac{d(u,w)}{\len(\Gamma)}$, $\frac{d(w,q)}{\len(\Gamma)}$, $\frac{d(q,o)}{\len(\Gamma)}$, $\frac{d(o,p)}{\len(\Gamma)}$ and $\frac{d(p,v)}{\len(\Gamma)}$ respectively.

\item{\textbf{Step 4.2.}} Recursively perform Step 4.1 on the removed edge.
\end{description}

\item{\textbf{Step 5.}} For all $(u,v) \in E(G^*)$ remove $(u,v)$ from $Q$ if $d(v,x)\leq i\cdot r + z$ and $d(u,x) > i\cdot r + z$ for some integer $i \geq 0$.

\item{\textbf{Step 6.}} For all $(u,v) \in E(G^*)$ remove $(u,v)$ from $Q$ if $d_{\overline{{q_j}}}(v,x)\leq i\cdot r + z$ and $d_{\overline{{q_j}}}(u,x) > i\cdot r + z$ for some integer $i \geq 0$ and some $j\in \{0,\ldots,k\}$.

\item{\textbf{Step 7.}} 
For all $(u,v) \in E(G^*)$ that are removed from $Q$ in step 6 do the following:
\begin{description}
\item{\textbf{Step 7.1.}} For each uncut child path $\Gamma = (u,w,q,o,p,v)$ of $(u,v)$ remove one of the edges $(u,w)$, $(w,q)$, $(q,o)$, $(o,p)$ or $(p,v)$ from $Q$, chosen randomly with probability $\frac{d(u,w)}{\len(\Gamma)}$, $\frac{d(w,q)}{\len(\Gamma)}$, $\frac{d(q,o)}{\len(\Gamma)}$, $\frac{d(o,p)}{\len(\Gamma)}$ and $\frac{d(p,v)}{\len(\Gamma)}$ respectively.

\item{\textbf{Step 7.2.}} Recursively perform Step 7.1 on the removed edge.
\end{description}

\item{\textbf{Step 8.}} For any $(u,v) \in Q$, if $d(u,v) > \frac{r}{10}$, remove $(u,v)$ from $Q$.

\item{\textbf{Step 9.}} Enforce transitivity on $Q$; that is, for all $u,v,w \in V(G)$ if $(u,v) \in Q$ and $(v,w) \in Q$ then add $(u,w)$ to $Q$.
\end{description}

This concludes the description of the algorithm for computing a random quasipartition.

\paragraph{Analysis.}

We now analyze the performance of the above algorithm.
We begin by showing that the probability that an edge is removed from the quasipartition is small.
This statement is shown by considering separately all possible steps of the algorithm where an edge can be removed. 

\begin{lemma}\label{lem:step2}
For all $(u,v) \in E(G)$, we have $Pr[(u,v) \text{ is removed from $Q$ in Step 2}] \leq  \frac{d(u,v)}{r}$.
\end{lemma}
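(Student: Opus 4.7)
The plan is to reduce the event ``$(u,v)$ is removed in Step 2'' to a simple geometric event about where the random arithmetic progression of thresholds $\{ir + z : i \in \mathbb{Z}_{\geq 0}\}$ falls on the real line, and then to combine the quasimetric triangle inequality with the fact that $z$ is uniform on $[0,r]$.

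First, I would unpack the cut condition. The edge $(u,v)$ is removed in Step 2 iff there exists an integer $i \geq 0$ with $d_G(x,u) \leq ir + z < d_G(x,v)$; equivalently, iff at least one threshold $ir+z$ lies in the half-open interval $I_{u,v} := [d_G(x,u),\, d_G(x,v))$. If $d_G(x,v) \leq d_G(x,u)$ this interval is empty and no cut can occur, so assume $d_G(x,v) > d_G(x,u)$. The quasimetric triangle inequality then gives $d_G(x,v) \leq d_G(x,u) + d_G(u,v)$, so the length of $I_{u,v}$ is at most $d_G(u,v)$.

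Second, I would bound the probability that the shifted grid $\{ir + z : i \geq 0\}$ meets $I_{u,v}$. If $d_G(u,v) \geq r$ the claim is trivial since the probability is at most $1 \leq d_G(u,v)/r$, so assume $d_G(u,v) < r$, whence $|I_{u,v}| < r$. Consecutive thresholds are spaced exactly $r$ apart, so the grid meets any interval of length less than $r$ in at most one point, and the set of $z \in [0,r]$ for which this intersection is nonempty has Lebesgue measure exactly $|I_{u,v}|$ (obtained, if $I_{u,v}$ straddles a multiple of $r$, by folding it modulo $r$ into at most two subintervals whose lengths sum to $|I_{u,v}|$). Therefore
\[
\Pr\!\left[(u,v)\ \text{removed in Step 2}\right] \;=\; \frac{|I_{u,v}|}{r} \;\leq\; \frac{d_G(u,v)}{r}.
\]

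The argument is essentially routine once the event is rephrased in terms of the uniform grid; the only minor point to verify is that restricting to $i \geq 0$ is harmless, which holds because $d_G(x,u) \geq 0$ and $z \geq 0$ force every threshold that could possibly land in $I_{u,v}$ to have $i \geq 0$. I do not expect a genuine obstacle here: this lemma is the simplest of the per-step removal bounds, and the real work will be in the analogous arguments for Steps 3, 4, 6, and 7, where the distances are measured along the flattened complementary graphs $\overline{p_j}$, $\overline{q_j}$ and one must control the extra randomness introduced by the recursive resampling in Steps 4.1 and 7.1.
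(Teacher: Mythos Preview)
Your proof is correct and follows the same approach as the paper: rewrite the removal event as a random threshold $ir+z$ landing in the interval $[d_G(x,u),d_G(x,v))$, bound its length by $d_G(u,v)$ via the triangle inequality, and use uniformity of $z$ on $[0,r]$. The paper compresses all of this into a single line, whereas you have spelled out the edge cases ($d_G(x,v)\le d_G(x,u)$, $d_G(u,v)\ge r$, the mod-$r$ folding) explicitly; this extra care is fine but not strictly needed for the intended audience.
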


\begin{proof}
The edge $(u,v)$ is removed from $Q$ in Step 2 when $d(x,u) \leq  z + ir $ and $d(x,v) > z + ir$ for some integer $i\geq 0$.
By the triangle inequality this implies that $Pr[(u,v) \text{ is removed from $Q$ in Step 2}] \leq \frac{d(x,v) - d(x,u)}{r} \leq \frac{d(u,v)}{r}$, as required.
\end{proof}

\begin{lemma}\label{lem:step5}
For all $(u,v) \in E(G)$, we have $Pr[(u,v) \text{ is removed from $Q$ in step 5}] \leq  \frac{d(u,v)}{r}$.
\end{lemma}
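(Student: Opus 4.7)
The plan is to mirror the argument of Lemma \ref{lem:step2}, with the roles of source and target swapped, since Step 5 is the exact analogue of Step 2 but using the ``distance to $x$'' quantity $d(\cdot,x)$ in place of the ``distance from $x$'' quantity $d(x,\cdot)$. First I would unpack the triggering condition of Step 5: the edge $(u,v)$ is removed precisely when there exists an integer $i\geq 0$ with $d(v,x) \leq z + i\cdot r < d(u,x)$, i.e.\ the random threshold $z + i\cdot r$ lands in the half-open interval $(d(v,x),\, d(u,x)]$.

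Next I would apply the quasimetric triangle inequality in the correct direction. Since $(X,d)$ is a quasimetric, we have $d(u,x) \leq d(u,v) + d(v,x)$, and therefore the length of the interval $(d(v,x), d(u,x)]$ is at most $d(u,v)$. This is the only place where I have to be a little careful because of asymmetry: the bound would not follow from $d(v,x) \leq d(v,u)+d(u,x)$, but I only need the inequality in the direction I stated, which is exactly the quasimetric axiom.

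Finally, since $z$ is chosen uniformly in $[0,r]$, and the events $\{z + i\cdot r \in (d(v,x), d(u,x)]\}$ for distinct $i \geq 0$ are disjoint (the thresholds are spaced exactly $r$ apart, while the interval has length at most $d(u,v) \leq r/10$ by Step 8's regime of interest, and even without that observation each $z$ makes at most one $i$ applicable), the probability that such an $i$ exists is at most $(d(u,x)-d(v,x))/r \leq d(u,v)/r$. Combining these pieces gives the claimed bound. The only conceptual obstacle is making sure the triangle inequality is applied in the direction permitted by the quasimetric; once that is checked, the remainder is identical to the proof of Lemma \ref{lem:step2}.
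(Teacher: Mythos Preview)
Your proposal is correct and follows exactly the approach the paper indicates (the paper's proof is simply ``similar to the proof of Lemma~\ref{lem:step2}''). The only cosmetic slip is that the triggering condition reads $d(v,x)\leq z+ir<d(u,x)$, so the relevant interval is $[d(v,x),d(u,x))$ rather than $(d(v,x),d(u,x)]$; this does not affect the bound.
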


\begin{proof}
The proof of this case is similar to the proof of Lemma \ref{lem:step2}.
\end{proof}

\begin{lemma} \label{lem:spines}
Let $(u,v) \in E(G)$. Suppose that $(u,v) \in \hat{p_i}$ and $(u,v) \in  \hat{p_j}$ for some $i,j \in \{0,\ldots,k \}$, then $d_{\overline{p_i}}(x,u) = d_{\overline{p_j}}(x,u)$ and $d_{\overline{p_i}}(x,v) = d_{\overline{p_j}}(x,v)$. 
\end{lemma}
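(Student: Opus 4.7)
My plan is to reduce the statement to the level of the complementary paths $\hat{p_i}$, $\hat{p_j}$ themselves and then to argue that the flattening operation preserves the relevant distances. The key geometric observation is that the hexagon-tree ${\cal T}$ is a tree, so the position of the edge $(u,v)$ pins down a common initial segment of the two shortest paths $p_i$ and $p_j$.

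First I would locate the unique hexagon $B\in{\cal X}$ that contains the edge $(u,v)$. Since $(u,v)\in \hat{p_i}$, the hexagon $B$ lies on the hexagon-path $P_i$ from $B^*$ to $B_i$ in ${\cal T}$; likewise $B$ lies on $P_j$. Because ${\cal T}$ is a tree rooted at $B^*$, the hexagon-path from $B^*$ to $B$ is unique, so it is a common initial segment of both $P_i$ and $P_j$.

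Next I would argue that the directed paths $p_i$ and $p_j$ coincide on an initial segment that reaches beyond the edge $(u,v)$. Inside any hexagon $B'$ on the common hexagon-path, both $p_i$ and $p_j$ must enter and exit via the same pair of portal vertices (those shared with the neighbouring hexagons on the common path), and inside $B'$ they must follow the same route, since our fixed family of shortest paths has the property that the intersection of any two of them is a (possibly empty) path. The same reasoning applies inside the final hexagon $B$ itself: both $p_i$ and $p_j$ avoid the edge $(u,v)$ (because $(u,v)\in \hat{p_i}\cap\hat{p_j}$), and the path-intersection property again forces them to agree on the portion of $B$ opposite to $(u,v)$. It follows that $\hat{p_i}$ and $\hat{p_j}$ coincide as directed paths from $x$ through $(u,v)$, so $d_{\hat{p_i}}(x,u)=d_{\hat{p_j}}(x,u)$ and $d_{\hat{p_i}}(x,v)=d_{\hat{p_j}}(x,v)$.

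It remains to pass from $\hat{p_i}$ to $\overline{p_i}$. The flattened complementary graph $\overline{p_i}$ is obtained from $\hat{p_i}$ by iteratively adjoining tight children of edges. By the definition of a tight child, every adjoined path $\Gamma$ has the same two endpoints as its parent edge $e$ and $\len(\Gamma)=w(e)$; attaching $\Gamma$ therefore creates no shorter route between the endpoints of $e$, and hence, by induction on the flattening process, no shorter route between any two vertices already present in $\hat{p_i}$. Consequently $d_{\overline{p_i}}(x,u)=d_{\hat{p_i}}(x,u)$ (and similarly for the index $j$ and the vertex $v$), and combining with the previous paragraph yields the two claimed equalities. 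The main obstacle I anticipate is the coincidence argument in the second paragraph, particularly inside $B$ itself, which requires carefully combining the tree structure of ${\cal T}$ with the path-intersection property of the chosen shortest paths; some care will also be needed at the endpoints of the common path, for instance when $x$ sits on a portal of the hexagon-path.
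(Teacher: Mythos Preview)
Your overall strategy matches the paper's: show that $\hat{p_i}$ and $\hat{p_j}$ share the same initial segment from $x$ through the edge $(u,v)$, and then deduce the distance equalities in the flattened graphs. The difference lies in how the coincidence of $p_i$ and $p_j$ is established. You argue hexagon-by-hexagon via ``portal vertices,'' but the claim that both shortest paths enter and exit each hexagon through the \emph{same} portal is not immediate---each interface consists of two vertices, and an inductive argument (which you do not spell out) is needed to pin down which one is used. The paper bypasses this entirely: letting $(y,o)$ be the parent edge of $(u,v)$, it observes that any complementary path containing $(u,v)$ must reach $(u,v)$ by traversing $\tail(u,v)$, and hence enters the hexagon at $y$; correspondingly both $p_i$ and $p_j$ pass through the single vertex $o$, and since their prefixes up to $o$ are shortest paths from $x$ to $o$, both equal the chosen $p_{xo}$. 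This one-vertex pivot is exactly the missing ingredient that would collapse your inductive portal argument into a single step.

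Your third paragraph, reducing $d_{\overline{p_i}}$ to $d_{\hat{p_i}}$ via the observation that adjoining tight children never shortens distances between existing vertices, is correct and slightly more explicit than the paper. The paper instead simply remarks that once the complementary subpaths from $x$ to $v$ coincide, the flattened graphs coincide on that region as well (since flattening is determined edge-by-edge), giving the same conclusion.
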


\begin{proof}
Let $\tau$ be the hexagon containing $u$ and $v$. Let $(y,o)$ be the parent edge of $(u,v)$. Note that it is possible that $y = u$ or $o = u$. Since any complement subpath from $x$ to $(u,v)$ must end with the unique subpath $\tail(u,v)$, it follows that $\tail(u,v) \in \hat{p_i}$ and $\tail(u,v) \in \hat{p_j}$. The fact that the complement contains $y$ also means that $p_{xo}$ is contained in $p_i$ and $p_j$. Therefore we have that $\hat{p_i}$ and $\hat{p_j}$ share the same sub-path from $x$ to $y$. Combined with the fact that $\tail(u,v) \in \hat{p_i}$ and $\tail(u,v) \in \hat{p_j}$ this implies that $\hat{p_i}$ and $\hat{p_j}$ share the same sub-path from $x$ to $v$ ending with the edge $(u,v)$.

Since $\hat{p_i}$ and $\hat{p_j}$ share the same sub-path from $x$ to $v$ ending with the edge $(u,v)$ we have that $d_{\overline{p_i}}(x,u) = d_{\overline{p_j}}(x,u)$ and $d_{\overline{p_i}}(x,v) = d_{\overline{p_j}}(x,v)$.
\end{proof}

\begin{lemma}\label{lem:tight}
Let $i \in \{0,\ldots,k\}$. Suppose that $p_i$ traverses the parent edge $(u,v)$ of a tight path $\Gamma$. Then $p_i$ does not visit any vertex in $\Gamma$ other than $u$ and $v$.
\end{lemma}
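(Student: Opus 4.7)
My plan is a proof by contradiction. Suppose $p_i$ traverses the edge $(u,v)$ and visits some vertex $z \in V(\Gamma) \setminus \{u, v\}$. Write $\Gamma$ as the directed path $u = z_0 \to z_1 \to z_2 \to z_3 \to z_4 \to z_5 = v$ and let $z = z_j$ for some $j \in \{1,2,3,4\}$. Set $L_1 = \sum_{i<j} w(z_i, z_{i+1})$ (the length of $\Gamma$ from $u$ to $z$) and $L_2 = \sum_{i \geq j} w(z_i, z_{i+1})$ (the length from $z$ to $v$), so that tightness of $\Gamma$ gives $L_1 + L_2 = w(u,v)$. Because $p_i$ is a simple directed path and uses the edge $(u,v)$, the vertex $z$ appears either strictly before $u$ or strictly after $v$ on $p_i$; by symmetry it suffices to treat the first case.

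The main step is to form the rerouted path $\hat{p}$ obtained by taking the prefix of $p_i$ from $x$ to $z$, following $\Gamma$ from $z$ to $v$, and appending the suffix of $p_i$ from $v$ to $l_i$. A direct length computation shows $\hat{p}$ is shorter than $p_i$ by exactly $d_{p_i}(z,u) + L_1 \geq 0$. As soon as this quantity is strictly positive, $\hat{p}$ is a strictly shorter $x$-to-$l_i$ path, contradicting the choice of $p_i$ as a shortest path.

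The step I expect to be the main obstacle is the degenerate case $L_1 = 0$ and $d_{p_i}(z,u) = 0$. Inspecting the six edges of the hexagon containing $\Gamma$, the condition $L_1 = 0$ can hold only when $z$ is the interior vertex of $\Gamma$ joined to $u$ solely through a zero-weight duplication edge (namely, the other copy of $u$); this pins $z$ uniquely. I would handle this case by exploiting the tie-breaking in the definition of $p_i$: recall $p_i = p_{x l_i}$ was chosen as a shortest path with the minimum number of edges. If the zero-length subpath of $p_i$ from $z$ to $u$ uses more than one edge, replacing it by the direct zero-weight duplication edge $(z,u)$ present in the hexagon yields a shortest path with strictly fewer edges, contradicting the choice of $p_i$. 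Otherwise it consists of exactly the edge $(z,u)$, so that $p_i$ contains the consecutive triple $z, u, v$; letting $y$ be the predecessor of $z$ on $p_i$, either $y$ is itself another interior vertex of $\Gamma$ (in which case $y$ plays the role of $z$ but now with $L_1 > 0$, and the earlier strict contradiction applies), or $y$ lies in a bubble adjacent in $({\cal T}, {\cal X})$ to the bubble containing $\Gamma$ whose shared vertex is $u$, in which case the tree-of-hexagons structure supplies an alternative edge $(y,u)$ that produces once more a shortest path with strictly fewer edges, contradicting the choice of $p_i$.
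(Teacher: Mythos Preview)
Your rerouting argument captures the right idea in the non-degenerate case, but you are working much harder than necessary and the degenerate case is not closed.

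The paper's proof is two lines and uses a property you never invoke: the shortest paths $p_{xy}$ were chosen so that \emph{the intersection of any two of them is a (possibly empty) path}. Since $\Gamma=(u,o,w,f,y,v)$ is tight, the chosen shortest path $p_{uy}$ is $(u,o,w,f,y)$. If $p_i$ traverses the edge $(u,v)$ and also visits some internal vertex of $\Gamma$, then $p_i \cap p_{uy}$ contains $u$ together with some internal vertex of $p_{uy}$, yet the successor of $u$ on $p_i$ is $v\notin p_{uy}$; hence the intersection cannot be a single subpath, contradicting the consistency condition. That is the entire argument.

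Your approach instead reroutes through $\Gamma$ and compares lengths. The strict case is fine. The trouble is the boundary case $L_1=0$ and $d_{p_i}(z,u)=0$. Your claim that $L_1=0$ ``pins $z$ uniquely'' to the duplicate $u'$ of $u$ is not justified: the hexagon edges alternate between zero-weight duplication edges and edges inheriting the original weights, and nothing in the paper forbids an original edge weight from being zero, so $L_1=0$ can occur for deeper $z$ as well. In that situation there is no single edge $(z,u)$ to substitute, only a multi-edge zero-length path, and your edge-count comparison no longer yields a strict decrease. The final paragraph about the predecessor $y$ compounds this: the assertion that ``the tree-of-hexagons structure supplies an alternative edge $(y,u)$'' is not established --- which vertices are shared between adjacent hexagons, and which edges are therefore available, would need to be spelled out, and in general no direct edge $(y,u)$ exists.

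These gaps are likely repairable with more bookkeeping, but the clean fix is simply to use the intersection-is-a-path property the paper already set up for exactly this purpose.
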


\begin{proof}
Let $\Gamma = (u,o,w,f,y,v)$. Since $(u,v)$ is the parent edge of a tight path $(u,o,w,f,y,v)$ we have that $p_{uy} = (u,o,w,f,y)$. Suppose $p_i$ visits some vertex in $\Gamma$ other than $u$ and $v$ we have that $p_i$ intersects the shortest path $p_{uy}$ more than once which is a contradiction.
\end{proof}

\begin{lemma}\label{lem:spine and closure}
Let $(u,v) \in E(G)$. Suppose that $(u,v) \in \overline{p_i}$, $(u,v) \not\in \hat{p_i}$  and $(u,v) \in  \hat{p_j}$ for some $i,j \in \{0,\ldots,k \}$, then $d_{\overline{p_i}}(x,u) = d_{\overline{p_j}}(x,u)$ and $d_{\overline{p_i}}(x,v) = d_{\overline{p_j}}(x,v)$. 
\end{lemma}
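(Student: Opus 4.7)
The strategy is to extend the argument of Lemma \ref{lem:spines} by peeling off the tight paths that place $(u,v)$ into $\overline{p_i}$, exploiting the fact that a tight path has total length equal to its parent edge and hence realizes the shortest-path distance between its endpoints.

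First, I identify the nested structure. Since $(u,v) \in \overline{p_i} \setminus \hat{p_i}$, the definition of the flattening operation implies that $(u,v)$ lies in a tight path $\Gamma_1$ whose parent edge $e_1$ itself belongs to $\overline{p_i}$. Iterating this observation backwards through the flattening rule produces a chain of tight paths $\Gamma_1, \Gamma_2, \ldots, \Gamma_m$ with parent edges $e_1, \ldots, e_m$ such that $e_k \in E(\Gamma_{k+1})$ for $k < m$ and $e_m \in \hat{p_i}$. All of $e_1, \ldots, e_m$ belong to $\overline{p_i}$.

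Second, I argue that the same nested structure is traversed by $\hat{p_j}$. Since $(u,v) \in \hat{p_j}$, the reasoning of Lemma \ref{lem:spines} forces $\hat{p_j}$ to approach $(u,v)$ through $\tail(u,v) \subseteq \Gamma_1$, entering $\Gamma_1$ at a unique endpoint $y_1$ of $e_1$. By Lemma \ref{lem:tight}, $p_j$ cannot use any interior vertex of a tight path, which lifts the previous argument one level up: $\hat{p_j}$ must also approach $y_1$ through $\Gamma_2$ at some endpoint $y_2$ of $e_2$. Proceeding inductively, I conclude that $\hat{p_j}$ passes through $y_m, y_{m-1}, \ldots, y_1, u$ in sequence, and the prefix of $\hat{p_j}$ ending at $y_m$ coincides with that of $\hat{p_i}$ by a direct application of Lemma \ref{lem:spines} to the edge $e_m \in \hat{p_i} \cap \hat{p_j}$.

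Third, I compare distances using tightness. The canonical $x$-to-$u$ path in $\overline{p_i}$ follows $\hat{p_i}$ to $y_m$ and then descends through $\Gamma_m, \Gamma_{m-1}, \ldots, \Gamma_1$; by tightness of each $\Gamma_k$, the segment along $\Gamma_k$ between $y_k$ and $y_{k-1}$ has length exactly $d_G(y_k, y_{k-1})$, and the final piece along $\Gamma_1$ from $y_1$ to $u$ has length $d_G(y_1, u)$. The same decomposition governs $\hat{p_j}$, so both distances equal $d_G(x, y_m) + \sum_{k=1}^{m-1} d_G(y_{k+1}, y_k) + d_G(y_1, u)$, yielding $d_{\overline{p_i}}(x,u) = d_{\overline{p_j}}(x,u)$. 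Appending the edge $(u,v)$ gives $d_{\overline{p_i}}(x,v) = d_{\overline{p_j}}(x,v)$. The main obstacle is verifying that at every level the entry endpoint $y_k$ is canonically forced, leaving no room for $\hat{p_i}$ or $\hat{p_j}$ to short-circuit through the interior of some $\Gamma_k$; here Lemma \ref{lem:tight} together with the uniqueness of shortest paths in the tree of hexagons supplies exactly the rigidity needed.
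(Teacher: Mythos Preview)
Your proof follows essentially the same strategy as the paper's: locate the unique ancestor edge $e_m=(t,z)$ of $(u,v)$ lying on $\hat{p_i}$, use Lemma~\ref{lem:tight} recursively to push $p_j$ away from the tight chain $\Gamma_1,\ldots,\Gamma_m$, and conclude that $\hat{p_i}$ and $\hat{p_j}$ agree on the route down to $(u,v)$.

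There is, however, a slip in your second step. You invoke ``a direct application of Lemma~\ref{lem:spines} to the edge $e_m \in \hat{p_i} \cap \hat{p_j}$'', but in fact $e_m \notin \hat{p_j}$. Your own induction shows that $\hat{p_j}$ turns off at the vertex $y_m$ to descend into $\Gamma_m$; as a simple path it therefore cannot also traverse $e_m=(y_m,o_m)$. The paper is explicit on this point, recording that $\hat{p_j}$ contains $\tail(t,z)\setminus(t,z)$ rather than all of $\tail(t,z)$. The repair is easy: apply Lemma~\ref{lem:spines} instead to the edge of $\Gamma_{m+1}$ that immediately precedes $y_m$. That edge sits in $\tail(e_m)\subset\hat{p_i}$, and one more invocation of the tail argument from Lemma~\ref{lem:spines} (which does not require tightness of $\Gamma_{m+1}$) places it in $\hat{p_j}$ as well. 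With this correction your argument matches the paper's.
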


\begin{proof}
Let $(t,z) \in \hat{p_i}$ be the unique ancestor edge of $(u,v)$ that is contained in $\hat{p_i}$.
We have that $(t,z)$ is the ancestor edge of a tight path that contains $(u,v)$. This implies that $p_{tv} \in \overline{p_i}$. Now let $(b,c)$ be the parent edge of $(t,z)$. This implies that $p_{xc} \subset p_i$.

Let us suppose that the parent edge of $(u,v)$ is $(y,o)$. This implies that $p_j$ contains $o$. Since $(y,o)$ is the parent edge of a tight path we also have that $p_{yv} = \tail(u,v)$. This implies that $p_{tv} = p_{ty} \cup  \tail(u,v)$.
From Lemma \ref{lem:tight} we have that $p_j$ does not contain $(y,o)$. Recursively applying Lemma \ref{lem:tight} we have that $p_j$ does not intersect with $(\tail(t,z)\setminus (t,z)) \cup p_{tv}$. Therefore we have that $p_{xc} \subset p_j$ and that $(\tail(t,z)\setminus (t,z)) \cup p_{tv} \subset \hat{p_j}$. Since $p_{xc}$ is common to both $p_i$ and $p_j$, we have that $\overline{p_i}$ and $\overline{p_j}$ share the same subpath from $x$ to $b$. Moreover they both also contain $(\tail(t,z)\setminus (t,z)) \cup p_{tv}$ which concludes the proof.
\end{proof}

\begin{lemma}\label{lem:closure}
Let $(u,v) \in E(G)$. Suppose that $(u,v) \in \overline{p_i}$ and $(u,v) \in  \overline{p_j}$ for some $i,j \in \{0,\ldots,k \}$, then $d_{\overline{p_i}}(x,u) = d_{\overline{p_j}}(x,u)$ and $d_{\overline{p_i}}(x,v) = d_{\overline{p_j}}(x,v)$. 
\end{lemma}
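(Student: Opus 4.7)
I would reduce the statement to the two preceding lemmas by tracing $(u,v)$ up a chain of tight parent edges until one lands in the spine $\hat{p_i}$ or $\hat{p_j}$. Three mutually exclusive cases arise. If $(u,v)\in\hat{p_i}\cap\hat{p_j}$, the conclusion is Lemma~\ref{lem:spines}. If $(u,v)$ belongs to exactly one of $\hat{p_i},\hat{p_j}$, the conclusion is Lemma~\ref{lem:spine and closure} (after possibly swapping $i$ and $j$). The genuinely new case is $(u,v)\in\overline{p_i}\setminus\hat{p_i}$ and $(u,v)\in\overline{p_j}\setminus\hat{p_j}$.

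In this remaining case, I would let $(t,z)\in\hat{p_i}$ be the topmost ancestor of $(u,v)$ in $\hat{p_i}$, reached by iterating the tight-parent relation starting at $(u,v)$. The intermediate edges $e_0=(t,z),e_1,\ldots,e_m=(u,v)$ are all tight descendants of $(t,z)$, with each $e_{\ell+1}$ lying on the tight child path $\Gamma_\ell$ of $e_\ell$. Since $(u,v)\in\overline{p_j}$ and the closure is closed under walking up the tight-parent relation (a tight child path is added only after its parent edge already belongs to the closure), an easy induction shows every $e_\ell\in\overline{p_j}$; in particular $(t,z)\in\overline{p_j}$. I would then split once more: if $(t,z)\in\hat{p_j}$ apply Lemma~\ref{lem:spines} to $(t,z)$; otherwise $(t,z)\in\overline{p_j}\setminus\hat{p_j}$ and Lemma~\ref{lem:spine and closure} applies with the roles of $i$ and $j$ swapped, using $(t,z)\in\hat{p_i}$. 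Either way, $d_{\overline{p_i}}(x,t)=d_{\overline{p_j}}(x,t)$ and $d_{\overline{p_i}}(x,z)=d_{\overline{p_j}}(x,z)$.

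It then remains to propagate these equalities down the tight chain to the endpoints of $(u,v)$. At step $\ell$, a parent edge $e_\ell=(a,b)$ is replaced by an interior vertex $c$ of its tight child path $\Gamma_\ell=(a,\ldots,c,d,\ldots,b)$. Since $\Gamma_\ell\subset\overline{p_i}\cap\overline{p_j}$ and $\len(\Gamma_\ell)=w(e_\ell)$, I would show that
\[
d_{\overline{p_i}}(x,c)=d_{\overline{p_i}}(x,a)+d_{\Gamma_\ell}(a,c),\qquad d_{\overline{p_j}}(x,c)=d_{\overline{p_j}}(x,a)+d_{\Gamma_\ell}(a,c),
\]
together with the analogous identities for $d$. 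Iterating this from $\ell=0$ down to $\ell=m$ and combining with the base equality at $(t,z)$ yields $d_{\overline{p_i}}(x,u)=d_{\overline{p_j}}(x,u)$ and the same identity for $v$, finishing the proof.

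The main obstacle is precisely the displayed identity above: one must verify that the shortest path inside the closure from $x$ to an interior vertex $c$ of a tight child path really does pass through the head of the parent edge and then travel along $\Gamma_\ell$. My plan is to argue this from the rooted tree-of-hexagons decomposition $\mathcal{T}$, which confines $c$ to bubbles strictly below the bubble of $(a,b)$, together with the fact that the closure construction only adds edges pointing strictly downward in $\mathcal{T}$; this rules out shortcuts via sibling tight branches and forces any $x\to c$ walk inside $\overline{p_i}$ (respectively $\overline{p_j}$) to enter the subtree through $a$ and then follow $\Gamma_\ell$.
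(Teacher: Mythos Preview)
Your approach is correct and essentially the same as the paper's: reduce to an ancestor edge on one of the spines where Lemma~\ref{lem:spines} or Lemma~\ref{lem:spine and closure} applies, then propagate the equality down the tight chain using that both closures contain the common path $p_{bv}$. The paper streamlines slightly by taking \emph{both} spine ancestors $(t,z)\in\hat{p_i}$ and $(b,c)\in\hat{p_j}$ at once and assuming w.l.o.g.\ that $(b,c)$ lies on a tight descendant of $(t,z)$, so Lemma~\ref{lem:spine and closure} applies directly to $(b,c)$ without your extra case split; the propagation step you flag as the ``main obstacle'' is handled in the paper by the one-line observation that $\overline{p_i}$ and $\overline{p_j}$ both contain $p_{bv}$, which is exactly the structural fact you plan to prove.
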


\begin{proof}
Let $(t,z) \in \hat{p_i}$ be the unique ancestor edge of $(u,v)$ that is contained in $\hat{p_i}$. Let $(b,c) \in \hat{p_j}$ be the unique ancestor edge of $(u,v)$ that is contained in $\hat{p_j}$. 
W.l.o.g.~let $(t,z)$ be the ancestor edge of a tight path that contains $(b,c)$. From Lemma \ref{lem:spine and closure} we have that $d_{\overline{p_i}}(x,b) = d_{\overline{p_j}}(x,b)$. Since $(b,c)$ is an ancestor edge of a tight path that contains $(u,v)$ it follows that $\overline{p_i}$ and $\overline{p_j}$ both contain $p_{bv}$. This implies that $d_{\overline{p_i}}(x,u) = d_{\overline{p_j}}(x,u)$ and $d_{\overline{p_i}}(x,v) = d_{\overline{p_j}}(x,v)$ concluding the proof. 
\end{proof}

\begin{lemma}\label{lem:step3}
For all $(u,v) \in E(G)$, we have $Pr[(u,v) \text{ is removed from $Q$ in Step 3}] \leq \frac{d(u,v)}{r}$.
\end{lemma}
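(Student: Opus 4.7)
The plan is to mirror the argument of Lemma~\ref{lem:step2}, handling one extra complication: Step 3 applies a boundary-crossing test separately for every index $j \in \{0,\ldots,k\}$, rather than for a single quasimetric. I would first dispose of the trivial case: if $(u,v) \notin \overline{p_j}$ for any $j$, then Step 3 never removes $(u,v)$ from $Q$ and the claimed bound is vacuous. Henceforth I may assume that $(u,v) \in \overline{p_j}$ for at least one $j$.

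The main conceptual step is to collapse the $j$-dependence of the removal event. This is exactly what Lemma~\ref{lem:closure} is built for: whenever $(u,v) \in \overline{p_i} \cap \overline{p_j}$, the equalities $d_{\overline{p_i}}(x,u) = d_{\overline{p_j}}(x,u)$ and $d_{\overline{p_i}}(x,v) = d_{\overline{p_j}}(x,v)$ hold. Let $a$ and $b$ denote the common values of $d_{\overline{p_j}}(x,u)$ and $d_{\overline{p_j}}(x,v)$ across all $j$ with $(u,v) \in \overline{p_j}$. Then the event that $(u,v)$ is removed from $Q$ in Step 3 is precisely the $j$-independent event that $a \leq ir + z < b$ for some integer $i \geq 0$.

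From here the argument follows the template of Lemma~\ref{lem:step2}. Because $(u,v)$ is itself an edge of $\overline{p_j}$ with weight $d(u,v)$, applying the triangle inequality to shortest paths inside the subgraph $\overline{p_j}$ yields $b - a \leq d(u,v)$. Since $z$ is uniform on $[0,r]$ and the intervals $[a - ir, b - ir)$ (for nonnegative integers $i$) are pairwise disjoint inside $[0,r]$, the probability that the random threshold $ir + z$ lands in $[a, b)$ for some $i$ is at most $\min\{1, (b - a)/r\} \leq d(u,v)/r$, giving the desired bound.

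I expect the only real obstacle to be the first move, namely eliminating the $j$-dependence so that the analysis reduces to a single uniform-random threshold crossing. Once Lemma~\ref{lem:closure} is invoked, the rest is a routine repetition of the argument already used in Lemma~\ref{lem:step2}, so no new estimates are needed.
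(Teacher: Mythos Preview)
Your proposal is correct and follows essentially the same approach as the paper. The paper's proof invokes Lemmas~\ref{lem:spines}, \ref{lem:spine and closure}, and \ref{lem:closure} together to collapse the $j$-dependence, while you cite only Lemma~\ref{lem:closure}; since Lemma~\ref{lem:closure} is the most general of the three and already covers all cases $(u,v)\in\overline{p_i}\cap\overline{p_j}$, your citation suffices, and the remainder (triangle inequality plus the uniform-threshold crossing bound) is identical in spirit to the paper's argument.
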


\begin{proof}
From Lemmas \ref{lem:spines}, \ref{lem:spine and closure} and \ref{lem:closure} we have that $(u,v)$ is only removed when, for some integer $i\geq 0$, and any $j \in \{0,\ldots,k \}$ such that $(u,v)$ is in $\overline{p_j}$, we have that $d_{\overline{p_j}}(x,u) \leq  z + ir $ and $d_{\overline{p_j}}(x,v) > z + ir$ .
By the triangle inequality, this implies that $Pr[(u,v) \text{ is removed from $Q$ in Step 3}] \leq \frac{d(x,v) - d(x,u)}{r} \leq \frac{d(u,v)}{r}$, as required. 
\end{proof}

\begin{lemma}\label{lem:step6}
For all $(u,v) \in E(G)$, we have $Pr[(u,v) \text{ is removed from $Q$ in Step 6}] \leq \frac{d(u,v)}{r}$.
\end{lemma}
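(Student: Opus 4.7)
The plan is to mirror the proof of Lemma \ref{lem:step3}, exploiting the symmetry between the forward paths $p_i = p_{x l_i}$ (used in Steps 2--4) and the backward paths $q_i = p_{l_i x}$ (used in Steps 5--7). Concretely, Step 6 removes $(u,v)$ from $Q$ precisely when there exist an integer $i \geq 0$ and an index $j \in \{0,\dots,k\}$ such that $d_{\overline{q_j}}(v,x) \leq ir + z$ and $d_{\overline{q_j}}(u,x) > ir + z$. So we need to show that the threshold values $d_{\overline{q_j}}(u,x)$ and $d_{\overline{q_j}}(v,x)$ do not depend on the particular choice of $j$, whenever $(u,v) \in \overline{q_j}$.

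First I would establish the exact analogues of Lemmas \ref{lem:spines}, \ref{lem:spine and closure}, and \ref{lem:closure} with $p_i$, $\hat{p_i}$, $\overline{p_i}$ replaced by $q_i$, $\hat{q_i}$, $\overline{q_i}$ and with distances measured towards $x$ rather than from $x$. The arguments are structurally identical: each hexagon has a pair of oriented parent edges and a pair of oriented child paths, and the entire construction of parent edges, tight paths, complementary paths and their flattenings is symmetric with respect to edge direction. For instance, in the analogue of Lemma \ref{lem:spines}, if $(u,v) \in \hat{q_i} \cap \hat{q_j}$ sits in a hexagon with parent edge $(y,o)$ towards the root, then the tail $\tail(u,v)$ (viewed backwards) is forced into both $\hat{q_i}$ and $\hat{q_j}$, and $p_{o x}$ is common to $q_i$ and $q_j$; this gives a common subpath from $v$ to $x$ along which the distances $d_{\overline{q_i}}(\cdot,x)$ and $d_{\overline{q_j}}(\cdot,x)$ agree. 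Lemma \ref{lem:tight} is direction-free and can be applied verbatim, which lets the analogue of Lemma \ref{lem:spine and closure} go through by the same closure-under-tight-ancestors argument, and then the analogue of Lemma \ref{lem:closure} follows by combining the two.

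Given these three lemmas, the argument for Lemma \ref{lem:step6} is immediate. For any edge $(u,v)$, the quantities $d_{\overline{q_j}}(u,x)$ and $d_{\overline{q_j}}(v,x)$ take the same value across all $j$ with $(u,v) \in \overline{q_j}$, so there is a single pair of thresholds, and $(u,v)$ is removed in Step 6 only when the random offset $z$ falls in an interval of length $d_{\overline{q_j}}(u,x) - d_{\overline{q_j}}(v,x)$. Since $\overline{q_j}$-distances dominate the shortest-path quasimetric $d$ and since $d(u,x) - d(v,x) \leq d(u,v)$ by the triangle inequality (in the correct orientation $u \to v \to x$), the probability of removal is at most $d(u,v)/r$.

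The main obstacle I expect is checking that the direction reversal is genuinely harmless in the analogue of Lemma \ref{lem:spine and closure}: one has to verify that the uniqueness of shortest paths and the containment relation $p_{t v} \subset \overline{q_i}$ used there translate correctly when ``ancestor edge in $\hat{q_i}$'' and ``tight-path descendants'' are read with the reversed edge orientation. Once the combinatorial setup is unpacked in the reversed direction, however, the same argument applies line by line, since the tree of hexagons decomposition and the tight/slack classification are both orientation-agnostic.
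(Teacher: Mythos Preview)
Your proposal is correct and follows exactly the approach the paper intends: the paper's own proof of Lemma \ref{lem:step6} is the single sentence ``The proof of this case is similar to the proof of Lemma \ref{lem:step3},'' and your write-up spells out precisely what that similarity entails, namely the direction-reversed analogues of Lemmas \ref{lem:spines}, \ref{lem:spine and closure}, and \ref{lem:closure} applied to the $q_j$'s and distances to $x$.
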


\begin{proof}
The proof of this case is similar to the proof of Lemma \ref{lem:step3}.
\end{proof}

\begin{lemma}\label{lem:step4}
For all $(u,v) \in E(G)$, we have $Pr[(u,v) \text{ is removed from $Q$ in Step 4}] \leq 2 \frac{d(u,v)}{r}$.
\end{lemma}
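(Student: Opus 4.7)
The plan is to fix an edge $e=(u,v)$ and bound $\Pr[(u,v)\text{ removed in Step 4}]$ by summing over all proper ancestor edges that could trigger a cascade reaching $e$. Let $a_0=e,a_1,a_2,\ldots$ denote the ancestor chain, where $a_{i+1}$ is the parent edge of the child path $\Gamma_i$ that contains $a_i$. If $(u,v)$ is removed in Step 4 as a consequence of $a_i$ being removed in Step 3, then two independent events must occur: first, $a_i$ is removed in Step 3, which by Lemma \ref{lem:step3} has probability at most $d(a_i)/r$; and second, the recursive random selections in Step 4.1 must follow precisely the chain $a_i\to a_{i-1}\to\cdots\to a_0$, contributing a conditional factor of $\prod_{j=0}^{i-1} d(a_j)/\len(\Gamma_j)$.

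The key structural observation I will exploit is that every child path $\Gamma_j$ actually traversed by the cascade must be slack. Suppose $a_{j+1}$ is removed in Step 3 via some $\overline{p_{j^*}}$. If the child path $\Gamma_j$ of $a_{j+1}$ were tight, then by construction $\overline{p_{j^*}}$ recursively absorbs every tight path whose parent edge already lies in it, so $\Gamma_j\subseteq\overline{p_{j^*}}$. Tightness then forces the $\overline{p_{j^*}}$-distances from $x$ along the consecutive vertices of $\Gamma_j$ to equal the cumulative edge-weight sums, and hence to be non-decreasing from $d_{\overline{p_{j^*}}}(x,u)$ to $d_{\overline{p_{j^*}}}(x,v)$. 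The integer-level threshold $z+ir$ that was crossed at $a_{j+1}$ must therefore also be crossed at some edge of $\Gamma_j$, which is then removed in Step 3. Consequently $\Gamma_j$ would not qualify as an uncut child path in Step 4.1 and the cascade would halt there. So every $\Gamma_j$ in the cascade satisfies $\len(\Gamma_j)\geq 2\,d(a_{j+1})$ by the definition of slackness.

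Combining the two bounds, for each $i\geq 1$,
\[
\Pr[(u,v)\text{ removed via the cascade from }a_i]\;\leq\;\frac{d(a_i)}{r}\cdot\prod_{j=0}^{i-1}\frac{d(a_j)}{\len(\Gamma_j)}\;\leq\;\frac{d(a_i)}{r}\cdot\frac{1}{2^i}\prod_{j=0}^{i-1}\frac{d(a_j)}{d(a_{j+1})}\;=\;\frac{d(u,v)}{2^i\,r},
\]
because the product telescopes. A union bound over $i\geq 1$ then yields
\[
\Pr[(u,v)\text{ removed in Step 4}]\;\leq\;\sum_{i\geq 1}\frac{d(u,v)}{2^i\,r}\;=\;\frac{d(u,v)}{r}\;\leq\;\frac{2\,d(u,v)}{r},
\]
as required.

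The main technical obstacle is the rigorous verification of the tight-child-cut property: that for a tight child path $\Gamma\subseteq\overline{p_{j^*}}$, the distances $d_{\overline{p_{j^*}}}(x,\cdot)$ are exactly the cumulative edge weights along $\Gamma$, and hence non-decreasing. The upper bound is immediate from the triangle inequality; the matching lower bound uses that the parent edge $a_{j+1}$ and $\Gamma$ have the same total length (tightness) together with the tree-like structure of $\overline{p_{j^*}}$, so that a shortest path from $x$ into an interior vertex of $\Gamma$ cannot shortcut around the initial segment. Once this monotonicity is established, the thresholding argument used in Lemma \ref{lem:step2} transfers verbatim and guarantees that some edge of $\Gamma$ is cut in Step 3, completing the reduction to cascades through slack paths only.
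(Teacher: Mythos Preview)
Your argument contains a real gap. You claim that every child path $\Gamma_j$ traversed by the cascade is slack, and you justify this by: ``Suppose $a_{j+1}$ is removed in Step 3 via some $\overline{p_{j^*}}$; if $\Gamma_j$ is tight, then $\Gamma_j\subseteq\overline{p_{j^*}}$ and some edge of $\Gamma_j$ is cut in Step 3.'' But in a cascade triggered at level $i$, only the topmost edge $a_i$ is removed in Step 3; every intermediate $a_{j+1}$ with $j+1<i$ is removed in Step 4 by the cascade itself, not in Step 3. Your tight-child-cut argument therefore applies only to the first hop $\Gamma_{i-1}$. Concretely: take $a_2$ removed in Step 3, $\Gamma_1$ slack, cascade selects $a_1$; since $a_1$ sits inside a slack path, it need not belong to the triggering $\overline{p_{j^*}}$, and no threshold crossing is forced at $a_1$. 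A tight child $\Gamma_0$ of $a_1$ can then be uncut, and the cascade proceeds into it. With tight steps contributing a factor $d(a_j)/d(a_{j+1})$ rather than $d(a_j)/(2d(a_{j+1}))$, the factor $2^i$ in your telescoped bound is unjustified and the geometric series does not converge.

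The paper avoids this by a one-step induction on the tree structure rather than unrolling the cascade. The inductive hypothesis already bounds $\Pr[e\text{ removed in Step 4}]\le 2\,d(e)/r$ for the parent $e$. For a tight child $\Gamma$ one uses exactly the observation you isolated---if $e$ is removed in \emph{Step 3}, then $\Gamma$ is already cut---so only the parent-removed-in-Step-4 contribution survives, giving $2\tfrac{d(e)}{r}\cdot\tfrac{d(u,v)}{d(e)}=2\tfrac{d(u,v)}{r}$. For a slack child, both the Step-3 and Step-4 removals of $e$ contribute, but $\len(\Gamma)\ge 2\,d(e)$ halves each term and the sum stays at most $2\,d(u,v)/r$. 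The crucial point is that the tight-child-cut property is invoked only once, for the immediate parent, where it is valid; it is never needed at deeper hops, where it can fail.
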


\begin{proof}
We prove this by induction. For the base case suppose that $(u,v)$ is an edge in $B^*$ then it has no parent edge and therefore the assertion is immediate.
Otherwise let $e$ be the parent edge of a child path $\Gamma$ containing $(u,v)$ and assume, by the inductive hypothesis, that $Pr[(e) \text{ is removed from $Q$ in Step 4}] \leq 2 \frac{d(e)}{r}$. There are two cases:
\begin{description}
\item{\emph{Case 1:}} Suppose that $\Gamma$ is a tight child path of $e$. Then we have 
\begin{align*}
Pr[(u,v) \text{ is removed from $Q$ in Step 4}] &= Pr[(e) \text{ is removed from $Q$ in Step 4}] \cdot \frac{d(u,v)}{d(e)} \\ 
& \leq 2 \frac{d(u,v)}{r}
\end{align*} 

\item{\emph{Case 2:}} Suppose that $\Gamma$ is a slack child path of $e$. Then we have
\begin{align*}
Pr[(u,v) \text{ is removed from $Q$ in Step 4}] &= Pr[(e) \text{ is removed from $Q$ in Step 4}] \cdot \frac{d(u,v)}{\len_G(\Gamma)} \\
& ~~~ + Pr[(e) \text{ is removed from $Q$ in Step 3}] \cdot \frac{d(u,v)}{\len_G(\Gamma)} \\ & \leq 2\frac{d(e)}{r} \cdot \frac{d(u,v)}{2d(e)} + \frac{d(e)}{r}\cdot \frac{d(u,v)}{2d(e)} \\
& \leq 2 \frac{d(u,v)}{r}
\end{align*}
\end{description} 
Thus in either case the assertion is satisfied, concluding the proof.
\end{proof}

\begin{lemma}\label{lem:step7}
For all $(u,v) \in E(G)$, we have $Pr[(u,v) \text{ is removed from $Q$ in Step 7}] \leq O(1) \frac{d(u,v)}{r}$.
\end{lemma}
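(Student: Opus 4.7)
The plan is to mirror the proof of Lemma \ref{lem:step4} almost verbatim, exploiting the symmetry between the forward structure (Steps 2--4, centered on the shortest paths $p_i$ out of $x$ and their flattened complements $\overline{p_i}$) and the backward structure (Steps 5--7, centered on the shortest paths $q_i$ into $x$ and their flattened complements $\overline{q_i}$). Concretely, Step 6 plays the role of Step 3, and Step 7 that of Step 4, so the same inductive argument applies, and the analogue of Lemma \ref{lem:step6} (the backward counterpart of Lemma \ref{lem:step3}) gives the required direct-removal bound.

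I would proceed by induction on the depth of $(u,v)$ in the rooted tree decomposition $({\cal T},{\cal X})$. For the base case, if $(u,v)$ lies in the root bubble $B^*$, then it has no parent edge, is never touched by Step 7, and the claim is immediate. For the inductive step, let $e$ be the parent edge of the child path $\Gamma$ containing $(u,v)$, and assume as inductive hypothesis that $\Pr[e \text{ is removed in Step 7}] \leq 2\,d(e)/r$.

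There are two subcases, exactly parallel to Lemma \ref{lem:step4}. If $\Gamma$ is tight, then $\len(\Gamma) = d(e)$, and Step 7.1 selects $(u,v)$ with probability $d(u,v)/d(e)$ whenever its parent is removed in Step 7, so
\[
\Pr[(u,v) \text{ removed in Step 7}] \;=\; \Pr[e \text{ removed in Step 7}] \cdot \frac{d(u,v)}{d(e)} \;\leq\; 2\,\frac{d(u,v)}{r}.
\]
If $\Gamma$ is slack, then $\len(\Gamma) \geq 2\,d(e)$, and the parent $e$ can be removed either recursively in Step 7 (bounded by induction) or directly in Step 6 (bounded by Lemma \ref{lem:step6}), giving
\[
\Pr[(u,v) \text{ removed in Step 7}] \;\leq\; \bigl(\Pr[e \text{ removed in Step 7}] + \Pr[e \text{ removed in Step 6}]\bigr) \cdot \frac{d(u,v)}{\len(\Gamma)} \;\leq\; \frac{3\,d(e)}{r} \cdot \frac{d(u,v)}{2\,d(e)} \;\leq\; 2\,\frac{d(u,v)}{r}.
\]
Either way the induction closes with constant $2$, yielding the claimed $O(1)$ bound.

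I expect no real obstacle here: the main (very mild) point is to verify that the recursive propagation of Step 7 along the rooted tree decomposition, together with the role of the backward flattened complements $\overline{q_j}$, is structurally identical to the forward case, so that the definitions of tight and slack child paths, and therefore the case analysis, transfer unchanged. The proofs of Lemmas \ref{lem:step5} and \ref{lem:step6} already confirm this directional symmetry, so the argument goes through verbatim.
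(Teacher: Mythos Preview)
Your proposal is correct and matches the paper's own approach exactly: the paper simply states that the proof is similar to that of Lemma~\ref{lem:step4}, and you have spelled out precisely this symmetry argument, replacing Steps~3 and~4 (and the $\overline{p_j}$) by Steps~6 and~7 (and the $\overline{q_j}$) and invoking Lemma~\ref{lem:step6} in place of Lemma~\ref{lem:step3}.
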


\begin{proof}
The proof for this is similar to the proof of Lemma \ref{lem:step4}.
\end{proof}

\begin{lemma}\label{lem:step8}
For all $(u,v) \in E(G)$, we have $Pr[(u,v) \text{ is removed from $Q$ in Step 8}] \leq 10 \frac{d(u,v)}{r}$.
\end{lemma}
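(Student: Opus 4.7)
The plan is to observe that Step 8 is entirely deterministic given the quasimetric $d$: an edge $(u,v)$ is removed in Step 8 if and only if $d(u,v) > r/10$. So the probability in question is just the indicator of this event, and it suffices to argue by case analysis on whether $d(u,v)$ exceeds $r/10$.

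First I would handle the case $d(u,v) \leq r/10$. Here the removal condition in Step 8 is not triggered, so $\Pr[(u,v) \text{ is removed in Step 8}] = 0$, and the claimed bound $0 \leq 10\, d(u,v)/r$ is immediate. Second, I would handle the case $d(u,v) > r/10$. Now the removal happens with probability $1$, but since $d(u,v)/r > 1/10$ we have $10\, d(u,v)/r > 1 \geq \Pr[(u,v) \text{ is removed in Step 8}]$, so the bound again holds.

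In either case the desired inequality is satisfied, which concludes the proof. The main thing to note is that there is no obstacle here; the constant $10$ in Step 8 is precisely chosen so that the deterministic thresholding contributes at most a $10\, d(u,v)/r$ term to the overall Lipschitz analysis, matching the contributions from the earlier randomized steps when they are combined by a union bound later in the argument.
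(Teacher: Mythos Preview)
Your proof is correct and matches the paper's own argument: both observe that Step~8 is deterministic, so the probability is $0$ when $d(u,v)\le r/10$ and $1$ otherwise, and in the latter case $10\,d(u,v)/r>1$ gives the bound. The paper compresses this into a single line, but the content is identical.
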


\begin{proof}
Since only edges of length greater than $\frac{r}{10}$ are removed in Step 8 we have that
\[
Pr[(u,v) \text{ is removed from $Q$ in step 8}] \leq 1 < 10 \frac{d(u,v)}{r}.
\]
\end{proof}

\begin{lemma}\label{lem:edge_probtw2}
For all $(u,v) \in E(G)$, we have $Pr[(u,v) \text{ is removed from $Q$}] \leq 18 \frac{d(u,v)}{r}$.
\end{lemma}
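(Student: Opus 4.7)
The plan is a straightforward union bound over the disjoint steps of the algorithm at which an edge $(u,v)$ can be removed from the quasipartition $Q$. Inspecting the algorithm, edges are removed only in Steps 2, 3, 4, 5, 6, 7, and 8; Step 9 only adds pairs to $Q$ via transitivity and hence cannot cause a removal. Therefore, by the union bound,
\begin{align*}
\Pr[(u,v) \text{ removed from } Q]
&\leq \sum_{i\in\{2,3,4,5,6,7,8\}} \Pr[(u,v) \text{ removed in Step } i].
\end{align*}

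Next I would plug in the bounds already established: Lemma~\ref{lem:step2} gives $d(u,v)/r$, Lemma~\ref{lem:step3} gives $d(u,v)/r$, Lemma~\ref{lem:step4} gives $2d(u,v)/r$, Lemma~\ref{lem:step5} gives $d(u,v)/r$, Lemma~\ref{lem:step6} gives $d(u,v)/r$, Lemma~\ref{lem:step7} gives (symmetric to Lemma~\ref{lem:step4}) $2d(u,v)/r$, and Lemma~\ref{lem:step8} gives $10\,d(u,v)/r$. Adding these yields $(1+1+2+1+1+2+10)\,d(u,v)/r = 18\,d(u,v)/r$, which is exactly the desired bound.

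Since each cited lemma has already been proven, there is no real obstacle here; the only thing to double-check is that the seven events above genuinely cover all possibilities for removal and that the constant hidden by the $O(1)$ in Lemma~\ref{lem:step7} is in fact $2$ (obtained by repeating the inductive argument of Lemma~\ref{lem:step4} with Steps 5 and 6 in place of Steps 2 and 3). With those two sanity checks in place, the union bound immediately yields the stated constant $18$, concluding the proof.
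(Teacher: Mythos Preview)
Your proposal is correct and matches the paper's own proof essentially verbatim: the paper also applies the union bound over Lemmas~\ref{lem:step2}--\ref{lem:step8}, summing the constants $1+1+2+1+1+2+10=18$. Your additional remark that the hidden $O(1)$ in Lemma~\ref{lem:step7} must be taken as $2$ by symmetry with Lemma~\ref{lem:step4} is exactly the implicit assumption the paper makes.
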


\begin{proof}
The assertion follows by combining Lemmas \ref{lem:step2}, \ref{lem:step3}, \ref{lem:step4}, \ref{lem:step5}, \ref{lem:step6}, \ref{lem:step7} and \ref{lem:step8} using the union bound. 
\end{proof}

Finally we show that $Q$ is $6r$-bounded.

 \begin{lemma}\label{lem:monotone_path_down}
Let $e=(u,v) \in Q$ where $u \in B_u$ and $v \in B_v$. Suppose that $B_u$ is in the path from $B^*$ to $B_v$ in $\cal{T}$. If $(u,v) \in Q$ then there exists a monotone-down path $P = \{a_1=u,a_2,\ldots,a_t=v\}$ in $G$, where for all $i \in \{1,2,\ldots,t-1\}$ we have $(a_i,a_{i+1}) \in Q$.
\end{lemma}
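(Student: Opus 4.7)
The plan is to exploit the fact that $Q$ at the end of Step 9 is the transitive closure of the set $Q_0 \subseteq E(G)$ of original edges that survived Steps 2--8. Thus $(u,v) \in Q$ implies the existence of a sequence $u = a_1, a_2, \ldots, a_t = v$ of minimum length such that each consecutive pair $(a_i, a_{i+1})$ is an edge of $G$ lying in $Q_0$. To such a sequence I associate the sequence of hexagons $H_1, \ldots, H_{t-1}$, where $H_i$ is the unique hexagon containing the edge $(a_i, a_{i+1})$. I would argue that any minimum-length chain of this form is automatically down-monotone, by showing that the levels of $H_1, \ldots, H_{t-1}$ in ${\cal T}$ are non-decreasing; since $B_u$ is an ancestor of $B_v$ by hypothesis, this is equivalent to the required down-monotone condition on the bubble sequence.

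Suppose for contradiction that the hexagon sequence contains a local maximum: for some index $i$, the hexagon $H_i$ is strictly deeper in ${\cal T}$ than both $H_{i-1}$ and $H_{i+1}$. Using the tree-of-hexagons structure, two adjacent hexagons (a hexagon and its parent) share exactly the two endpoints of the parent edge of the deeper one, so the detour in and out of $H_i$ must enter and exit through this common two-vertex gateway (or at a single gateway vertex in the degenerate case). Consequently one can replace the sub-chain traversing into $H_i$ by a direct step inside the shallower parent hexagon, using either the parent edge itself or a single edge of the parent hexagon incident to the two gateway vertices, thereby shortening the chain and contradicting the minimality of $t$.

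The main obstacle is verifying that the shortcut edge is itself in $Q_0$. For this I would invoke the cut-propagation logic of Steps 4 and 7: if the parent edge of a hexagon is removed in Step 3 (resp.~Step 6), then Step 4 (resp.~Step 7) removes at least one edge along the corresponding child path. Contrapositively, if every edge along the child-path traversal of a hexagon survives in $Q_0$, the parent edge must itself have survived Steps 3 and 6. Combined with the observation that the direct cuts of Steps 2 and 5 are driven by $d(x,\cdot)$-thresholds which are consistent across a parent edge and the edges of its child path (since the endpoints of the parent edge are also endpoints of the child path), a short case analysis over the six vertices of the hexagon and the possible entry/exit configurations of the detour establishes that the needed shortcut edge lies in $Q_0$. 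Iterating this replacement argument (equivalently, strong induction on $t$) then yields the desired down-monotone witness path.
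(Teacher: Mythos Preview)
Your approach is essentially the same as the paper's: both arguments start from a witness chain in $Q_0$ and use the cut-propagation mechanism of Steps~3/4 (and~6/7) to show that an off-path excursion can be replaced by a shortcut edge in $Q_0$. The difference is only in the direction the argument runs. The paper finds the \emph{first} off-path vertex and argues top-down: assuming the shortcut $e_0$ at the top of the excursion is removed, it pushes the removal down through $e_1,e_2,\ldots$ to reach a contradiction at the bottom hexagon $C'_{m,l}$, whose child path lies entirely in the witness chain. You instead take a minimum-length chain and argue bottom-up: at the deepest visited hexagon the chain must traverse the full five-edge child path, and a single application of the contrapositive shows the parent edge survived, yielding a shorter chain. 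These are dual packagings of the same idea; your minimality framing is arguably cleaner since it needs only one level of the propagation instead of an inductive descent.

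Two points where your write-up needs more care. First, ``the unique hexagon containing the edge $(a_i,a_{i+1})$'' is ill-defined, since a parent edge belongs to both its hexagon and the parent hexagon; relatedly, a local maximum may be a plateau rather than a single index $i$. The right formulation is to pick a deepest hexagon visited by the chain that is not on the $B_u$--$B_v$ path in $\mathcal{T}$; since the chain is simple (by minimality) and cannot go deeper, its visit to that hexagon is exactly the five-edge child path between the two gateway vertices. Second, your contrapositive as stated only covers removal in Steps~3 and~6; you should also note that Step~4 (resp.~7) recursively propagates to child paths, so removal of the parent edge in Step~4 still forces a removed edge in its child path, and that Step~8 and the threshold-based Steps~2/5 transfer across parent and child path by the shared endpoints. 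The paper's proof is equally terse on these points, so this is a matter of completeness rather than a genuine gap.
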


\begin{proof}
Since $(u,v) \in Q$, we have that at the beginning of step $9$ there must have been a path $P_0 = \{a_1=u,a_2,\ldots,a_t=v\}$ such that for all $i \in \{1,2,\ldots,t-1\}$ we have $(a_i,a_{i+1}) \in Q$. If $P_0$ is monotone-down, we are done. Otherwise, we start with $P_0$ and modify it to obtain the desired $P$. Suppose that $P_0$ is not monotone-down. Let $X=(B_u,...,B_v)$ be the shortest path from $B_u$ to $B_v$ in $\cal{T}$. Let $A = \cup_{a\in B_i, B_i \in V(X)}a$. Since $P_0$ is not monotone-down, there exists $a_i \in V(P_0)$ such that $a_i \notin A$. Let $m \in \{1,2,\ldots,t\}$ be the smallest number such that $a_m$ has such property. Let $C'_{m,0}$ be the hexagon containing $a_{m-1}$ and $a_{m-2}$ (See Figure \ref{fig:monotone}). Let $a_s$ be the other neighbor of $a_{m-1}$ in $C'_{m,0}$, and let $e_0 = (a_{m-1},a_s)$. Let $C'_{m,1}$ be the next hexagon traversed by $P_0$ after $C'_{m,0}$, and let $e_1 \in C'_{m,1}$ be the other edge in $C'_{m,1}$ which is not traversed by $P_0$. We similarly define $C'_{m,2},\ldots,C'_{m,l}$, and $e_2,\ldots,e_{l-1}$. 

The main idea is that we are able to replace the subpath of $P_0$ from $a_{m-1}$ to $a_s$ with $e_0$. Suppose that we are not able to do such replacement, and thus $e_0$ is removed in the algorithm. Let $\Gamma = (a_{m-1},a_m,a_{m+1},a_{s-1},a_s)$. $\Gamma$ is a child of $e_0$. First suppose that $\Gamma$ is tight. In this case, since $e_0$ is removed, at least one of the edges of $\Gamma$ should be removed after step $3$, and thus $e_1$ should be removed by the algorithm. Second case is where $\Gamma$ is a slack. In this case, since $e_0$ is removed, at least one of the edges of $\Gamma$ should be removed after step $4$, and again we have that $e_1$ should be removed by the algorithm. Using the same argument inductively, we get that $e_{l-1}$ should be removed by the algorithm. But note that none of the edges of the child of $e_{l-1}$ in $C'_{m,l}$ are removed, which is a contradiction. Therefore, $e_0$ is not removed by the algorithm and we can replace the subpath of $P_0$ from $a_{m-1}$ to $a_s$ with $e_0$. Using the same argument, we can modify $P_0$ such that it only traverses vertices in $A$, as desired.
\end{proof}

\begin{figure}
\begin{center}
\scalebox{0.33}{\includegraphics{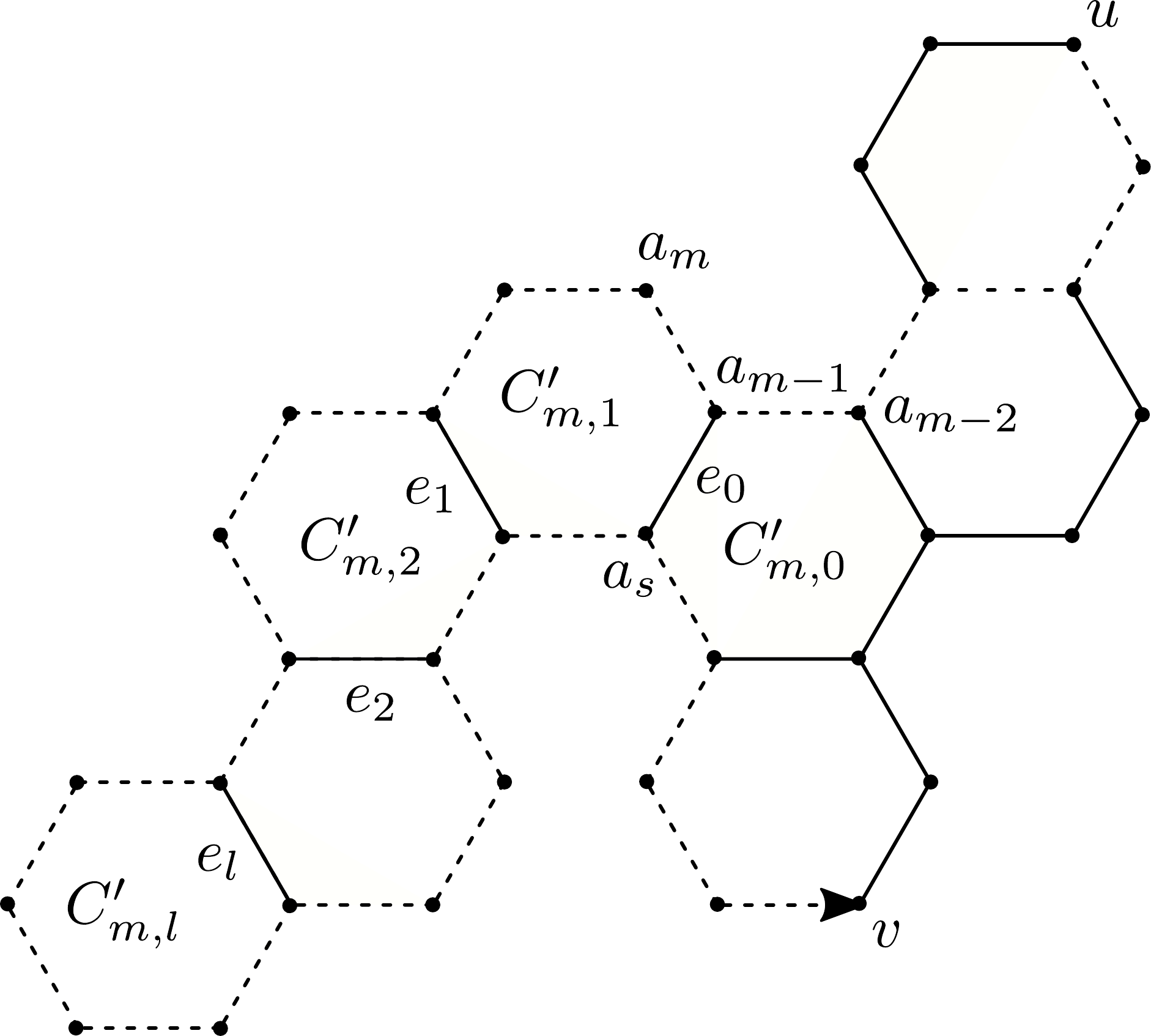}}
\caption{The path in Lemma \ref{lem:monotone_path_down}.}
\label{fig:monotone}
\end{center}
\end{figure}

 \begin{lemma}\label{lem:monotone_path_up}
Let $e=(u,v) \in Q$ where $u \in B_u$ and $v \in B_v$. Suppose that $B_v$ is in the path from $B^*$ to $B_u$ in $\cal{T}$. If $(u,v) \in Q$ then there exists a monotone-up path $P = \{a_1=u,a_2,\ldots,a_t=v\}$ in $G$, where for all $i \in \{1,2,\ldots,t-1\}$ we have $(a_i,a_{i+1}) \in Q$.
\end{lemma}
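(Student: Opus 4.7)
The plan is to mirror the proof of Lemma \ref{lem:monotone_path_down} with the roles of root and leaves reversed. The algorithm is designed so that Steps 5, 6, and 7 are precisely the ``upward'' analogs of Steps 2, 3, and 4, with the paths $q_j$ and their flattened complements $\overline{q_j}$ playing the role of $p_j$ and $\overline{p_j}$. In particular, the analogs of Lemmas \ref{lem:spines}, \ref{lem:spine and closure}, and \ref{lem:closure} hold verbatim for $\overline{q_j}$ (by reversing the direction in each of those arguments), which guarantees that a cut made in Step 6 on a parent edge is witnessed by the same threshold crossing along any $\overline{q_j}$ whose complementary structure contains that edge.

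Now I would proceed as follows. Since $(u,v) \in Q$, at the start of Step 9 there must exist a path $P_0 = (a_1 = u, a_2, \ldots, a_t = v)$ with $(a_i, a_{i+1}) \in Q$ for every $i$. If $P_0$ is already monotone-up, we are done. Otherwise, let $X = (B_u, \ldots, B_v)$ be the path in ${\cal T}$ from $B_u$ to $B_v$ (which is monotone-up toward the root), let $A = \bigcup_{B \in V(X)} B$, and let $m$ be the smallest index with $a_m \notin A$. As in the proof of Lemma \ref{lem:monotone_path_down}, this produces a detour through a sequence of hexagons $C'_{m,0}, C'_{m,1}, \ldots, C'_{m,l}$, and I would identify the candidate shortcut edge $e_0 = (a_{m-1}, a_s)$ in $C'_{m,0}$, whose child path is the detour.

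The core step is to show $e_0 \in Q$, arguing by contradiction exactly as before. If $e_0$ were removed, then letting $\Gamma$ be its detour child path, two cases occur. When $\Gamma$ is tight, $\Gamma \subseteq \overline{q_j}$ for any $q_j$ whose flattened complement contains $e_0$; hence the Step~6 threshold crossing along $\overline{q_j}$ that removed $e_0$ also separates the endpoints of some edge in $\Gamma$, forcing the removal of $e_1$. When $\Gamma$ is slack, Step~7 randomly propagates the cut of $e_0$ to some edge of $\Gamma$, which again forces the removal of $e_1$. Iterating through $C'_{m,1}, \ldots, C'_{m,l}$ forces $e_{l-1}$ to be removed, which propagates a removal into the child of $e_{l-1}$ in $C'_{m,l}$, contradicting $P_0 \subseteq Q$. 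Therefore $e_0 \in Q$, and we may replace the subpath of $P_0$ from $a_{m-1}$ to $a_s$ by the single edge $e_0$, strictly reducing the number of vertices lying outside $A$. Repeating this modification produces the desired monotone-up path $P$.

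The main obstacle, as in the downward case, is correctly tracking the propagation through tight paths, since it is the propagation via the $\overline{q_j}$ distance thresholds of Step~6 (rather than direct random sampling) that handles tight children. This requires the upward analogs of Lemmas \ref{lem:spines}--\ref{lem:closure}, which I would state explicitly and observe follow by reversing the direction in the proofs: the uniqueness of $q_i = p_{l_i x}$, together with the fact that Lemma \ref{lem:tight} is symmetric under direction reversal (a $q_j$ traversing a tight parent edge cannot visit an internal vertex of that tight child), gives $d_{\overline{q_i}}(u, x) = d_{\overline{q_j}}(u, x)$ and $d_{\overline{q_i}}(v, x) = d_{\overline{q_j}}(v, x)$ whenever $(u,v)$ lies in both $\overline{q_i}$ and $\overline{q_j}$, exactly as needed to justify the Step~6 propagation.
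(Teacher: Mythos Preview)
Your proposal is correct and takes essentially the same approach as the paper, which simply states ``A similar argument as in Lemma \ref{lem:monotone_path_down} applies here.'' In fact you have supplied considerably more detail than the paper itself, correctly identifying that Steps 5--7 (with $q_j$ and $\overline{q_j}$) are the upward analogs of Steps 2--4 and that the direction-reversed versions of Lemmas \ref{lem:spines}--\ref{lem:closure} underpin the propagation argument.
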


\begin{proof}
A similar argument as in Lemma \ref{lem:monotone_path_down} applies here.
\end{proof}

\begin{lemma}\label{lem:monotone_up}
Let $e=(u,v) \in Q$ where $u \in B_u$ and $v \in B_v$. Suppose that $B_u$ is in the path from $B^*$ to $B_v$ in $\cal{T}$. If $(u,v) \in Q$ then we have $d(u,v) \leq 3r$.
\end{lemma}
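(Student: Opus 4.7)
The plan is to apply Lemma~\ref{lem:monotone_path_down} to obtain a monotone-down path $P = (u = a_1, a_2, \ldots, a_t = v)$ in $G$ with every consecutive pair $(a_i, a_{i+1}) \in Q$ (at the state just before the transitive closure in Step~9). Since $P$ is itself a $u$-to-$v$ walk in $G$, it suffices to show $\len(P) \leq 3r$, which immediately gives $d(u,v) \leq \len(P) \leq 3r$.

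First I would use the fact that every edge of $P$ survives Step~2: if $d(x, a_{i+1}) > d(x, a_i)$, then both distances must lie in the same slab $(jr+z, (j+1)r+z]$. Consequently the slab index $\lfloor (d(x, a_i) - z)/r \rfloor$ is non-increasing in $i$ along $P$, which gives $d(x, a_i) \leq d(x, u) + r$ for all $i$. Symmetrically, survivorship of Step~5 yields $d(a_i, x) \geq d(u, x) - r$ for all $i$.

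Next I would decompose $\len(P)$ hexagon by hexagon. Since $P$ is monotone-down in $\mathcal{T}$, it visits a sequence of hexagons $B_u = B_{i_1}, B_{i_2}, \ldots, B_{i_m} = B_v$ in non-decreasing depth order; within each hexagon $P$ uses either the ``forward'' (parent-edge-side) or the ``backward'' (long-way-around) directed route between its entry and exit vertices. The forward portions contribute at most $r$ to $\len(P)$ by the Step~2 slab argument applied to $d(x, \cdot)$. For the backward portions I would fix a leaf $B_j$ of $\mathcal{T}$ whose root-to-leaf path passes through $B_v$, and invoke Step~3 with the flattened complementary graph $\overline{p_j}$, whose slab partitioning on $d_{\overline{p_j}}(x, \cdot)$ is consistent across the relevant hexagons by Lemmas~\ref{lem:spines}, \ref{lem:spine and closure}, and~\ref{lem:closure}; this contributes at most another $r$. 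A final additive $r$ absorbs at most one boundary-straddling hexagon together with a single edge of length up to $r/10$ permitted by Step~8. Summing these three contributions yields $\len(P) \leq 3r$.

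The main obstacle is the hexagon-by-hexagon bookkeeping: one must carefully pair each backward traversal of $P$ with an appropriate complementary path $\overline{p_j}$ so that the Step~3 slab constraints apply coherently, and handle hexagons where $P$ transitions between forward and backward routes. Once this technical pairing is set up and Lemmas~\ref{lem:spines}--\ref{lem:closure} are used to verify consistency of the complementary-distance slabs, the combined slab bounds give $\len(P) \leq 3r$, concluding $d(u,v) \leq 3r$.
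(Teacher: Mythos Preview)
Your high-level plan---obtain the monotone-down path $P$ from Lemma~\ref{lem:monotone_path_down}, fix a leaf index $j$ so that the root-to-$B_v$ path in $\mathcal T$ lies inside $P_j$, and then appeal to the slab structure of Steps~2 and~3---matches the paper. The execution, however, diverges in a way that leaves a real gap.

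You try to prove the stronger statement $\len(P)\le 3r$ by a hexagon-by-hexagon ``forward/backward'' accounting. The problem is that the slab constraints you invoke only control the \emph{net} change of $d(x,\cdot)$ (respectively $d_{\overline{p_j}}(x,\cdot)$) along $P$: survival of Step~2 says the slab index of $d(x,a_i)$ is non-increasing, hence $d(x,a_t)\le d(x,a_1)+r$, but this does \emph{not} bound $\sum_i w(a_i,a_{i+1})$ over the ``forward'' edges. A monotone-down path (monotone with respect to depth in $\mathcal T$, not with respect to $d(x,\cdot)$) can in principle alternate between segments on $p_j$ and segments on $\hat p_j$ many times; each alternation lets $d(x,\cdot)$ drop and then climb again within the same slab, so the gross forward length need not be $\le r$. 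Your ``final additive $r$ for one boundary-straddling hexagon'' does not address this, and the bookkeeping you flag as the main obstacle is exactly where the argument breaks.

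The paper avoids this difficulty by \emph{not} bounding $\len(P)$. Instead it picks two landmark vertices: $a_f$ is the first vertex of $P$ lying on $p_j$, and $a_\ell$ is the last. Then $d(u,v)\le d(a_1,a_f)+d(a_f,a_\ell)+d(a_\ell,a_t)$ by the triangle inequality, and each summand is bounded by $r$ separately. For the middle piece, $a_f,a_\ell\in p_j$ means $d(a_f,a_\ell)=d(x,a_\ell)-d(x,a_f)$, and since Step~2 keeps the $d(x,\cdot)$-slab index non-increasing along $P$ while $d(x,\cdot)$ increases along $p_j$, both lie in one slab, giving $d(a_f,a_\ell)\le r$. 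For the prefix $a_1,\dots,a_f$ (and symmetrically the suffix), all vertices lie off $p_j$, hence on $\hat p_j\subset\overline{p_j}$; the prefix is then an honest subpath of $\hat p_j$, so its length equals $d_{\overline{p_j}}(x,a_f)-d_{\overline{p_j}}(x,a_1)$, and the Step~3/4 slab survives force this to be $\le r$. The split at $a_f,a_\ell$ is precisely what lets the slab arguments telescope; your proposal is missing this split.
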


\begin{proof}
Let $P = \{a_1=u,a_2,\ldots,a_t=v\}$ be the monotone-down path obtained by Lemma \ref{lem:monotone_path_down}.
Suppose that the path from $B^*$ to $B_v$ in $\cal{T}$, is a subpath of $P_j$ for some $j \in \{1,2,\ldots,k\}$, and let $p_j$ be the corresponding path in $G$. Let $f = \min_{a_i \in P \cap p_j} i$, and let $l = \max_{a_i \in P \cap p_j} i$. Let $P'=\{a_1,a_2,\ldots,a_f\}$, $P''=\{a_f,a_{f+1},\ldots,a_l\}$, and $P'''=\{a_l,a_{l+1},\ldots,a_t\}$. For any $i \in \{1,2,\ldots,f-1\}$ we have that $(a_i,a_{i+1})$ is not removed after step $4$. This implies that $d(a_1,a_f) \leq r$. With a similar argument, we can show that $d(a_l,a_t) \leq r$. Finally, for any $i \in \{f,f+1,\ldots,l-1\}$ we have that $(a_i,a_{i+1})$ is not removed after step $2$, and thus we have $d(a_f,a_l) \leq r$. Therefore by applying triangle inequality, we get $d(a_1,a_t) \leq 3r$, as required.
\end{proof}

\begin{lemma}\label{lem:monotone_down}
Let $e=(u,v) \in Q$ where $u \in B_u$ and $v \in B_v$. Suppose that $B_v$ is in the path from $B^*$ to $B_u$ in $\cal{T}$. If $(u,v) \in Q$ then we have $d(u,v) \leq 3r$.
\end{lemma}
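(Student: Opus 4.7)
The plan is to mirror the proof of Lemma \ref{lem:monotone_up} by exchanging the roles of the ``downward'' steps (Steps 2, 3, 4) with their ``upward'' counterparts (Steps 5, 6, 7), and to replace the paths $p_j$ and $\overline{p_j}$ with $q_j$ and $\overline{q_j}$. Since $B_v$ is the ancestor of $B_u$ in $\cal{T}$, the tree path from $B^*$ to $B_u$ passes through $B_v$ and is therefore a prefix of some root-to-leaf path $P_j$ in $\cal{T}$; the associated path in $G$ is $q_j = p_{l_j x}$, which is a shortest path \emph{towards} $x$. This is exactly the structural setting in which Steps 5--7 were designed to apply.

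First, I would apply Lemma \ref{lem:monotone_path_up} to obtain a monotone-up path $P = \{a_1 = u, a_2, \ldots, a_t = v\}$ in $G$ such that $(a_i, a_{i+1}) \in Q$ for every $i$. Next, fix the index $j$ described above, and set $f = \min_{a_i \in P \cap q_j} i$ and $l = \max_{a_i \in P \cap q_j} i$. This decomposes $P$ into three consecutive subpaths $P' = (a_1, \ldots, a_f)$, $P'' = (a_f, \ldots, a_l)$, and $P''' = (a_l, \ldots, a_t)$.

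The goal is then to bound each of the three subpath lengths by $r$, and conclude $d(u,v) \le 3r$ by the triangle inequality. For $P''$, every consecutive edge lies on $q_j$ itself, so had $d(a_f, a_l)$ exceeded $r$ some edge of $P''$ would have been removed in Step 5 as the prefix sum $d(\cdot, x)$ crossed a random threshold $ir + z$; hence $d(a_f, a_l) \le r$. For $P'$ and $P'''$, the edges lie in child paths that (after closure) belong to $\overline{q_j}$, and the same threshold-crossing reasoning, applied now to the distances $d_{\overline{q_j}}(\cdot, x)$, shows that an edge in these segments would be removed either directly in Step 6 (if it lies in $\overline{q_j}$ itself) or via the recursive propagation of Step 7 (if it lies in a child path of an edge in $\overline{q_j}$). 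Since no such edge was removed, $d(a_1, a_f) \le r$ and $d(a_l, a_t) \le r$.

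The main subtlety, exactly as in Lemma \ref{lem:monotone_up}, is that the ``not removed in Step 6'' event must be interpreted with respect to a single $\overline{q_j}$ rather than the union over all $j$, since an edge in $\overline{q_j}$ may also belong to $\overline{q_{j'}}$ for other $j'$. This is precisely where the consistency statements of Lemmas \ref{lem:spines}, \ref{lem:spine and closure}, and \ref{lem:closure} were used in the ``downward'' analysis; the same lemmas apply by symmetry to the $\overline{q_j}$'s (their proofs only relied on the tree-of-hexagons structure and the uniqueness of shortest paths, both symmetric under edge reversal), so the single-$j$ interpretation is justified and the argument goes through.
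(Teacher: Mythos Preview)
Your proposal is correct and is precisely the symmetric argument the paper intends; the paper's own proof of this lemma is simply ``A similar argument as in Lemma \ref{lem:monotone_up} applies here.'' One minor imprecision: the edges of $P''$ need not literally lie on $q_j$ (they lie on the monotone-up path $P$), but the conclusion $d(a_f,a_l)\le r$ still follows just from the fact that none of them were removed in Step 5, exactly as the paper uses Step 2 for the middle segment in Lemma \ref{lem:monotone_up}.
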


\begin{proof}
A similar argument as in Lemma \ref{lem:monotone_up} applies here.
\end{proof}

\begin{lemma}\label{lem:rbounded}
If $e=(u,v) \in Q$ then we have $d(u,v) \leq 6r$.
\end{lemma}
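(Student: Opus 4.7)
The plan is to reduce the general case to the two monotone situations that Lemmas~\ref{lem:monotone_up} and~\ref{lem:monotone_down} already handle. The idea is to split the pair $(u,v)$ at a vertex $w$ lying in the lowest common ancestor (LCA) bubble of $B_u$ and $B_v$ in the rooted tree decomposition $({\cal T},{\cal X})$, and bound the two halves separately.

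First I would use Step~9 of the algorithm: because $Q$ has been transitively closed, the fact that $(u,v)\in Q$ implies the existence of a sequence of vertices $u=a_1,a_2,\ldots,a_t=v$ with each consecutive pair $(a_i,a_{i+1})$ belonging to $Q$ just before Step~9 was executed. Each such $(a_i,a_{i+1})$ is then an edge of $G$, so these vertices trace a walk $W$ from $u$ to $v$ in $G$.

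Next, I would fix bubbles $B_u \ni u$ and $B_v \ni v$ and let $B_w$ be their LCA in $\cal T$. Standard tree-decomposition properties (the bubbles containing any fixed vertex form a connected subtree of $\cal T$, and every edge of $G$ is covered by some bubble) imply that the set of bubbles meeting $W$ forms a connected subtree of $\cal T$ containing both $B_u$ and $B_v$, hence also $B_w$. Thus some $a_j$ on $W$ satisfies $a_j \in B_w$; set $w=a_j$. Applying the transitive closure to the sub-walks $a_1,\ldots,a_j$ and $a_j,\ldots,a_t$ then yields $(u,w)\in Q$ and $(w,v)\in Q$.

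Finally, since $B_w$ is an ancestor of $B_u$ in $\cal T$, Lemma~\ref{lem:monotone_down} applied to $(u,w)$ gives $d(u,w)\leq 3r$. Symmetrically, $B_w$ is an ancestor of $B_v$, so Lemma~\ref{lem:monotone_up} applied to $(w,v)$ gives $d(w,v)\leq 3r$. The triangle inequality then yields $d(u,v)\leq d(u,w)+d(w,v)\leq 6r$, as claimed. The main subtlety is isolating a vertex $w$ on the walk $W$ that lies in $B_w$; this is a clean consequence of the tree decomposition axioms, but worth spelling out explicitly rather than leaving implicit, since everything downstream hinges on it.
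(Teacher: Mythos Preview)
Your argument is correct and follows essentially the same approach as the paper: split the pair $(u,v)$ at a vertex $w$ lying in an ancestor bubble of both $B_u$ and $B_v$, then apply Lemmas~\ref{lem:monotone_down} and~\ref{lem:monotone_up} to the two halves and conclude via the triangle inequality. The only cosmetic difference is the choice of splitting bubble: the paper takes $B_w$ to be the minimum-level bubble touched by the witnessing walk $P$, whereas you take $B_w$ to be the LCA of $B_u$ and $B_v$; both choices are ancestors of $B_u$ and $B_v$ and both intersect $V(P)$, so either works.
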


\begin{proof}
Since $(u,v) \in Q$, we have that at the beginning of step $8$ there must have been a path $P = \{a_1=u,a_2,\ldots,a_t=v\}$ such that for all $i \in \{1,2,\ldots,t-1\}$ we have $(a_i,a_{i+1}) \in Q$. Suppose that $u \in B_u$ and $v \in B_v$ for some $B_u,B_v \in \cal{X}$. Let $B_w \in \cal{X}$ be a vertex with minimum level that has non-empty intersection with $V(P)$. Let $w \in B_w \cap V(P)$ be an arbitrary vertex. By the construction, we have that $B_w$ is in the paths from $B^*$ to $B_v$ and $B_u$. Moreover, we have $(u,w) \in Q$ and $(w,v) \in Q$. Therefore, by Lemmas \ref{lem:monotone_up} and \ref{lem:monotone_down}, we have that $d(u,w) \leq 3r$ and $d(w,v) \leq 3r$. This implies that $d(u,v) \leq 6r$, as desired.
\end{proof}

\begin{proof}[Proof of Theorem \ref{thm:tw2}]
The proof follows by combining Lemmas \ref{lem:rbounded} and \ref{lem:edge_probtw2}.
\end{proof}

\section{Lipschitz quasipartitions for bounded pathwidth digraphs}\label{sec:pathwidth}
In this Section we provide a proof for Theorem \ref{thm:boundedpathwidth}. We
present an efficient algorithm for computing a random quasipartition of a directed graph with bounded pathwidth.
We begin by describing a specific family of graphs of bounded pathwidth, which we refer to as \emph{path of cliques}.
We show that any graph of pathwidth $k$ admits an isometric embedding into a path of $k$-cliques.
We then describe an algorithm to obtain a random quasipartition of a path of $k$-cliques. Finally, we present the analysis of the aforementioned algorithm.

\subsection{Isometric embedding for bounded pathwidth digraphs into path of cliques}

\paragraph{Path of cliques}
We call a digraph $G$ a \emph{path of $k$-cliques} 
if there exists a collection of pairwise disjoint subsets of $V(G)$ $S_1,S_2, \ldots S_l$ such that the following conditions hold:
\begin{description}
\item{1)} $S_1 \cup S_2 \cup \ldots \cup S_l = V(G)$.
\item{2)} $|S_i| = k$ for any integer $i \in \{ 1,\ldots,l \}$.
\item{3)} $E(G) = \{ (u,v),(v,u)| u,v \in S_i \cup S_{i+1}, i\in \{ 1,\ldots,l-1 \} \}$.
\end{description}

We refer to the subsets $S_1,S_2, \ldots S_l$ as cliques. We call an edge $(u,v) \in E(G)$ a \emph{vertical edge} if $u,v \in S_i$ for some $i\in \{ 1,\ldots,l \}$. We call all other edges in $E(G)$ \emph{horizontal edges}. Furthermore if $u \in S_i$ and $v$ in $S_{i+1}$ for some $i\in \{ 1,\ldots,l \}$ we call the horizontal edge $(u,v)$ \emph{left horizontal}. Otherwise if $v \in S_i$ and $u \in S_{i+1}$ we call the horizontal edge $(u,v)$ \emph{right horizontal}. Let $p$ be a directed path in $G$ from $u$ to $v$ where $u,v \in V(G)$. We define $d_G(p) = d_G(u,v)$. Let $x,y \in V(G)$ be vertices that are traversed by $p$ in that order. Then we denote the sub-path of $p$ from $x$ to $y$ by $p(x,y)$. 
We define the \emph{width} of $p$ to be $\max\limits_{i \in \{ 1,\ldots,l \}}{|p \cap S_i|}$. 

\begin{theorem}
Let $G$ be a weighted digraph of pathwidth $k$. Then $G$ embeds isometrically into a path of $(k+1)$-cliques. 
\end{theorem}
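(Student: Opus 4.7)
The plan is to take a path decomposition $(B_1, \ldots, B_l)$ of the underlying undirected graph $G^{\UN}$ with $|B_i| \leq k+1$, and to build $H$ by creating, for every $v \in V(G)$, one fresh copy $v^{(i)}$ for each bag $B_i$ containing $v$. Grouping these copies into disjoint sets $S_i = \{v^{(i)} : v \in B_i\}$ and padding each $S_i$ with inert dummy vertices so that $|S_i| = k+1$ yields, by fiat, a path of $(k+1)$-cliques in the sense of the definition just given. The path-decomposition property that each index set $I_v := \{i : v \in B_i\}$ is a contiguous interval is precisely what will make the construction work: the copies of $v$ occupy a consecutive run of cliques, so the horizontal structure of $H$ lets me slide freely from one copy of $v$ to another.

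The weights on $H$ are set to mimic $d_G$. For consecutive copies $v^{(i)}$ and $v^{(i+1)}$ of the same vertex I set the weight of both directed edges to $0$. For two distinct original vertices $u,v$ and indices $i,j$ with $|i-j| \leq 1$ and $u^{(i)}, v^{(j)} \in V(H)$, the edge $(u^{(i)}, v^{(j)})$ receives weight $d_G(u,v)$. Edges incident to any dummy vertex get weight $+\infty$. The embedding is $f(v) = v^{(\min I_v)}$, though any choice of copy yields the same point in the shortest-path quasimetric of $H$, since all copies of $v$ lie at distance $0$ from one another.

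To show $f$ is isometric I verify both inequalities. For the upper bound $d_H(f(u), f(v)) \leq d_G(u,v)$, take a shortest $u$-to-$v$ path $u = v_0, v_1, \ldots, v_m = v$ in $G$. Since each $(v_j, v_{j+1}) \in E(G)$, the pair lies in some common bag $B_{i_j}$. I then splice the following walk in $H$: start from $u^{(\min I_u)}$, slide along zero-weight edges to $v_0^{(i_0)}$, take the intra-clique edge to $v_1^{(i_0)}$ of weight $d_G(v_0, v_1)$, slide along zero-weight edges to $v_1^{(i_1)}$, and so on. The total weight is at most $\sum_j d_G(v_j, v_{j+1}) = d_G(u,v)$. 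For the lower bound, every finite-weight edge in $H$ has weight equal to $d_G$ between the underlying original vertices (with copies of the same vertex contributing $0 = d_G(v,v)$), so projecting any $f(u)$-to-$f(v)$ walk in $H$ to the original vertices and summing the weights yields, by the triangle inequality in $(V(G), d_G)$, a value at least $d_G(u,v)$.

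The main obstacle is the horizontal-sliding step of the upper-bound argument: I must ensure that if $v_j$ sits in both $B_{i_{j-1}}$ and $B_{i_j}$, then the copies $v_j^{(i_{j-1})}$ and $v_j^{(i_j)}$ are actually connected in $H$ by a chain of zero-weight edges. This uses the interval property of $I_{v_j}$ in an essential way: every index between $i_{j-1}$ and $i_j$ also belongs to $I_{v_j}$, so the intermediate copies $v_j^{(i)}$ all exist and lie in consecutive cliques of $H$, and are therefore connected by the defining edge set of the path of cliques. Once this interval-connectivity step is in hand, everything else is routine checking of definitions.
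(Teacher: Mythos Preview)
Your proof is correct and follows essentially the same approach as the paper: build $H$ by placing one copy of each vertex into every bag containing it, weight all edges of $H$ by the $d_G$-distance between the underlying vertices (so consecutive copies of the same vertex are joined by weight-$0$ edges), and verify both inequalities exactly as you do---non-contraction because every edge weight in $H$ is a genuine $d_G$-distance, non-expansion by tracing a shortest $G$-path through $H$ and sliding between copies via the interval property. The only cosmetic difference is in how bags are padded to size $k+1$: the paper borrows vertices from adjacent bags (which preserves the interval property), whereas you introduce inert dummy vertices with $+\infty$-weight edges; both devices work equally well.
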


\begin{proof}
Let $(X,P)$ denote a path decomposition of $G$ with pathwidth $k$ where $X = \{X_1 , \ldots X_l \}$ is a collection of subsets of $V(G)$. First we observe that from the definition of pathwidth we have that $|X_i| \leq k+1$ for all $i \in \{ 1,\ldots,l \}$. W.l.o.g we may assume that $|X_i| = k+1$ for all $i$ since otherwise we can add vertices to $X_i$ from either $X_{i+1}$ or $X_{i-1}$ and we know that there exists some $X_j $ such that $|X_j| = k+1$. Now we construct a path of cliques $H$ as follows. We initialize $V(H)$ and $E(H)$ to be $\emptyset$. For all $i$ and for each vertex $u \in V(G)$ in $X_i$ we add to $V(H)$ a unique vertex $u_i$. We call $u$ the parent of $u_i$ in $G$ and denote this by $u = p(u_i)$. Next we define the sets $S_i = {u_i| u \in X_i}$. Clearly $S_1 \cup S_2 \cup \ldots \cup S_l = V(H)$. Next for all $i$ and for all $x,y \in X_i \cup X_{i+1}$ we add the directed edges $(x,y)$ and $(y,x)$ to $E(H)$ setting their weights to be $d_G(p(x),p(y))$ and $d_G(p(y),p(x))$ respectively. Now we are ready to define the isometric mapping $\phi: V(G) \to V(H)$. For each $u \in V(G)$ we pick arbitrarily some $v \in V(H)$ such that $p(v) = u$ and set $\phi(u) = v$. This gives us the desired mapping. Now we observe that every edge in $H$ has weight equal to the shortest path distance between the corresponding parent vertices in $G$. This implies that distances don't contract in the embedding. Consider any directed path $q = \{ a_1, \ldots, a_m \}$ in $G$. For any $u,v \in V(H)$ such that $p(u) = p(v)$ we have that there exist directed paths of length $0$ from $u$ to $v$ and from $v$ to $u$. This follows due to the fact that $H$ is constructed from a path decomposition of $G$. For the same reason it must also be that for every directed edge $(a_i,a_{i+1})$ in $q$ there exist $u,v \in V(H)$ such that $p(u)= a_i$ and $p(v) = a_{i+1}$ and $u,v$ both belong to $X_j$ for some $j$. This implies that there is a corresponding directed edge $(u,v) \in E(H)$. Therefore there is a corresponding directed path in $H$ with the same length as $q$. So it follows that $\phi$ does not dilate distances. Therefore $\phi$ is an isometric embedding.  
\end{proof}

\subsection{Algorithm}\label{alg:boundedpathwidth}
\begin{description}
\item{\textbf{Input:}}
A graph $G$ of pathwidth $k-1$ and a corresponding path of cliques $B_1,B_2,\cdots, B_l$, where for each $i \in \{1,2,\cdots,l\}$ we have $|B_i| = k$.
\item{\textbf{Output:}}
Random quasipartition $Q$ of the shortest-path quasimetric space of $G$.
\item{\textbf{Initialization.}}
Set $G^*=G$, $Q=E(G)$, and $i=1$.
\item{\textbf{Step 1.}}
Pick $z \in [0,r]$ uniformly at random.
\item{\textbf{Step 2.}} 
Pick $v_{1,i} \in B_1$ such that $B_l$ is reachable from $v_{1,i}$ in $G^*$ (There exists a directed path from $v_{1,i}$ to $B_l$ in $G^*$). Let $s_i = v_{1,i}$, and let $t_i \in B_l$ be such that $t_i$ is reachable from $s_i$. Let $p_i$ be a shortest path from $s_i$ to $t_i$ in $G^*$.
\item{\textbf{Step 3.}} 
For all $(u,v) \in E(G^*)$ remove $(u,v)$ from $Q$ if $d_{G^*}(s_i,v)> j\cdot r + z$ and $d_{G^*}(s_i,u) \leq j\cdot r + z$ for some integer $j \geq 0$.
\item{\textbf{Step 4.}}
For any $(u,v) \in p_i$, if $(u,v)$ is horizontal in $G$ then delete $(u,v)$ from $G^*$, and set $i = i+1$. If $i \leq k^2$, go back to Step 2.
\item{\textbf{Step 5.}}
Set $G^* = G$, and $i = 1$.
\item{\textbf{Step 6.}} 
Pick $v_{l,i} \in B_l$ such that $B_1$ is reachable from $v_{l,i}$ in $G^*$ (There exists a directed path from $v_{l,i}$ to $B_1$ in $G^*$). Let $s'_i = v_{l,i}$, and let $t'_i \in B_1$ be such that $t'_i$ is reachable from $s'_i$. Let $q_i$ be a shortest path from $s'_i$ to $t'_i$ in $G^*$.
\item{\textbf{Step 7.}} 
For all $(u,v) \in E(G^*)$ remove $(u,v)$ from $Q$ if $d_{G^*}(s'_i,v)> j\cdot r + z$ and $d_{G^*}(s'_i,u) \leq j\cdot r + z$ for some integer $j \geq 0$.
\item{\textbf{Step 8.}} 
For any $(u,v) \in q_j$, if $(u,v)$ is horizontal in $G$ then delete $(u,v)$ from $G^*$, and set $i = i+1$. If $i \leq k^2$, go back to Step 6.
\item{\textbf{Step 9.}}
Remove every $(u,v) \in E(G)$ from $Q$ with $d(u,v) \geq r$.
\item{\textbf{Step 10.}}
Let $G'$ be the subgraph of $G$ with $V(G') = V(G)$ and $E(G') = Q$.
\item{\textbf{Step 11.}} 
Enforce transitivity on $Q$; that is, for all $u,v,w \in V(G)$ if $(u,v) \in Q$ and $(v,w) \in Q$ then add $(u,w)$ to $Q$.
\end{description}

\subsection{Analysis of the algorithm}
We call an edge in $G'$ \emph{horizontal} (resp. \emph{vertical}) if the corresponding edge in $G$ is \emph{horizontal} (resp. \emph{vertical}).
For all $c \in \{1,\ldots,k\}$,$i \in \{1,\ldots,k^2\}$ and $j \in \{1,\ldots,k^2\}$ we define $P_{c,i,j}$ to be the set of all directed paths in $G'$ with width $c$ that do not traverse any horizontal edge in $p_x$ and in $q_y$ for all $x < i$ and $y < j$ and traverse at least one horizontal edge in $p_i$ and $q_j$.

\begin{lemma}
Let $(u,v) \in Q$  after Step 11 of the algorithm. Then there exists a directed path from $u$ to $v$ in $G'$.
\end{lemma}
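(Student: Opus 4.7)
The plan is to argue by induction on the order in which pairs are inserted into $Q$ during the transitive closure process of Step 11. First I would make the interpretation of Step 11 explicit: we regard it as iteratively adding a pair $(u,w)$ whenever there exist $v \in V(G)$ with $(u,v), (v,w) \in Q$, and we continue until no new pairs can be added. Enumerate the added pairs as $\pi_1,\pi_2,\ldots$ in the order of insertion, and let $Q_t$ denote the state of $Q$ after the $t$-th insertion, with $Q_0$ being the state of $Q$ at the end of Step 10. Note that $E(G') = Q_0$ by the definition of $G'$ in Step 10.

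For the base case, if $(u,v) \in Q_0$, then $(u,v) \in E(G')$, so the single edge $(u,v)$ is a directed path from $u$ to $v$ in $G'$. For the inductive step, suppose the claim holds for all pairs in $Q_{t-1}$, and consider the pair $\pi_t = (u,w)$ added at step $t$. By the rule of Step 11, this happens because there is some $v \in V(G)$ with $(u,v), (v,w) \in Q_{t-1}$. By the inductive hypothesis there exist directed paths $P_1$ from $u$ to $v$ and $P_2$ from $v$ to $w$, both contained in $G'$. Concatenating $P_1$ and $P_2$ yields a directed walk from $u$ to $w$ in $G'$, and extracting a simple path from this walk gives the desired directed path from $u$ to $w$ in $G'$.

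Since every pair in $Q$ after Step 11 is either already present in $Q_0 = E(G')$ or is added at some finite step $t \geq 1$, induction covers all of them, which proves the lemma. I do not expect a genuine obstacle here: the statement is essentially that the transitive closure of a relation defined by the edges of $G'$ coincides with the reachability relation of $G'$, a standard fact. The only point requiring a little care is to fix an explicit insertion order so the induction is well-defined, which is why I would state Step 11 in its iterative form at the start of the proof.
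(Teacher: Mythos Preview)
Your proposal is correct and follows essentially the same approach as the paper: both arguments establish that the transitive closure of the edge set $Q_0 = E(G')$ coincides with reachability in $G'$. The paper's proof is slightly less formal---it simply asserts that if $(u,v)\in Q$ after Step~11 then there must be a sequence $a_1=u,a_2,\ldots,a_m=v$ with each $(a_i,a_{i+1})\in Q$ after Step~9 (hence in $E(G')$)---whereas you spell out the induction on insertion order explicitly, but the content is the same.
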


\begin{proof}
We have that $(u,v) \in Q$  after Step 11 of the algorithm. Combined with the fact that step 11 enforces transitivity on $Q$ this implies that there are two possibilities. The first case is that $(u,v) \in E(G)$ and $(u,v)$ is not removed from $Q$ by the algorithm in steps $3$ or $7$. The other possibility is that it must have been that there was a sequence of vertices $a_1 =u, a_2, \ldots, a_m = v$ such that for all $i<m$ $(a_i,a_{i+1}) \in E(G)$ and $(a_i,a_{i+1}) \in Q$  after step $9$ of the algorithm. This implies that the corresponding directed edges are present in $G'$. Therefore the directed path $P = \{ a_1,\ldots, a_m \}$ is also present in $G'$.
\end{proof}

\begin{lemma}\label{lem:horizontal_edge}
Let $e \in E(G)$ be a horizontal edge. Then there exists $j \in \{1,2,\ldots,k^2\}$ such that either $e \in p_j$ or $e \in q_j$.
\end{lemma}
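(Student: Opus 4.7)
The plan is a pigeonhole/counting argument matching the $k^2$ iterations against the $k^2$ horizontal edges crossing each pair of consecutive cliques. By the definition of a path of $k$-cliques, between any two consecutive cliques $B_i$ and $B_{i+1}$ there are exactly $k^2$ left-horizontal edges and exactly $k^2$ right-horizontal edges. Since every $p_j$ can only use left-horizontal edges to cross gaps and every $q_j$ can only use right-horizontal ones, it suffices to show: (i) each $p_j$ can be chosen as a shortest path that uses exactly one left-horizontal edge between $B_i$ and $B_{i+1}$ for every $i$; (ii) the left-to-right phase completes all $k^2$ iterations without getting stuck; and the symmetric statements for the right-to-left phase.

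For (i), the key observation is that the path of cliques is obtained from the input graph by an \emph{isometric} embedding whose edge weights equal the shortest-path distances $d_G$ between the parent vertices. Thus the edge weights themselves satisfy the triangle inequality, and for any two vertices $x,y$ in the same or adjacent cliques the direct edge $(x,y)$ already realizes $d_{G^*}(x,y)$. Suppose now that a shortest $s_j$-to-$t_j$ path enters $B_{i+1}$ via a left-horizontal edge at some $v_1$, backtracks into $B_i$ (or further left), and later re-enters $B_{i+1}$ via a second left-horizontal edge at some $v_2$. Replacing the entire $v_1$-to-$v_2$ detour by the direct vertical edge $(v_1,v_2)$ inside $B_{i+1}$, which has weight $d_{G^*}(v_1,v_2)$ and is always present since vertical edges are never removed by the algorithm, yields a path of weight no larger than the original and with strictly fewer horizontal edges. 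Selecting, among all shortest $s_j$-to-$t_j$ paths in the current $G^*$, one that uses the fewest horizontal edges therefore rules out all such backtracks and gives a $p_j$ that uses exactly one left-horizontal edge per gap.

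For (ii), I proceed by induction on the iteration index. After $m < k^2$ successful iterations, (i) implies that exactly $m$ distinct left-horizontal edges have been removed at each gap, leaving at least $k^2 - m \geq 1$ available. Concatenating an arbitrary remaining left-horizontal edge at each gap with vertical edges inside the cliques produces a valid directed path from some vertex of $B_1$ to some vertex of $B_l$ in the current $G^*$, so iteration $m+1$ succeeds and can itself be realized as a shortest path using one left-horizontal edge per gap. After $k^2$ iterations, the $k^2$ distinct edges removed at each gap exhaust every left-horizontal edge there, so every left-horizontal edge lies on some $p_j$. The symmetric argument applied to the right-to-left phase shows that every right-horizontal edge lies on some $q_j$, completing the proof.

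The main obstacle I foresee is making the shortcutting step in (i) airtight when the detour spans several cliques rather than a single one, since the path might leave $B_{i+1}$, wander far to the left, and only then return. The cleanest way to handle this is the ``minimum number of horizontal edges'' selection above, which lets a single local exchange ignore whatever the detour does in intermediate cliques and rely only on the fact that its two endpoints both lie in $B_{i+1}$. It is also worth remarking that only the \emph{existence} of a good shortest path is required, since the algorithm is free to pick any shortest path in its Step 2.
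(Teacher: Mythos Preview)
Your proof is correct and follows the same counting strategy as the paper: exactly $k^2$ left-horizontal edges at each gap, $k^2$ paths $p_j$ each consuming one, hence all are covered (and symmetrically for the $q_j$). The paper's proof simply asserts that ``each $p_j$ contains a unique horizontal edge from $B_i$ to $B_{i+1}$'' as an immediate consequence of the deletion in Step~4, leaving the justification implicit; your shortcutting argument (using that vertical edges carry the triangle-inequality weights and are never deleted) together with the inductive feasibility check makes this step explicit, so you have added rigor rather than taken a different route.
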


\begin{proof}
Let $e = (u,v)$ and w.l.o.g let $u \in B_i$ and $v \in B_{i+1}$ for some $i \in \{1,2,\ldots,l \}$. Since we delete horizontal edges in step 4 and each $p_j$ begins at a vertex in $B_1$ and ends at a vertex in $B_l$ it follows that each $p_j$ contains a unique horizontal edge from $B_i$ to $B_{i+1}$. Combined with the fact that that there are exactly $k^2$ directed edges from $B_i$ to $B_{i+1}$ for any $i$ this implies that for all $i$ every horizontal edge from from $B_i$ to $B_{i+1}$ is contained in some $p_j$. A similar argument can be used to show that for all $i$ every horizontal edge from from $B_i$ to $B_{i-1}$ is contained in some $q_j$. This concludes the proof. 
\end{proof}

\begin{lemma}\label{lem:step_3}
For every $(u,v) \in E(G)$, we have $Pr[(u,v) \text{ is removed from $Q$ by step 3}] \leq k^2 \frac{d(u,v)}{r}$.
\end{lemma}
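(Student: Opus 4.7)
The plan is to view Step 3 of the algorithm as a loop over $k^2$ iterations (indexed by $i \in \{1, \ldots, k^2\}$) and apply a union bound. For each fixed $i$, I will show that the conditional probability that $(u,v)$ is removed at iteration $i$ is at most $d(u,v)/r$; summing then yields the claimed $k^2 d(u,v)/r$ bound. The key observation enabling the separate per-iteration analysis is that $z$ is drawn once in Step 1, while the source $s_i$, the shortest path $p_i$ chosen in Step 2, and the evolving subgraph $G^*_i$ at the start of iteration $i$ depend only on the structure of $G$ and on arbitrary tie-breaking, not on $z$. Thus I can condition on the sequence $(s_i, G^*_i)_{i=1}^{k^2}$ and treat each iteration's cutting event as a function of $z$ alone.

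Next I would adapt the threshold argument of Lemma~\ref{lem:step2}. Conditional on $G^*_i$ containing $(u,v)$, the edge is removed in iteration $i$ precisely when some value of the form $j r + z$ with integer $j \geq 0$ lies in the half-open interval $(d_{G^*_i}(s_i,u),\; d_{G^*_i}(s_i,v)]$. By the triangle inequality applied \emph{inside} $G^*_i$, this interval has length at most $d_{G^*_i}(u,v)$, and since $(u,v) \in E(G^*_i) \subseteq E(G)$ carries the same weight in both graphs, $d_{G^*_i}(u,v) \leq d(u,v)$. Because $z$ is uniform on $[0,r]$ and consecutive thresholds are spaced by $r$, at most one such threshold can lie in an interval of length $d(u,v)$, and it does so with probability at most $d(u,v)/r$.

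A union bound over the $k^2$ iterations then gives $\Pr[(u,v) \text{ is removed from } Q \text{ in Step 3}] \leq k^2\, d(u,v)/r$, as claimed. The only subtle point, and the one place where care is needed, is that the triangle inequality must be applied in $G^*_i$ rather than in $G$: the source-distances $d_{G^*_i}(s_i,\cdot)$ used in the cutting rule are computed in the subgraph, and $d_{G^*_i}$ may exceed $d_G$ in general. Fortunately the only quantity we need to upper bound is the difference $d_{G^*_i}(s_i,v) - d_{G^*_i}(s_i,u)$ across the single edge $(u,v)$, and this difference is controlled by the weight of that edge in $G^*_i$, which equals $d(u,v)$; so the argument goes through unchanged.
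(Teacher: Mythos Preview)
Your proposal is correct and follows essentially the same approach as the paper: bound the removal probability in each of the $k^2$ iterations by $d(u,v)/r$ via the triangle inequality, then take a union bound. Your write-up is in fact more careful than the paper's two-sentence proof, since you explicitly note that the sequence $(s_i, G^*_i)$ is independent of $z$ and that the triangle inequality must be applied in $G^*_i$ (with $d_{G^*_i}(u,v)$ bounded by the edge weight, which in the path-of-cliques construction coincides with $d(u,v)$).
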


\begin{proof}
By the construction, an edge $(u,v)$ is removed from $Q$ by step $3$ if $d_{G^*}(s_j,v)> i\cdot r + z$ and $d_{G^*}(s_j,u) \leq i\cdot r + z$ for some integer $i \geq 0$. This means that each time after running this step we have that $(u,v)$ is removed from $Q$ with probabilty at most $\frac{d(u,v)}{r}$.
The algorithm runs this step exactly $k^2$ times, and thus we have $Pr[(u,v) \text{ is removed from $Q$ by step 3}] \leq k^2 \frac{d(u,v)}{r}$, as desired.
\end{proof}

\begin{lemma}\label{lem:step_7}
For every $(u,v) \in E(G)$, we have $Pr[(u,v) \text{ is removed from $Q$ by step 7}] \leq k^2 \frac{d(u,v)}{r}$.
\end{lemma}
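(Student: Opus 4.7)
The plan is to prove this lemma by direct analogy with Lemma \ref{lem:step_3}, since Steps 6--8 are the mirror image of Steps 2--4: they play the same role but with sources $s'_i \in B_l$, sinks $t'_i \in B_1$, and reverse shortest paths $q_i$, instead of $s_i \in B_1$, $t_i \in B_l$, and $p_i$. Thus essentially the same ball-growing / region-growing argument should apply.

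Concretely, I would fix an edge $(u,v) \in E(G)$ and analyze a single execution of Step 7, for a fixed index $i \in \{1,\dots,k^2\}$. The edge $(u,v)$ is removed during this iteration exactly when $z$ falls in a ``threshold window'' for the distance from $s'_i$; that is, when there exists an integer $j \geq 0$ with $d_{G^*}(s'_i,u) \leq j r + z < d_{G^*}(s'_i,v)$. Since $z$ is uniform on $[0,r]$, the measure of such $z$ is at most $d_{G^*}(s'_i,v) - d_{G^*}(s'_i,u)$, which by the triangle inequality in $G^*$ is at most $d_{G^*}(u,v) \leq d(u,v)$. Dividing by $r$ gives probability at most $d(u,v)/r$ per iteration.

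Summing over all $k^2$ iterations of Step 7 via the union bound yields the bound $k^2 \cdot d(u,v)/r$, as required. The only mild subtlety is that $G^*$ shrinks across iterations as horizontal edges are deleted in Step 8, so distances $d_{G^*}$ are only larger than distances in $G$; this still lets us bound the per-iteration gap by $d(u,v)$ via the triangle inequality applied within $G^*$ itself (and the edge $(u,v)$, if still present in $G^*$, has the same length as in $G$; if not present in $G^*$, one reroutes via an $s'_i$-to-$u$-to-$v$ path). No new ideas beyond the proof of Lemma \ref{lem:step_3} are needed.

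Since the argument is routine and essentially identical to the proof of Lemma \ref{lem:step_3}, the main obstacle is simply making sure that the triangle inequality is applied inside $G^*$ rather than $G$, and that the union bound over the $k^2$ iterations is the correct counting of ``bad events.'' Given this, the proof can be stated in one or two lines referring back to Lemma \ref{lem:step_3}.
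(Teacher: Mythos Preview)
Your proposal is correct and matches the paper's approach exactly: the paper's entire proof is the single sentence ``A similar argument as in Lemma \ref{lem:step_3} applies here,'' and you have simply unpacked that sentence. Your remark about the edge possibly being absent from $G^*$ is unnecessary, since Step~7 only iterates over $(u,v)\in E(G^*)$; when the edge is present its weight in $G^*$ equals $d(u,v)$ and the triangle inequality in $G^*$ gives the per-iteration bound directly.
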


\begin{proof}
A similar argument as in Lemma \ref{lem:step3} applies here.
\end{proof}

\begin{lemma}\label{lem:step_9}
For every $(u,v) \in E(G)$, we have $Pr[(u,v) \text{ is removed from $Q$ by step 9}] \leq \frac{d(u,v)}{r}$.
\end{lemma}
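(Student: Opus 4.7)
The plan is essentially a one-line case analysis, since Step 9 is a deterministic thresholding rule that depends only on the length $d(u,v)$ of the edge, not on any random choice made by the algorithm. Concretely, Step 9 removes $(u,v)$ from $Q$ if and only if $d(u,v) \geq r$, so the event of being removed in this step is entirely determined by the edge.

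First I would split into two cases based on the length of $(u,v)$. If $d(u,v) \geq r$, then the edge is indeed removed with probability $1$, but $d(u,v)/r \geq 1$, so the bound $\Pr[(u,v) \text{ removed in Step 9}] \leq d(u,v)/r$ holds trivially. If on the other hand $d(u,v) < r$, then the condition of Step 9 is not met and the edge is not removed by this step at all, so the probability is $0$, which is again at most $d(u,v)/r$.

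Combining the two cases yields the claimed inequality. There is no real obstacle here; the only thing to be careful about is that Step 9 is deterministic in $(u,v)$, so no subtle dependence on the random threshold $z$ or on the random choices of the shortest paths $p_i, q_i$ enters the argument. This lemma will then be combined with Lemmas \ref{lem:step_3} and \ref{lem:step_7} via the union bound to bound the overall probability that $(u,v)$ leaves $Q$ by $(2k^2 + 1)\,d(u,v)/r = 2^{O(k^2)}\,d(u,v)/r$, matching the Lipschitz guarantee promised in Theorem \ref{thm:boundedpathwidth}.
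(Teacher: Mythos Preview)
Your proof is correct and is essentially identical to the paper's own argument: both split into the cases $d(u,v) < r$ (probability $0$) and $d(u,v) \geq r$ (probability $1 \leq d(u,v)/r$). There is nothing to add.
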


\begin{proof}
If $d(u,v) < r$, then $(u,v)$ is not removed by step 9. Otherwise, $(u,v)$ is removed from $Q$ and we have $Pr[(u,v) \text{ is removed from $Q$ by step 9}] = 1 \leq \frac{d(u,v)}{r}$.
\end{proof}

\begin{lemma}\label{lem:probabilitybound}
For every $(u,v) \in E(G)$, we have $Pr[(u,v) \text{ is removed from $Q$}] \leq (2k^2+1) \frac{d(u,v)}{r}$.
\end{lemma}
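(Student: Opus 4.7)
The plan is to observe that an edge $(u,v) \in E(G)$ can only be removed from $Q$ during one of three steps of the algorithm, namely Step~3, Step~7, or Step~9. Steps~1, 2, 4, 5, 6, 8, 10, and 11 do not remove any edges from $Q$ (they only set up auxiliary structures, modify $G^*$, or add edges via transitivity), so by the union bound it suffices to combine the three removal probabilities already established in the preceding lemmas.

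Concretely, I would invoke Lemma~\ref{lem:step_3}, Lemma~\ref{lem:step_7}, and Lemma~\ref{lem:step_9} and apply the union bound to obtain
\[
\Pr[(u,v) \text{ is removed from } Q] \leq k^2 \frac{d(u,v)}{r} + k^2 \frac{d(u,v)}{r} + \frac{d(u,v)}{r} = (2k^2+1)\frac{d(u,v)}{r},
\]
which is precisely the claimed bound.

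There is no real obstacle here since all three per-step bounds have been separately verified; the only thing to double-check is that Step~11 (transitivity enforcement) does not \emph{remove} any edges and thus cannot contribute to the removal probability, and that the three removal events in Steps~3, 7, and 9 exhaust the ways $(u,v)$ can disappear from $Q$. Both are clear from the pseudocode in Section~\ref{alg:boundedpathwidth}, so the proof is a one-line union bound.
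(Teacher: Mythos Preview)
Your proposal is correct and matches the paper's own proof exactly: the paper simply states that the bound follows immediately from Lemmas~\ref{lem:step_3}, \ref{lem:step_7}, and~\ref{lem:step_9}, which is precisely the union bound you describe.
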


\begin{proof}
This follows immediately by Lemmas \ref{lem:step_3}, \ref{lem:step_7}, and \ref{lem:step_9}.
\end{proof}

Now we show that $Q$ is $2^{k^{O(1)}}r$-bounded.

\begin{lemma}\label{lem:base_case_1}
Let $u,v \in V(G)$ and $p \in P_{c,k^2+1,k^2+1}$ be a directed path in $G'$ from $u$ to $v$. Then $d_G(u,v) \leq (c-1)r$.
\end{lemma}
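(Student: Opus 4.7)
The plan is to show that a path in $P_{c, k^2+1, k^2+1}$ cannot use any horizontal edge of $G$, so it must be confined to a single clique $B_i$, whence its width constraint forces it to have few edges, each of length less than $r$.

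\textbf{Step 1: Observe that no horizontal edges are used.} By the definition of $P_{c, k^2+1, k^2+1}$, any path $p$ in this set cannot traverse any horizontal edge of $G$ lying in $p_x$ or $q_y$ for $x,y < k^2+1$, i.e.\ for any $x,y \in \{1,\ldots,k^2\}$. By Lemma~\ref{lem:horizontal_edge}, every horizontal edge of $G$ appears in some $p_j$ or $q_j$ with $j \in \{1,\ldots,k^2\}$. Hence $p$ uses no horizontal edges at all, and consists entirely of vertical edges.

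\textbf{Step 2: Confine $p$ to a single clique.} Every vertical edge of $G$ connects two vertices in a common clique $B_i$. Since $p$ is a directed path composed only of vertical edges and the cliques $B_1,\ldots,B_l$ are pairwise disjoint, the entire vertex set of $p$ lies in a single clique $B_i$. Together with the width bound $|V(p) \cap B_i| \leq c$, this implies that $p$ has at most $c$ vertices and therefore at most $c-1$ edges.

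\textbf{Step 3: Bound the length via Step 9.} Each edge $(x,y)$ of $G'$ survived Step 9 of the algorithm, so $d_G(x,y) < r$. Writing $p = (a_1 = u, a_2, \ldots, a_m = v)$ with $m - 1 \leq c - 1$, the triangle inequality in the quasimetric $d_G$ yields
\[
d_G(u,v) \;\leq\; \sum_{j=1}^{m-1} d_G(a_j, a_{j+1}) \;<\; (m-1)\,r \;\leq\; (c-1)\,r,
\]
which gives the claim. The only (mild) subtlety is correctly interpreting the vacuous index range in the definition of $P_{c, k^2+1, k^2+1}$, which Lemma~\ref{lem:horizontal_edge} resolves cleanly; everything else is a direct triangle-inequality argument.
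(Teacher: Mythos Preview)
Your proof is correct and follows essentially the same approach as the paper: both argue via Lemma~\ref{lem:horizontal_edge} that a path in $P_{c,k^2+1,k^2+1}$ avoids all horizontal edges, hence lies in a single clique with at most $c-1$ vertical edges, each of length at most $r$ by Step~9. Your version is slightly more explicit about invoking the width bound and the triangle inequality, but the structure is identical.
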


\begin{proof}
We have that $p$ does not traverse any horizontal edges in $G'$ that are in $p_1,\ldots,p_{k^2}$ or $q_1,\ldots,q_{k^2}$. But since all horizontal edges in $G'$ are either part of $p_1,\ldots,p_{k^2}$ or part of $q_1,\ldots,q_{k^2}$ this implies that $p$ consists only of vertical edges of which there can be at most $c-1$ since each clique consists of at most $k$ vertices of which $p$ can visit at most $c$ since $p$ has width $c$. Since each edge in $G'$ has length at most $r$ from step 9 of the algorithm it follows that $d_G(p) \leq (c-1)r$.
\end{proof}

\begin{lemma}\label{lem:distance_bound}
Let $u,v \in V(G)$ and $i \in \{1, \ldots, k^2 \}$ such that $u$ and $v$ are traversed by $p_i$ in that order. Suppose there exists a path $p$ from $u$ to $v$ that does not traverse any horizontal edge of $p_j$ for all $j < i$ then $d_G(u,v) \leq r$.
\end{lemma}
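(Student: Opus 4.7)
The plan is to exploit the fact that $p_i$ is, by construction, a shortest path from $s_i$ to $t_i$ in the graph $G^*_i$ obtained from $G$ by deleting the horizontal edges of $p_1,\ldots,p_{i-1}$ (namely the state of $G^*$ at the start of iteration $i$ of the algorithm). Since the hypothesized path $p$ avoids precisely these horizontal edges, $p$ is itself a path in $G^*_i$; and since $u$ and $v$ lie on $p_i$ in that order, the subpath $p_i(u,v)$ is a shortest $u$-to-$v$ path in $G^*_i$ of length $d_{G^*_i}(s_i,v)-d_{G^*_i}(s_i,u)$. It therefore suffices to show that this difference is at most $r$, since then $d_G(u,v)\le d_{G^*_i}(u,v)\le r$.

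First I would verify that every vertex of $p=(a_0=u,a_1,\ldots,a_t=v)$ is reachable from $s_i$ in $G^*_i$: the starting vertex $u$ lies on $p_i$ and is reachable, and each successive edge $(a_{j-1},a_j)$ is present in $G^*_i$ (as it is not a deleted horizontal edge), so reachability propagates along $p$. Hence every $d_{G^*_i}(s_i,a_j)$ is finite, and each vertex of $p$ has a well-defined ``band index'' with respect to the Step~3 thresholds $z,r+z,2r+z,\ldots$, where the bands are $[0,z]$ and $(\ell r+z,(\ell+1)r+z]$ for $\ell\ge 0$.

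Next I would analyze the Step~3 filter of iteration $i$: an edge $(a,b)$ that was not removed from $Q$ satisfies that no threshold $\ell r+z$ with $\ell\ge 0$ lies in the half-open interval $[d_{G^*_i}(s_i,a),d_{G^*_i}(s_i,b))$, which is equivalent to saying that the band index of $b$ is at most the band index of $a$. Reading $p$ as a path in $G'$, so that each of its edges survived Step~3 of iteration $i$, I apply this monotonicity along $(a_0,a_1),(a_1,a_2),\ldots,(a_{t-1},a_t)$ to conclude that the band indices are non-increasing from $u$ to $v$. On the other hand, since $u$ precedes $v$ on the shortest path $p_i$, we have $d_{G^*_i}(s_i,u)\le d_{G^*_i}(s_i,v)$, which forces the band index of $v$ to be at least that of $u$. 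Combining the two inequalities, $u$ and $v$ lie in the same band, and since each band has width at most $r$, we obtain $d_{G^*_i}(s_i,v)-d_{G^*_i}(s_i,u)\le r$, completing the argument.

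The only real subtlety is to correctly interpret the hypothesized path $p$ as a path in $G'$, so that the band-index monotonicity coming from Step~3 is available along its edges; once that convention is set up, the proof reduces to a clean telescoping comparison between the $s_i$-distances of consecutive vertices of $p$ and those of $p_i$.
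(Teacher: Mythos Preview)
Your proof is correct and follows essentially the same approach as the paper: both argue that $p$ lies in $G^*$ at iteration $i$, use survival of its edges under Step~3 to bound $d_{G^*}(s_i,v)-d_{G^*}(s_i,u)\le r$, and conclude via $p_i$ being a shortest path in $G^*$. Your band-index argument simply makes explicit the step the paper compresses into a single sentence, and your remark that $p$ must be interpreted as a path in $G'$ correctly identifies an implicit hypothesis that both proofs rely on.
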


\begin{proof}
The above lemma can be proved by considering the $i$th iteration of step 3 of the algorithm. Since $p$ does not traverse any horizontal edge of $p_j$ for all $j < i$ it follows that $p$ is a path in $G^*$. However since no edge of $p$ is removed in step $3$ this implies that $d_{G^*}(s_i,v) - d_{G^*}(s_i,u) \leq r$. This combined with the fact that $p_i$ is a shortest path in $G^*$ implies that $d_G(u,v) \leq r$.  
\end{proof}

\begin{figure}[h]
\begin{center}
\scalebox{0.15}{\includegraphics{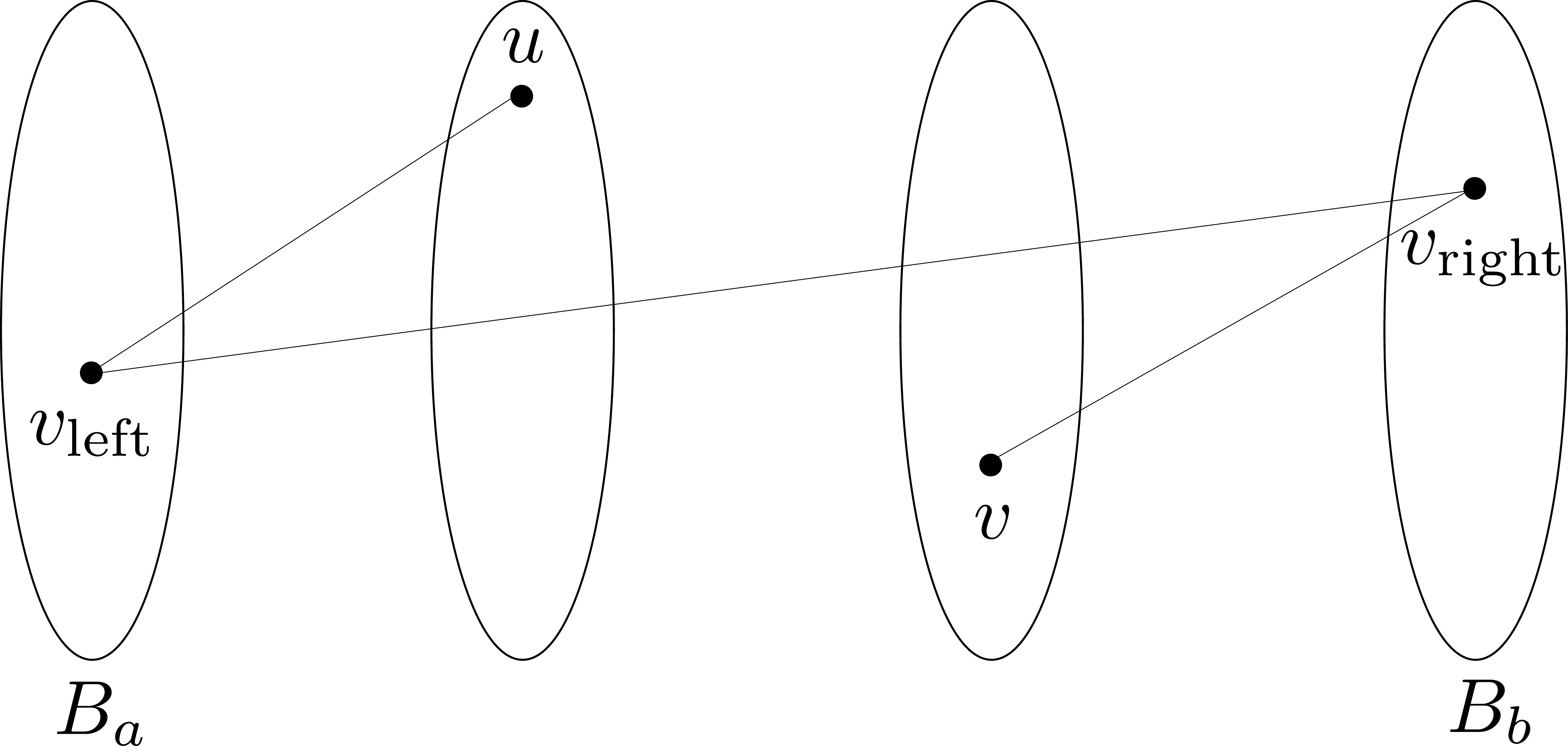}}
\caption{The path $p$ in Lemma \ref{lem:inductive_step}.}
\label{fig:leftright}
\end{center}
\end{figure}

\begin{lemma}\label{lem:inductive_step}
Let $u,v \in V(G)$ and $p \in P_{c,i,j}$ be a directed path in $G'$ from $u$ to $v$. Then $d_G(u,v) \leq c^{O(2k^2 - i - j)}r$.
\end{lemma}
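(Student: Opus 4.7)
My plan is to proceed by strong induction on $2k^2+2-i-j$. The base case $i=j=k^2+1$ is immediate: by Lemma \ref{lem:horizontal_edge} any path avoiding all $p_x$- and $q_y$-horizontal edges uses no horizontal edges at all, so Lemma \ref{lem:base_case_1} gives $d_G(u,v)\le (c-1)r$.

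For the inductive step I would, without loss of generality, assume $i\le k^2$ (the case $j\le k^2$ is symmetric, with the roles of $p_i$ and $q_j$ exchanged). Since $p\in P_{c,i,j}$, the path traverses at least one horizontal edge of $p_i$, so $|V(p)\cap V(p_i)|\ge 2$. Let $v^-$ and $v^+$ denote the vertices of $V(p)\cap V(p_i)$ with the smallest and largest positions along $p_i$, respectively. The sub-path of $p_i$ from $v^-$ to $v^+$ uses only edges of $p_i$ and hence no horizontal edge of any $p_x$ with $x<i$, so Lemma \ref{lem:distance_bound} yields $d_G(v^-,v^+)\le r$. I would then split on the order in which $v^-$ and $v^+$ occur on $p$. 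If $v^-$ precedes $v^+$ on $p$, the triangle inequality gives
\[
d_G(u,v)\;\le\;d_G(u,v^-)+r+d_G(v^+,v),
\]
and I would recurse on the two sub-paths $p(u,v^-)$ and $p(v^+,v)$: the goal is to argue that each of them reduces to an instance in $P_{c',i+1,j'}$ for some $c'\le c$ and $j'\ge j$, whereupon the inductive hypothesis bounds each by $c^{O(2k^2-i-j-1)}r$, and the sum is of the desired order $c^{O(2k^2-i-j)}r$. If instead $v^+$ precedes $v^-$ on $p$, then no $p_i$-shortcut in the correct direction is available, so I would apply the symmetric argument using $q_j$ (which traverses the path of cliques in the opposite direction), invoking the analogue of Lemma \ref{lem:distance_bound} for $q_j$.

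The main obstacle I foresee is verifying that the two sub-paths produced by the split really do drop to an instance in $P_{c',i+1,j'}$: a priori the interior of $p(u,v^-)$ may still contain other vertices of $V(p)\cap V(p_i)$ and therefore might still traverse horizontal edges of $p_i$. I expect to handle this either by an inner induction on $|V(p)\cap V(p_i)|$, where peeling off the extremal pair $v^\pm$ strictly decreases this quantity at each step, or by choosing the split vertices to be the endpoints of the first and the last horizontal $p_i$-edges actually used by $p$, which by construction yields sub-paths that avoid all horizontal $p_i$-edges. A secondary subtlety is ruling out the configuration in which both $p_i$ and $q_j$ fall into the ``reversed'' case simultaneously; I expect this to follow from the directedness of $p$ combined with the opposing left/right orientations of $p_i$ and $q_j$, which together force at least one of the two shortcuts to go in the correct direction.
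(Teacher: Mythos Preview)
Your high-level skeleton (induction on $2k^2-i-j$, base case via Lemma~\ref{lem:base_case_1}, and the use of Lemma~\ref{lem:distance_bound} to shortcut a segment along $p_i$ or $q_j$) matches the paper.  However, the two places you flag as ``obstacles'' are genuine gaps, and neither of your suggested fixes closes them.  For the first, an inner induction on $|V(p)\cap V(p_i)|$ produces a number of pieces that is not bounded in terms of $c$ and $k$, so the sum of their bounds need not be $c^{O(2k^2-i-j)}r$.  For the alternative of splitting at the first and last horizontal $p_i$-edges $(a_1,b_1)$ and $(a_2,b_2)$, the sub-paths $p(u,a_1)$ and $p(b_2,v)$ do avoid $p_i$-edges, but Lemma~\ref{lem:distance_bound} only bounds $d_G(a_1,b_2)$ when $a_1$ precedes $b_2$ on $p_i$; nothing prevents $p$ from using a ``late'' $p_i$-edge first and an ``early'' one last.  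Your proposed remedy for that --- switching to $q_j$ --- runs into your own ``secondary subtlety'': one can construct paths for which the $p_i$-extremes and the $q_j$-extremes are \emph{both} in the reversed order on $p$ (e.g.\ a path that first moves right via $p_{i'}$ for $i'>i$, uses a late $p_i$-edge, sweeps left via $q_{j'}$ for $j'>j$, uses an early $p_i$-edge followed by an early $q_j$-edge, then sweeps right and uses a late $q_j$-edge).  The ``opposing orientations'' heuristic does not rule this out.

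The missing idea, which the paper supplies, is to first split at the \emph{leftmost} and \emph{rightmost} cliques visited by $p$: pick $v_{\lef}$ in the minimum-index clique $B_a$ and $v_{\ri}$ in the maximum-index clique $B_b$ meeting $p$.  This buys two things at once.  First, it decides the direction: if $p$ hits $v_{\lef}$ before $v_{\ri}$ you work with $p_i$, otherwise with $q_j$, so the ``both reversed'' configuration never arises.  Second, and crucially, the tails $p(u,v_{\lef})$ and $p(v_{\ri},v)$ have width at most $c-1$, because the middle segment $p(v_{\lef},v_{\ri})$ passes through every clique $B_a,\ldots,B_b$ and hence occupies at least one slot in each.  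This width drop is what makes the induction go through even when the tails still contain $p_i$-edges.  Inside the middle segment the paper then takes $u_1$ to be the first $p_i$-vertex and $v_1$ the last $p_i$-vertex lying \emph{beyond} $u_1$ on $p_i$ (so the order for Lemma~\ref{lem:distance_bound} is guaranteed), and if $p(v_1,v_{\ri})$ still carries $p_i$-edges it exploits the forced revisit of $u_1$'s clique to carve out yet another sub-path of width $\le c-1$.  In short: your plan recurses only by increasing $i$ or $j$, whereas the argument actually needs the third axis of decreasing $c$, and the $v_{\lef}/v_{\ri}$ split is the mechanism that delivers it.
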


\begin{proof}
We prove this by induction. Lemma \ref{lem:base_case_1} will be used as the base case. Suppose the statement is true for directed paths in $P_{d,i,j}$,$P_{c,z,j}$ and $P_{c,i,o}$ where $d<c$, $z>i$, and $o>j$. We denote by $v_{\lef}$ and $v_{\ri}$ a pair of leftmost and rightmost vertices traversed by $p$. More precisely we  pick a vertex $v_{\lef}$ from $B_a$ where $a$ is the smallest integer such that $B_a \cap p \neq \emptyset$. Similarly we  pick a vertex $v_{\ri}$ from $B_b$ where $b$ is the largest integer such that $B_b \cap p \neq \emptyset$. Suppose $p$ traverses $B_a$ before $B_b$, then we pick $v_{\lef}$ to be the first vertex in $B_a$ visited by $p$ and we pick $v_{\ri}$ to be the last vertex in $B_b$ visited by $p$. Otherwise we pick $v_{\lef}$ to be the last vertex in $B_a$ visited by $p$ and we pick $v_{\ri}$ to be the first vertex in $B_b$ visited by $p$. Note that if $c=1$, then either $v_{\lef} = u$ and $v_{\ri} = v$ or $v_{\lef} = v$ and $v_{\ri} = u$. Consider the case that $p$ traverses $v_{\lef}$ before $v_{\ri}$ (See figure \ref{fig:leftright}). Then we have that $p= p(u,v_{\lef}) \cup p(v_{\lef},v_{\ri}) \cup p(v_{\ri},v)$. Next we observe that since the width of $p$ is $c$ and the width of $p(v_{\lef},v_{\ri})$ is at least $1$ it follows that $p(u,v_{\lef})$ and $p(v_{\ri},v)$ have width at most $c-1$. Consider $p(v_{\lef},v_{\ri})$. Let $u_1$ be the first vertex in $p(v_{\lef},v_{\ri})$ that intersects $p_i$. Let $v_1$ be the last vertex in $p(v_{\lef},v_{\ri})$ that intersects $p_i$ beyond $u_1$. Suppose $p(v_1,v_{\ri})$ does not traverse any horizontal edges of $p_i$ then we have that $p(v_{\lef},u_1)$ and $p(v_1,v_{\ri})$ are in $P_{d,z,o}$  for some integers $d \leq c$,$z > i$ and $o \geq j$. We also have that $p(u,v_{\lef}),p(v_{\ri},v) \in P_{d,z,o}$ for some integers $d<c$,$z \geq i$ and $o \geq j$ since $p(v_{\lef},v_{\ri})$ has width at least $1$. So we can use the distance bounds from the induction on $p(v_{\lef},u_1)$, $p(v_1,v_{\ri})$, $p(u,v_{\lef})$, and $p(v_{\ri},v)$. For $p(u_1,v_1)$ we may bound the distance using lemma \ref{lem:distance_bound}. This gives us the required bound. Suppose $p(v_1,v_{\ri})$ traverses some horizontal edge of $p_i$ then let $u_2$ be the first vertex in $p(v_{\lef},v_{\ri})$ traversed after $v_1$ that intersects $p_i$. By the choice of $v_1$ it must be that $u_2$ is a vertex traversed before $u_1$ by $p_i$. Let $u_1$ belong to a clique $B_t$ for some $t \in \{1,\ldots,l\}$. Now we have that $p$ traverses some vertex $x_1 \in B_t$ after visiting $v_1$ and before visiting $u_2$ since $u_2$ is in a clique with a lower index than $u_1$. Let $y_1 \in B_t$ be the last vertex in $B_t$ traversed by $p$. Now it follows that $p(x_1,y_1)$ has width at most $c-1$. This is because $p(x_1,y_1)$ does not intersect $p(v_{\lef}),u_1)$ and does not intersect $p(y_1,v_{\ri})$ except at $y_1$. From our choice of $v_1$ it must be that $p(y_1,v_{\ri})$ does not intersect any horizontal edge of $p_i$. Similarly $p(v_1,x_1)$ does not intersect any horizontal edge of $p_i$ from our choice of $u_2$ and $x_1$. Thus again we can bound the distances of $p(v_{\lef},u_1)$, $p(v_1,x_1)$, $p(x_1,y_1)$, $p(y_1,v_{\ri})$, $p(u,v_{\lef})$, and $p(v_{\ri},v)$ using induction and bound the distance of $p(u_1,v_1)$ using lemma \ref{lem:distance_bound} giving the required bound on the distance. For the case that $p$ traverses $v_{\ri}$ before $v_{\lef}$ we use a similar argument where we consider $q_j$ instead of $p_i$.
\end{proof}

\begin{figure}
\begin{center}
\scalebox{0.15}{\includegraphics{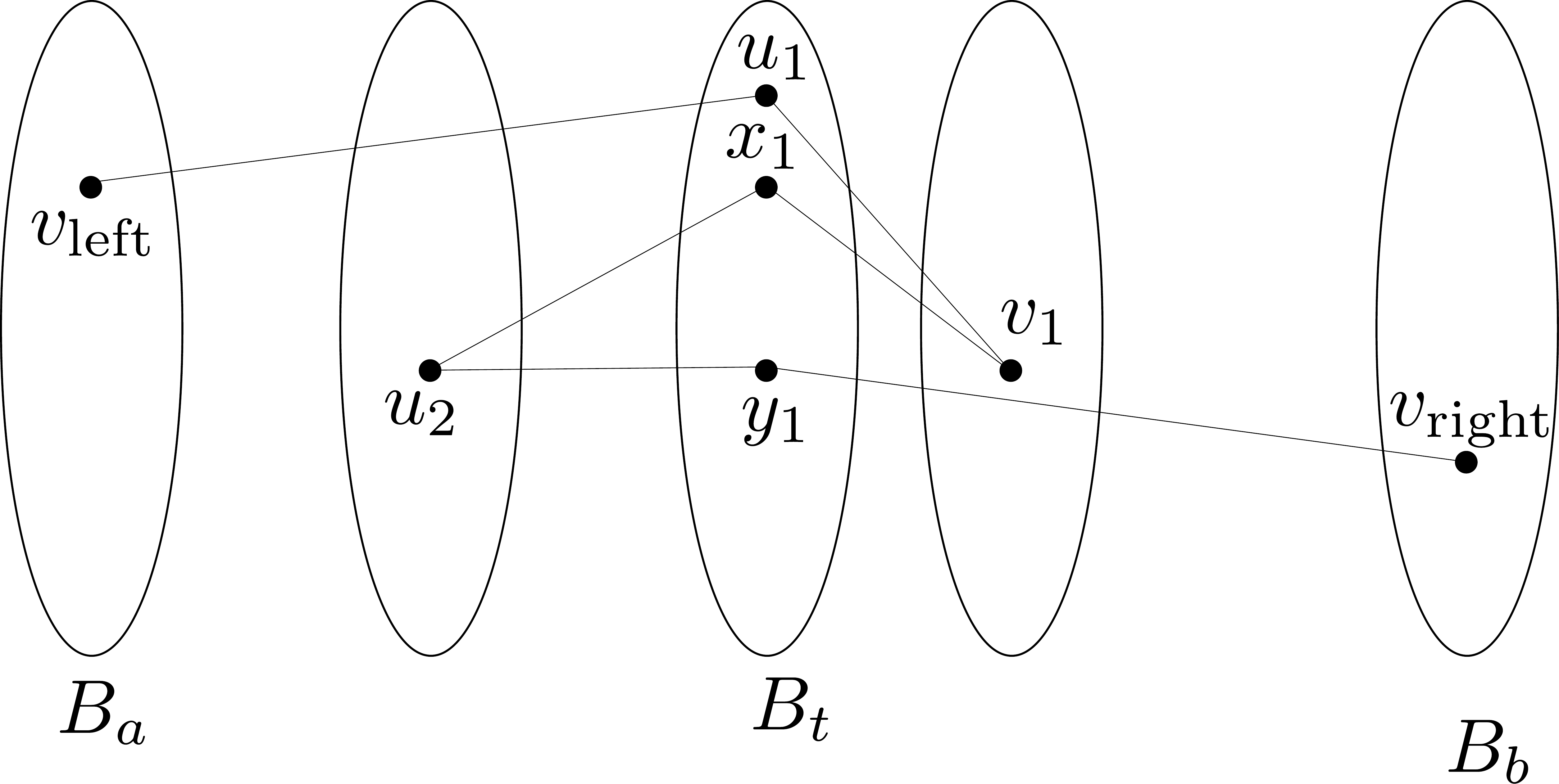}}
\caption{The sub-path of $p$ in Lemma \ref{lem:inductive_step}.}
\label{fig:induction}
\end{center}
\end{figure}

\begin{proof}[Proof of Theorem \ref{thm:boundedpathwidth}]
The required quasipartition is obtained from Algorithm \ref{alg:boundedpathwidth} setting $r = \frac{\Delta}{2^{\alpha k^2}}$ where $\alpha$ is the constant in the asymptotic notation from Lemma \ref{lem:inductive_step}. The proof follows from Lemmas \ref{lem:inductive_step} and \ref{lem:probabilitybound}.
\end{proof}

\subsection{Derandomization}
It is possible to derandomize the algorithm in subsection \ref{alg:boundedpathwidth} to get a polynomial time algorithm that enumerates the support of the distribution in Theorem \ref{thm:boundedpathwidth}. The approach is quite similar to that used in Section 3 Theorem 14 of \cite{memoli2016quasimetric}. We observe that there can be at most $n-1$ unique combinations of ordered pair removals in each iteration of Step 3 of the algorithm. In order to enumerate these we only need to consider values of $z$ for which there exists $u \in V(G^*)$ such that $d_{G^*}(s_i,u) = j\cdot r +z$. Using a similar argument for each iteration of Step 3 we only need to consider at most  $(n-1)k^2$ unique values of $z$ in total to enumerate the support of the distribution obtained in the algorithm.

\begin{theorem}\label{thm:boundedpathwidthalgo}
Let $G$ be a directed graph of of pathwidth $k$.
Let $M = (V(G),d_G)$ denote the shortest-path quasimetric space of $G$. Then for all $\Delta > 0$, there exists a polynomial time algorithm with running time of the form $n^{O(1)}$ that can compute the support of some $2^{O(k^2)}$-Lipschitz distribution over $\Delta$-bounded quasipartitions of $M$.
\end{theorem}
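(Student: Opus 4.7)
My plan is to observe that the only genuine source of randomness in the algorithm of Subsection \ref{alg:boundedpathwidth} is the choice of $z \in [0,r]$ in Step 1. Every other step can be made deterministic by fixing a canonical rule (e.g.\ lexicographically first vertex in Steps 2 and 6, any fixed tie-breaking rule for shortest paths in $p_i$ and $q_i$), so the entire algorithm becomes a deterministic function of $z$. Hence, the support of the output distribution is the image of the map $z \mapsto Q(z)$, and to enumerate it we need only identify the values of $z$ at which $Q(z)$ changes.

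The next step is to argue that $Q(z)$ is piecewise constant in $z$, with at most $O(nk^2)$ breakpoints. The only $z$-dependent decisions occur in Steps 3 and 7: an edge $(u,v)$ is cut in iteration $i$ of Step 3 precisely when there exists an integer $j \geq 0$ with $d_{G^*}(s_i,u) \leq jr + z < d_{G^*}(s_i,v)$. As $z$ sweeps through $[0,r]$ the membership of $(u,v)$ in the cut set toggles only when $z$ crosses some value of the form $d_{G^*}(s_i,u) \bmod r$ or $d_{G^*}(s_i,v) \bmod r$. Each iteration contributes at most $n$ such critical values, and Steps 3 and 7 each run $k^2$ times, giving at most $2nk^2$ breakpoints in total. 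Sorting these breakpoints partitions $[0,r]$ into $O(nk^2)$ sub-intervals on each of which $Q(z)$ is constant.

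Given this, the derandomized algorithm is straightforward. First, run the deterministic skeleton of the algorithm (Steps 2, 4, 5, 6, 8) to record the paths $p_1,\ldots,p_{k^2}$ and $q_1,\ldots,q_{k^2}$; these depend only on the shortest-path structure of $G$ and the chosen tie-breaking rule. Next, compute the relevant distances $d_{G^*}(s_i,u)$ and $d_{G^*}(s'_i,u)$ for each iteration and each vertex, producing the list of $O(nk^2)$ candidate breakpoints. Sort these breakpoints, and for each sub-interval $I$ pick an arbitrary representative $z_I \in I$, execute Steps 3, 7, 9, 10, 11 with $z = z_I$, and record the resulting quasipartition together with the weight $|I|/r$. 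The collection of pairs (quasipartition, weight) is exactly the support of the distribution from Theorem \ref{thm:boundedpathwidth} along with the corresponding probabilities, so the Lipschitz guarantee carries over unchanged.

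The step I expect to require the most care is verifying that the deterministic substitutions in Steps 2 and 6 (choosing $v_{1,i}$, $t_i$, and the shortest path $p_i$) do not inflate the support size or depend on $z$ in any hidden way; once we fix a canonical selection rule this is routine, but it must be checked that the monotonicity arguments of Lemmas \ref{lem:horizontal_edge}--\ref{lem:inductive_step} remain valid for the specific deterministic choice. The remaining analysis then follows verbatim, yielding the polynomial running time $n^{O(1)}$ and the same $2^{O(k^2)}$-Lipschitz bound claimed in Theorem \ref{thm:boundedpathwidthalgo}.
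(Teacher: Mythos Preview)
Your proposal is correct and follows essentially the same approach as the paper: both arguments observe that the only randomness is the single uniform choice of $z\in[0,r]$, that the output quasipartition is a piecewise-constant function of $z$ with breakpoints at the residues $d_{G^*}(s_i,u)\bmod r$ (and the analogous values for Step~7), and hence that enumerating one representative per interval yields the full support in $n^{O(1)}$ time. Your write-up is in fact somewhat more careful than the paper's one-paragraph sketch---you explicitly account for Step~7 and for the need to fix canonical choices in Steps~2 and~6---but the underlying derandomization idea is identical.
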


\section{Embedding directed cycles into directed $\ell_1$}\label{sec:cyclel1}

In this section we describe an algorithm for computing a constant-distortion embedding of the shortest-path quasimetric space of directed cycles into a convex combination of 0-1 quasimetric spaces.

We begin by introducing some notation.
Consider a directed cycle $G$ and fix a planar drawing of $G$ where $G$ is drawn as a circle centered at the origin.
Let $w$ be a weight function on the edges of $G$. 
Let $X = (V,d)$ be the shortest-path quasimetric space of $G$.
We may assume w.l.o.g.~that every edge $(u,v) \in E$ is the shortest path from $u$ to $v$ in $G$.
We denote by $\Delta$ the diameter of $G$.
For all $u,v \in V$ we denote by $\overleftarrow{p}(u,v)$ the path from $u$ to $v$ in the counter-clockwise direction in the drawing of $G$ and by $\overleftarrow{d}(u,v)$ the length of $\overleftarrow{p}(u,v)$. Similarly we denote by $\overrightarrow{p}(u,v)$ the path from $u$ to $v$ in the clockwise direction in the drawing of $G$ and by $\overrightarrow{d}(u,v)$ the length of $\overrightarrow{p}(u,v)$.
We say that some $v\in V(G)$ is the \emph{meet-point} of $u$ if $\overleftarrow{d}(u,v) = \overrightarrow{d}(u,v)$.
For some $(u,v)\in V(G)\times V(G)$, we say that
$(u,v)$ is \emph{short} if $\overleftarrow{d}(u,v) <\frac{\Delta}{10}$ and $\overrightarrow{d}(u,v) <\frac{\Delta}{10}$; otherwise we that $(u,v)$ is  \emph{long}.
Let $A \subseteq V$ where $A = \{ u : u \in V, \exists v \in V  \text{ such that } \overleftarrow{d}(u,v) < \frac{\Delta}{10} \text{ and } \overrightarrow{d}(u,v) < \frac{\Delta}{10} \}$. Similarly let $B \subseteq V$ where  $B = \{ v : v \in V, \exists u \in V  \text{ such that } \overleftarrow{d}(u,v) < \frac{\Delta}{10} \text{ and } \overrightarrow{d}(u,v) < \frac{\Delta}{10} \}$.

\begin{figure}[h]
\begin{center}
\scalebox{0.29}{\includegraphics{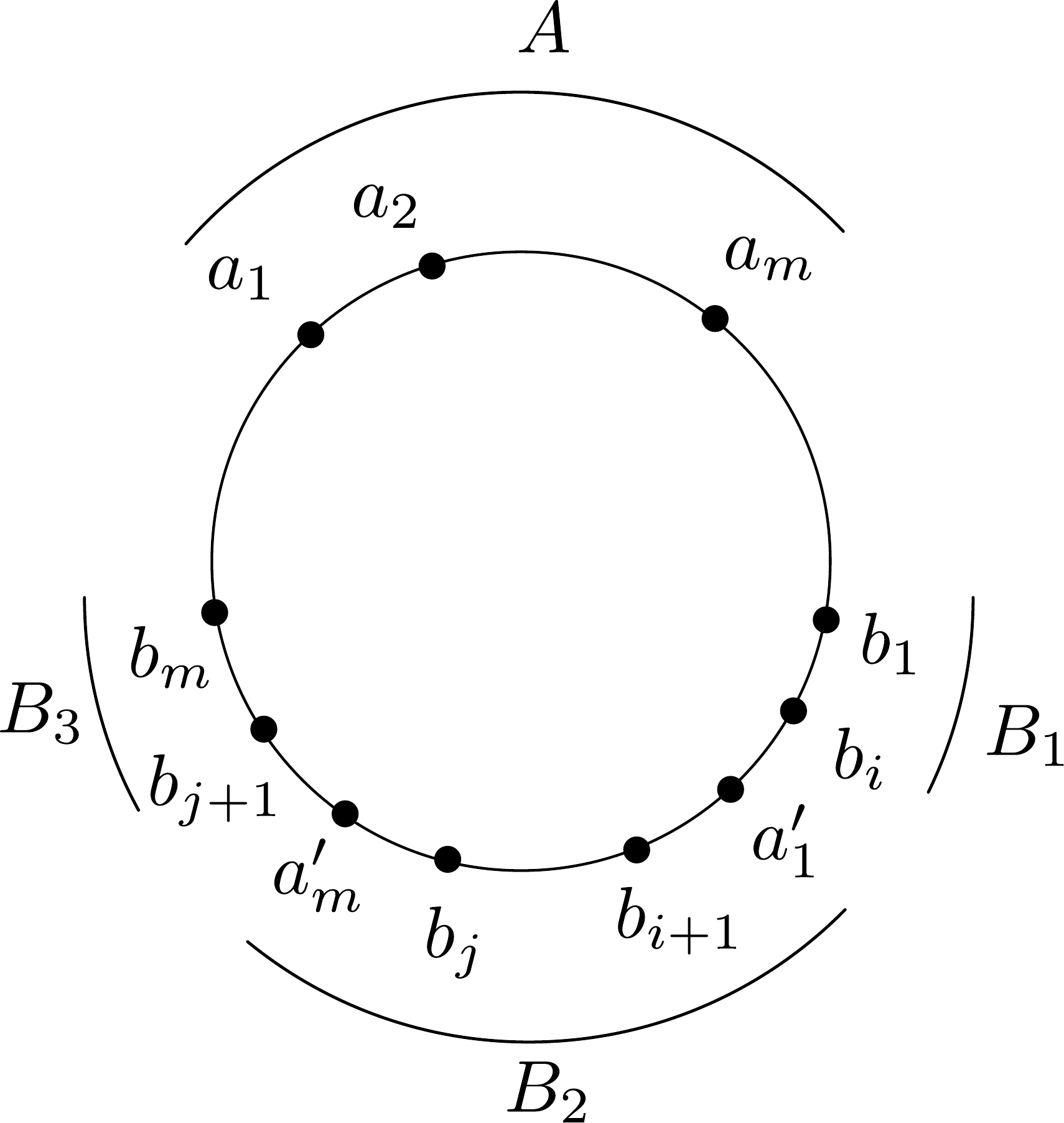}}
\caption{The sets $A$ and $B$.}
\label{fig:cycle}
\end{center}
\end{figure}

\begin{lemma}\label{lem:cycle_structure}
Let $u_1,u_2,u_3\in V(G)$ with 
$\overrightarrow{p}(u_1,u_2) \cap \overrightarrow{p}(u_2,u_3) = u_2$.
Let $(u_1,v_1), (u_2,v_2), (u_3,v_3)$ be short pairs.
Let $A' = \{u_1,u_2,u_3\}$ and $B' = \{v_1,v_2,v_3\}$.
Then we have that all the vertices in $B'$ are contained within one of $\overrightarrow{p}(u_1,u_2), \overrightarrow{p}(u_2,u_3), \overrightarrow{p}(u_3,u_1)$ . 
\end{lemma}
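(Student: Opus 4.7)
The plan is to argue by contradiction. Assume that $v_1, v_2, v_3$ do not all lie in a single one of the three arcs $\overrightarrow{p}(u_1, u_2), \overrightarrow{p}(u_2, u_3), \overrightarrow{p}(u_3, u_1)$. I will derive that either the total clockwise length $L_{cw} := \overrightarrow{d}(u_1, u_2) + \overrightarrow{d}(u_2, u_3) + \overrightarrow{d}(u_3, u_1)$ or the total counterclockwise length $L_{ccw} := \overleftarrow{d}(u_2, u_1) + \overleftarrow{d}(u_3, u_2) + \overleftarrow{d}(u_1, u_3)$ is strictly less than $\Delta$. Since for every pair $x, y \in V(G)$ we have $d_G(x, y) \leq \overrightarrow{d}(x, y) \leq L_{cw}$ and $d_G(x, y) \leq \overleftarrow{d}(x, y) \leq L_{ccw}$, the diameter satisfies $\Delta \leq \min(L_{cw}, L_{ccw})$, which yields the contradiction.

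First, I extract arc-length bounds from the short-pair conditions. Write $A_i := \overrightarrow{p}(u_i, u_{i+1})$ for $i \in \{1,2,3\}$ (indices taken modulo $3$), and let $a_i, b_i$ denote the clockwise and counterclockwise lengths of $A_i$ respectively. The cyclic ordering hypothesis implies that $A_1, A_2, A_3$ partition the cycle, so each $v_i$ lies in exactly one of them. For $v_i \in A_j$, the clockwise path from $u_i$ to $v_i$ must fully traverse the arcs $A_i, A_{i+1}, \ldots, A_{j-1}$ before entering $A_j$; analogously for the counterclockwise path. Combining with the short-pair inequalities $\overrightarrow{d}(u_i, v_i) < \Delta/10$ and $\overleftarrow{d}(u_i, v_i) < \Delta/10$, I obtain three cases: (i) if $v_i \in A_i$ (clockwise-adjacent to $u_i$), then $b_{i-1} + b_{i+1} < \Delta/10$; (ii) if $v_i \in A_{i-1}$ (counterclockwise-adjacent), then $a_i + a_{i+1} < \Delta/10$; and (iii) if $v_i \in A_{i+1}$ (opposite), then $a_i < \Delta/10$ and $b_{i-1} < \Delta/10$ hold simultaneously.

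The final step is a case analysis over the $24$ non-uniform assignments of $(v_1, v_2, v_3)$ to the three arcs, showing that in each case the bounds above combine to give $L_{cw} < \Delta$ or $L_{ccw} < \Delta$. Two representative cases illustrate the pattern. If each $v_i$ lies in its opposite arc, i.e.\ $v_1 \in A_2$, $v_2 \in A_3$, $v_3 \in A_1$, then case (iii) applied three times gives $a_1, a_2, a_3 < \Delta/10$, hence $L_{cw} < 3\Delta/10$. If each $v_i$ lies in its clockwise-adjacent arc, i.e.\ $v_i \in A_i$ for each $i$, then summing the three instances of (i) gives $2L_{ccw} < 3\Delta/10$. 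The remaining mixed configurations are handled similarly by choosing the appropriate subset of the inequalities to sum. The main obstacle is the combinatorial bookkeeping of the cases, but the cyclic symmetry (rotating the indices) and the cw/ccw reflection symmetry cut it down considerably, and the generous $\Delta/10$ slack per bound ensures that the sums always fall well below $\Delta$.
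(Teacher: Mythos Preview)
Your proposal is correct and follows essentially the same approach as the paper: argue by contradiction and use the short-pair bounds $\overrightarrow{d}(u_i,v_i),\overleftarrow{d}(u_i,v_i)<\Delta/10$ to force either the total clockwise or total counterclockwise length of the cycle below $\Delta$, contradicting $\Delta\le\min(L_{cw},L_{ccw})$. The paper organizes the case analysis directly in terms of which arc contains which $v_i$ and writes out five representative cases (invoking relabeling for the rest); your version is a bit more systematic in that you introduce the arc lengths $a_i,b_i$ and classify the contribution of each short pair into one of three inequality types, but the underlying argument is the same.
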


\begin{proof}
Suppose for the sake of contradiction that the assertion does not hold.
W.l.o.g.~assume that $\overrightarrow{p}(u_1,u_2)$ and $\overrightarrow{p}(u_2,u_3)$ each intersect $B'$. We consider the following cases:

Case 1: $\overrightarrow{p}(u_1,u_2)$ contains $v_1$ and $\overrightarrow{p}(u_2,u_3)$ contains $v_2$. In this case we have that $\overleftarrow{d}(u_1,v_1) < \frac{\Delta}{10}$ and $\overleftarrow{d}(u_2,v_2) < \frac{\Delta}{10}$. But $\overleftarrow{p}(u_1,v_1) \cup \overleftarrow{p}(u_2,v_2)$ contains all the edges of $G$ in the counter-clockwise direction. This implies that the sum of the lengths of all the edges in the counter-clockwise direction  is at most $\frac{2 \Delta}{10}$. However this means that the diameter of $G$ is at most $\frac{2 \Delta}{10}$ which is a contradiction.

Case 2: $\overrightarrow{p}(u_1,u_2)$ contains $v_2$ and $\overrightarrow{p}(u_2,u_3)$ contains $v_1$. In this case we have that $\overrightarrow{d}(u_1,v_1) < \frac{\Delta}{10}$ and $\overrightarrow{d}(u_2,v_2) < \frac{\Delta}{10}$. But $\overrightarrow{p}(u_1,v_1) \cup \overrightarrow{p}(u_2,v_2)$ contains all the edges of $G$ in the clockwise direction. This implies that the sum of the lengths of all the edges in the clockwise direction  is at most $\frac{2 \Delta}{10}$. However this means that the diameter of $G$ is at most $\frac{2 \Delta}{10}$ which is a contradiction.

Case 3: $\overrightarrow{p}(u_1,u_2)$ contains $v_3$, $\overrightarrow{p}(u_2,u_3)$ contains $v_1$ and $\overrightarrow{p}(u_3,u_1)$ contains $v_2$. In this case we have that $\overrightarrow{d}(u_1,v_1) < \frac{\Delta}{10}$, $\overrightarrow{d}(u_2,v_2) < \frac{\Delta}{10}$ and $\overrightarrow{d}(u_3,v_3) < \frac{\Delta}{10}$. But $\overrightarrow{p}(u_1,v_1) \cup \overrightarrow{p}(u_2,v_2) \cup \overrightarrow{p}(u_3,v_3)$ contains all the edges of $G$ in the clockwise direction. This implies that the sum of the lengths of all the edges in the clockwise direction  is at most $\frac{3 \Delta}{10}$. However this means that the diameter of $G$ is at most $\frac{3 \Delta}{10}$ which is a contradiction.

Case 4: $\overrightarrow{p}(u_1,u_2)$ contains $v_2$ and $\overrightarrow{p}(u_3,u_1)$ contains $v_1$. In this case we have that $\overrightarrow{d}(u_1,v_1) < \frac{\Delta}{10}$ and $\overrightarrow{d}(u_2,v_2) < \frac{\Delta}{10}$. But $\overrightarrow{p}(u_1,v_1) \cup \overrightarrow{p}(u_2,v_2)$ contains all the edges of $G$ in the clockwise direction. This implies that the sum of the lengths of all the edges in the clockwise direction  is at most $\frac{2 \Delta}{10}$. However this means that the diameter of $G$ is at most $\frac{2 \Delta}{10}$ which is a contradiction.
   
Case 5: $\overrightarrow{p}(u_1,u_2)$ contains $v_1$ and $\overrightarrow{p}(u_3,u_1)$ contains $v_2$. In this case we have that $\overleftarrow{d}(u_1,v_1) < \frac{\Delta}{10}$ and $\overleftarrow{d}(u_2,v_2) < \frac{\Delta}{10}$. But $\overleftarrow{p}(u_1,v_1) \cup \overleftarrow{p}(u_2,v_2)$ contains all the edges of $G$ in the counter-clockwise direction. This implies that the sum of the lengths of all the edges in the counter-clockwise direction  is at most $\frac{2 \Delta}{10}$. However this means that the diameter of $G$ is at most $\frac{2 \Delta}{10}$ which is a contradiction.

All other cases are identical to the above cases up to relabeling of the vertices. Since we end up with a contradiction in every case, this concludes the proof.   
\end{proof}

Lemma \ref{lem:cycle_structure} implies that $A$ and $B$ are two disjoint consecutive segments of $G$. Let $a_1,a_2, \ldots, a_m$ be the vertices of $A$ in clockwise order.
Let also $b_1, b_2, \ldots, b_k$ be the vertices of $B$ in clockwise order.
Let $a'_1$ and $a'_m$ be the meet-points of $a_1$ and $a_m$ respectively. If $a_1$ and $a_m$ do not have such points we may introduce two vertices $a'_1$ and $a'_m$ by subdividing edges such that the induced quasimetric does not change. By the construction, we have $d(a_1,a'_1) < \frac{\Delta}{10}$ and $d(a_m,a'_m) < \frac{\Delta}{10}$. Therefore, $a'_1$ lies between $b_i$ and $b_{i+1}$ for some $i \in \{1,2,\ldots,k\}$. Similarly, $a'_m$ lies between $b_j$ and $b_{j+1}$ for some $j \in \{1,2,\ldots,k\}$ (See Figure \ref{fig:cycle}). Let $B_1 = \{b_1,b_2,\ldots,b_i\}$, $B_2 = \{a'_1,b_{i+1},\ldots,a'_m\}$, and $B_3 = \{a_{j+1},\ldots,b_k\}$.

\subsection{The algorithm for embedding into directed $\ell_1$}

We are now ready to describe the algorithm for computing a random quasipartition of $G$.

\begin{flushleft}
\textbf{Input:} A directed cycle $G$.\\
\textbf{Output:} A random quasipartition $Q$ of the shortest-path quasimetric space of $G$.
\end{flushleft}
\begin{description}
\item{\textbf{Initialization:}}
Set $Q=E(G)$.

\item{\textbf{Step 1.}} Pick an arbitrary $v_1 \in V$. Let $v_1, v_2, \ldots, v_n$ be the vertices of $G$ in clockwise order. Let $v_{n+1} = v_1$. Pick $z_1,z_2 \in [0,\frac{\Delta}{10}]$ uniformly at random.

\item{\textbf{Step 2.}} For every $j \in \{1,2, \ldots, n\}$, remove $(v_j,v_{j+1})$ from $Q$ if $\overrightarrow{d}(v_1,v_j) \leq i \cdot \frac{\Delta}{10} + z_1$ and $\overrightarrow{d}(v_1,v_{j+1}) > i \cdot \frac{\Delta}{10} + z_1$ for some integer $i \geq 0$.

\item{\textbf{Step 3.}} For every $j \in \{1,2, \ldots, n\}$, remove $(v_{j+1},v_j)$ from $Q$ if $\overleftarrow{d}(v_1,v_{j+1}) \leq i \cdot \frac{\Delta}{10} + z_1$ and $\overleftarrow{d}(v_1,v_j) > i \cdot \frac{\Delta}{10} + z_1$ for some integer $i \geq 0$.

\item{\textbf{Step 4.}} Pick $z_3 \in [0,\Delta]$ uniformly at random.

\item{\textbf{Step 5.}} For every $(u,v) \in E(\overrightarrow{p}(a_1,a'_1))$, remove $(u,v)$ from $Q$ if $\overrightarrow{d}(a_1,u) \leq z_3$ and $\overrightarrow{d}(a_1,v) > z_3$.

\item{\textbf{Step 6.}} For every $(u,v) \in E(\overleftarrow{p}(a_1,a'_1))$, remove $(u,v)$ from $Q$ if $\overleftarrow{d}(a_1,u) \leq z_3$ and $\overleftarrow{d}(a_1,v) > z_3$.

\item{\textbf{Step 7.}} For every $(u,v) \in E(\overrightarrow{p}(a_m,a'_m))$, remove $(u,v)$ from $Q$ if $\overrightarrow{d}(a_m,u) \leq z_3$ and $\overrightarrow{d}(a_m,v) > z_3$.

\item{\textbf{Step 8.}} For every $(u,v) \in E(\overleftarrow{p}(a_m,a'_m))$, remove $(u,v)$ from $Q$ if $\overleftarrow{d}(a_m,u) \leq z_3$ and $\overleftarrow{d}(a_m,v) > z_3$.

\item{\textbf{Step 9.}} For every $(u,v) \in E(\overrightarrow{p}(a_1,a'_1))$, remove $(u,v)$ from $Q$ if $\overrightarrow{d}(v,a'_1) \leq z_3$ and $\overrightarrow{d}(u,a'_1) > z_3$.

\item{\textbf{Step 10.}} For every $(u,v) \in E(\overleftarrow{p}(a_1,a'_1))$, remove $(u,v)$ from $Q$ if $\overleftarrow{d}(v,a'_1) \leq z_3$ and $\overleftarrow{d}(u,a'_1) > z_3$.

\item{\textbf{Step 11.}} For every $(u,v) \in E(\overrightarrow{p}(a_m,a'_m))$, remove $(u,v)$ from $Q$ if $\overrightarrow{d}(v,a'_m) \leq z_3$ and $\overrightarrow{d}(u,a'_m) > z_3$.

\item{\textbf{Step 12.}} For every $(u,v) \in E(\overleftarrow{p}(a_m,a'_m))$, remove $(u,v)$ from $Q$ if $\overleftarrow{d}(v,a'_m) \leq z_3$ and $\overleftarrow{d}(u,a'_m) > z_3$.

\item{\textbf{Step 13.}} Enforce transitivity on $Q$. That is for all $u,v,w \in V(G)$ if $(u,v) \in Q$ and $(v,w) \in Q$, then add $(u,w)$ to $Q$.
\end{description}
This concludes the description of the algorithm.

\subsection{Analysis}

We now analyze the random quasipartition computed by the above algorithm.
We first argue that the ``probability of separation'' is small for all pairs of vertices; this implies that the contraction of the resulting embedding into $\ell_1$ is bounded.

\begin{lemma}\label{lem:lower prob}
For every $u,v \in V$ we have
\[
\Pr_{Q \sim D}[(u,v) \notin Q] \geq \frac{d(u,v)}{2\Delta}.
\]
\end{lemma}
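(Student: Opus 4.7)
The plan is to lower-bound $\Pr_{Q \sim D}[(u,v) \notin Q]$ by first observing that, in the directed cycle $G$, the only simple directed paths from $u$ to $v$ are the clockwise path $\overrightarrow{p}(u,v)$ of length $\ell_1 = \overrightarrow{d}(u,v)$ and the counter-clockwise path $\overleftarrow{p}(u,v)$ of length $\ell_2 = \overleftarrow{d}(u,v)$; any non-simple path revisits a vertex and does not help with reachability. Hence after the transitive-closure Step~13, $(u,v)\in Q$ iff at least one of these two directional paths survives intact, so $\Pr[(u,v)\notin Q]$ equals the probability that \emph{both} paths contain at least one removed edge. I would then analyze the steps of the algorithm as providing ``cut opportunities'' on these two paths, using crucially the independence of $z_1$ (driving Steps~2--3) and $z_3$ (driving Steps~5--12).

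The argument proceeds by case analysis on $\ell_1,\ell_2$. When $\ell_1\geq \Delta/10$, the interval $[\overrightarrow{d}(v_1,u),\overrightarrow{d}(v_1,v))$ has length at least $\Delta/10$ and necessarily contains one Step-2 cut level $z_1+i\Delta/10$, so the CW path is cut with probability $1$; analogously Step~3 cuts the CCW path whenever $\ell_2\geq \Delta/10$. This handles both the long-long case (probability $1$) and the mixed case, where, say, $\ell_2\geq \Delta/10>\ell_1$: Step~3 is deterministic and Step~2 cuts CW with probability $10\ell_1/\Delta$, giving joint probability $10 d(u,v)/\Delta$. The short-short case $\ell_1,\ell_2 < \Delta/10$ is treated by invoking Lemma~\ref{lem:cycle_structure}: then $u\in A$, $v\in B$, the sets $A$ and $B$ are disjoint consecutive arcs, and so $\overrightarrow{p}(u,v)$ is contained in $\overrightarrow{p}(a_1,a_1')$ or $\overrightarrow{p}(a_m,a_m')$, and analogously for $\overleftarrow{p}(u,v)$. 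Each of Steps~5--12 cuts a chosen sub-path of length $\ell$ with marginal probability $\ell/\Delta$, so combining a $z_1$-based step (Step~2 or 3) for one direction with an independent $z_3$-based step (one of Steps~6, 8, 10, 12) for the other direction gives joint probability $10\ell_1\ell_2/\Delta^2$, which dominates $d(u,v)/(2\Delta)$ whenever $\max(\ell_1,\ell_2)\geq \Delta/20$.

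The main obstacle is the residual very-short sub-case $\ell_1,\ell_2<\Delta/20$, where no long direction is available to provide a ``free'' deterministic cut. Here I would combine two $z_3$-based steps drawn from Steps~5--12 and argue that their cutting intervals in $[0,\Delta]$ necessarily overlap, using the compactness enforced by this regime: both $u$ and $v$ lie within $\Delta/20$ of the endpoints $a_1$ and $a_m$ in the appropriate directions, which tethers the intervals $[\overrightarrow{d}(a_1,u),\overrightarrow{d}(a_1,v))$ and $[\overleftarrow{d}(a_m,u),\overleftarrow{d}(a_m,v))$, together with their mirror images arising from Steps~9--12 measuring distance from the meet-points $a_1'$ and $a_m'$, to a common region where overlap can be guaranteed. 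Verifying this overlap for every configuration of the short pair (possibly splitting further on whether $u$ is closer to $a_1$ or to $a_m$, and similarly for $v$) is where the detailed bookkeeping lives, and is the main technical step of the proof.
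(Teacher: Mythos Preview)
Your approach diverges from the paper's in a way that creates real difficulty. The paper never uses the independence of $z_1$ and $z_3$ for this lemma. In the long case it uses only $z_1$: one direction is cut deterministically by Step~2 (or Step~3), and the other with probability at least $d(u,v)/\Delta$ by the companion step; since one event has probability~$1$, no independence is needed. In the short case the paper uses \emph{only} $z_3$, and the crucial observation you are missing is that the twelve $z_3$-steps come in coupled pairs (Steps~5/6, 7/8, 9/10, 11/12) each measuring distance from the \emph{same} reference point ($a_1$, $a_m$, $a_1'$, or $a_m'$) in both directions. Thus if, say, Step~5 cuts the clockwise path at level $z_3$, then Step~6 cuts the counter-clockwise path at the very same $z_3$, giving $\Pr[X_2\mid X_1]=1$ directly. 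The short-pair analysis then reduces to checking, via a case split on whether $v\in B_1$, $v\in B_3$, or $v\in B_2$ (and in the last case on which of the four sub-segments carries at least half of $d(u,v)$), that some pair of coupled steps is applicable; each case is a two-line computation.

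By contrast, your plan first multiplies a $z_1$-probability and a $z_3$-probability to get $10\ell_1\ell_2/\Delta^2$, which is too small unless $\max(\ell_1,\ell_2)\ge \Delta/20$, and then defers the remaining ``very short'' regime to an interval-overlap argument that you do not actually carry out. That deferred step is precisely where the correlation structure above is needed, so you end up having to discover the same idea anyway, but only after an unnecessary detour and an extra case boundary. There is also a subtle issue with your intermediate claim that $\overrightarrow p(u,v)$ is contained in $\overrightarrow p(a_1,a_1')$ or $\overrightarrow p(a_m,a_m')$: when $u$ is in the interior of $A$ and $v\in B_2$, neither containment holds as stated, so the marginal $\ell/\Delta$ you quote for the $z_3$-step is not automatic. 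The fix is exactly the paper's pairing argument, which sidesteps containment by working with conditional probabilities rather than products.
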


\begin{proof}
Let $X_1$ be the random event that there exists an edge $e \in \overrightarrow{p}(u,v)$ such that $e \notin Q$. Let $X_2$ be the random event that there exists an edge $e \in \overleftarrow{p}(u,v)$ such that $e \notin Q$. We want to show that $\Pr_{Q \sim D}[X_1 \land X_2] \geq \frac{d(u,v)}{\Delta}$.

First suppose that $(u,v)$ is a long pair. That is, either $\overleftarrow{d}(u,v) \geq \frac{\Delta}{10}$ or $\overrightarrow{d}(u,v) \geq \frac{\Delta}{10}$. We may assume w.l.o.g.~that $\overrightarrow{d}(u,v) \geq \frac{\Delta}{10}$ (the analysis for the other case is similar).
In this case, by the Step 2 of the algorithm we have that $\Pr_{Q \sim D}[X_1] = 1$. This is because of the fact that $\overrightarrow{d}(u,v) \geq \frac{\Delta}{10}$.
Also, from Step 3 of the algorithm we get $\Pr_{Q \sim D}[X_2] \geq \frac{d(u,v)}{\Delta}$. Therefore, we have $\Pr_{Q \sim D}[X_1 \land X_2] \geq \frac{d(u,v)}{\Delta}$.

Now suppose that $(u,v)$ is a short pair. First suppose that $v \in B_1$. In this case, by the construction we have that $d(u,v) = \overrightarrow{d}(u,v)$, and thus by Step $5$ of the algorithm we have that $\Pr_{Q \sim D}[X_1] \geq \frac{d(u,v)}{\Delta}$. Also, by Steps $5$ and $6$ we have $\Pr_{Q \sim D}[X_2|X_1] = 1$, and hence we have $\Pr_{Q \sim D}[X_1 \land X_2] \geq \frac{d(u,v)}{\Delta}$, as desired. If $v \in B_3$, a similar argument using Steps $7$ and $8$ of the algorithm shows that $\Pr_{Q \sim D}[X_1 \land X_2] \geq \frac{d(u,v)}{\Delta}$.

Now suppose that $v \in B_2$. Let $p_1 = \overrightarrow{p}(u,a'_1)$, $p_2 = \overrightarrow{p}(a_m,v)$, $p_3 = \overleftarrow{p}(u,a'_m)$, and $p_4 = \overleftarrow{p}(a_1,v)$.
For every $i \in \{1,2,3,4\}$, let $Y_i$ be the random event that there exists an edge $e \in p_i$ such that $e \notin Q$.
There are four cases:
\begin{description}
\item{Case $1$.} $\overrightarrow{d}(u,a'_1) \geq d(u,v)/2$ and $\overleftarrow{d}(u,a'_m) \geq d(u,v)/2$. First assume that $\overrightarrow{d}(u,a'_1) \leq \overleftarrow{d}(u,a'_m)$. By Step $6$ of the algorithm we have that $\Pr_{Q \sim D}[X_1] \geq \Pr_{Q \sim D}[Y_1] \geq \frac{\overrightarrow{d}(u,a'_1)}{\Delta} \geq \frac{d(u,v)}{2\Delta}$. Also by the construction and Step $10$ of the algorithm, we have that $\Pr_{Q \sim D}[X_2|Y_1] = 1$. Therefore, we have $\Pr_{Q \sim D}[X_1 \land X_2] \geq \frac{d(u,v)}{2\Delta}$, as desired. The argument for the other case where $\overrightarrow{d}(u,a'_1) > \overleftarrow{d}(u,a'_m)$ is similar by considering Steps $8$ and $12$ of the algorithm.

\item{Case $2$.} $\overrightarrow{d}(u,a'_1) \geq d(u,v)/2$ and $\overleftarrow{d}(a_1,v) \geq d(u,v)/2$.
If $\overrightarrow{d}(u,a'_1) \leq \overleftarrow{d}(a_1,v)$, then we follow a similar argument as in the first case by considering Steps $9$ and $10$ of the algorithm. Otherwise, we follow a similar argument by considering Steps $6$ and $10$.

\item{Case $3$.} $\overrightarrow{d}(a_m,v) \geq d(u,v)/2$ and $\overleftarrow{d}(u,a'_m) \geq d(u,v)/2$. If $\overrightarrow{d}(a_m,v) \leq \overleftarrow{d}(u,a'_m)$ then a similar argument as in the first case by considering Steps $7$ and $8$ applies here. Otherwise, a similar argument considering Steps $8$ and $12$ applies here.

\item{Case $4$.} $\overrightarrow{d}(a_m,v) \geq d(u,v)/2$ and $\overleftarrow{d}(a_1,v) \geq d(u,v)/2$.
If $\overrightarrow{d}(a_m,v) \leq \overleftarrow{d}(a_1,v)$, the a similar argument using Steps $7$ and $11$ of the algorithm applies here. Otherwise, a similar argument using Steps $5$ and $6$ applies here.
\end{description}
By considering all the above cases, we conclude that $\Pr_{Q \sim D}[(u,v) \notin Q] \geq \frac{d(u,v)}{2\Delta}$, as desired.
\end{proof}

Next we prove that the probability of separation is not too large; this implies that the expansion of the resulting embedding into directed $\ell_1$ is bounded.

\begin{lemma}\label{lem:edge_prob}
For every $(u,v) \in E$ we have
\[
\Pr_{Q \sim D}[(u,v) \notin Q] \leq \frac{14 d(u,v)}{\Delta}.
\]
\end{lemma}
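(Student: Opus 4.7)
The plan is to union-bound the removal probability over the algorithm's removal steps $2$--$12$. Step $13$ only inserts pairs into $Q$, so it can never remove $(u,v)$; hence $\Pr[(u,v)\notin Q]$ equals the probability that $(u,v)$ is removed in some earlier step. Since the proof is entirely symmetric under reversing the cycle's orientation, I may assume without loss of generality that $(u,v)$ is a clockwise edge, i.e.\ $(u,v)=(v_j,v_{j+1})$ for some $j$. Only Steps $2$, $5$, $7$, $9$, and $11$ can then remove $(u,v)$, since Steps $3,6,8,10,12$ only ever delete counterclockwise edges.

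First, I will handle Step $2$. Here $z_1$ is uniform on $[0,\Delta/10]$, and the algorithm cuts at the points of the arithmetic progression $\{z_1+i\Delta/10\}_{i\ge 0}$, which is a random shift of a grid of spacing $\Delta/10$. The edge $(v_j,v_{j+1})$ is cut iff some grid point lands in the interval $[\overrightarrow{d}(v_1,v_j),\overrightarrow{d}(v_1,v_{j+1}))$, whose length equals $w(v_j,v_{j+1})=d(u,v)$ (using the assumption that each edge of $G$ is the shortest path from its tail to its head). A standard random-shift argument then bounds this probability by $\min(1,\tfrac{10 d(u,v)}{\Delta}) \le \tfrac{10 d(u,v)}{\Delta}$.

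Next I will handle Steps $5,7,9,11$. Each uses the single random variable $z_3\sim\mathrm{Unif}[0,\Delta]$ to perform a single threshold cut. For $(u,v)$ on the arc $\overrightarrow{p}(a_1,a'_1)$, Step $5$ cuts it iff $z_3\in [\overrightarrow{d}(a_1,u),\overrightarrow{d}(a_1,v))$, an interval of length $\overrightarrow{d}(u,v)=d(u,v)$, so it contributes at most $\tfrac{d(u,v)}{\Delta}$. An analogous argument gives the same bound for each of Steps $7$, $9$, and $11$. Even in the worst case where $(u,v)$ lies on both arcs $\overrightarrow{p}(a_1,a'_1)$ and $\overrightarrow{p}(a_m,a'_m)$ (which is possible, since these arcs—each roughly half the cycle—can overlap), and all four of these steps apply, the union bound yields a total contribution of at most $4\cdot \tfrac{d(u,v)}{\Delta}$.

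Combining via the union bound, $\Pr[(u,v)\notin Q] \le \tfrac{10 d(u,v)}{\Delta}+4\cdot \tfrac{d(u,v)}{\Delta} = \tfrac{14 d(u,v)}{\Delta}$, as required. There is no substantial obstacle in this proof: the main bookkeeping point is simply that $\overrightarrow{p}(a_1,a'_1)$ and $\overrightarrow{p}(a_m,a'_m)$ may overlap, so an edge can in principle be affected by all four of Steps $5,7,9,11$; this is precisely what forces the constant $14$ rather than a smaller one.
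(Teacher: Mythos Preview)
Your proof is correct and follows essentially the same approach as the paper: reduce by symmetry to a clockwise edge, bound Step~2 by $10d(u,v)/\Delta$ and each of Steps~5,7,9,11 by $d(u,v)/\Delta$, and apply the union bound. One minor remark: since Step~13 can in principle \emph{reinsert} $(u,v)$ via a surviving path, your ``equals'' should strictly be ``is at most'', but this only helps the desired upper bound.
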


\begin{proof}
Consider any edge $(u,v) \in E$. First we observe that Steps 2,5,7,9 and 11 only remove edges in the clockwise direction. Similarly Steps 3,6,8,10 and 12 only remove edges in the counter-clockwise direction. W.l.o.g.~let us assume that $(u,v)$ is in the clockwise direction. This means that it may be removed in Step 2 with probability at most $\frac{w(u,v)}{\Delta/10} = \frac{10d(u,v)}{\Delta}$. Now in each of Steps 5,7,9 and 11 we remove $(u,v)$ with probability at most $\frac{w(u,v)}{\Delta} = \frac{d(u,v)}{\Delta}$. Taking the union bound we have that the probability that $(u,v)$ is removed from $Q$ is at most $\frac{14d(u,v)}{\Delta}$.
\end{proof}

\begin{lemma}\label{lem:upper prob}
For every $u,v \in V$ we have
\[
\Pr_{Q \sim D}[(u,v) \notin Q] \leq \frac{14 d(u,v)}{\Delta}.
\]
\end{lemma}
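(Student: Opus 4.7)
The plan is to reduce this to the edge case (Lemma \ref{lem:edge_prob}) using a shortest-path argument together with the transitivity enforcement in Step 13.

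First, I would fix a shortest directed path $P = (u = w_0, w_1, \ldots, w_k = v)$ from $u$ to $v$ in $G$, so that $d(u,v) = \sum_{i=0}^{k-1} w(w_i,w_{i+1})$. The key observation is that Step 13 of the algorithm enforces transitivity on $Q$: if $(w_i,w_{i+1}) \in Q$ for every $i \in \{0,1,\ldots,k-1\}$, then by repeated application of the transitivity rule we also have $(u,v) \in Q$ after Step 13. Contrapositively,
\[
\{(u,v) \notin Q\} \subseteq \bigcup_{i=0}^{k-1} \{(w_i,w_{i+1}) \notin Q\}.
\]

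Next, I would apply the union bound together with Lemma \ref{lem:edge_prob}, which bounds the separation probability of a single edge:
\[
\Pr_{Q \sim D}[(u,v) \notin Q] \;\leq\; \sum_{i=0}^{k-1} \Pr_{Q \sim D}[(w_i,w_{i+1}) \notin Q] \;\leq\; \sum_{i=0}^{k-1} \frac{14\, w(w_i,w_{i+1})}{\Delta} \;=\; \frac{14\, d(u,v)}{\Delta},
\]
which is exactly the desired bound.

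There is essentially no obstacle here: the heavy lifting was already done in Lemma \ref{lem:edge_prob}, which handled the per-edge separation probability by summing over all the independent random cutting steps. The only subtlety worth double-checking is that the transitivity step truly does glue together any sequence of edges that all survive (it does, since it is applied to all triples until closure), so the shortest-path argument is valid regardless of the direction or length of $P$.
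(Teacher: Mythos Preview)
Your proposal is correct and is essentially identical to the paper's own proof: fix a shortest path from $u$ to $v$, use the transitivity step to argue that $(u,v)\notin Q$ forces some edge of the path to be removed, apply the union bound, and then invoke Lemma~\ref{lem:edge_prob} for each edge to sum to $14\,d(u,v)/\Delta$.
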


\begin{proof}
Let $p$ be a shortest path from $u$ to $v$ in $G$. Suppose that $(u,v) \notin Q$. Since we enforce transitivity in Step 13 this implies that an edge in $p$ is removed before Step 13. So we have that
$\Pr_{Q \sim D}[(u,v) \notin Q] \leq \ Pr[\text{An edge in $p$ is removed before Step 13}] \leq \sum\limits_{(u,v) \in p} \Pr_{Q \sim D}[(u,v) \notin Q]$ using the union bound. Now applying Lemma \ref{lem:edge_prob} to all the edges of $p$ we have that
\[
\Pr_{Q \sim D}[(u,v) \notin Q] \leq \frac{14 d(u,v)}{\Delta},
\]
which concludes the proof.
\end{proof}

We are now ready to prove the main result of this Section.

\begin{theorem}\label{thm:cycle_01}
Let $G$ be a directed cycle and let $M=(V(G), d_G)$ be its shortest-path quasimetric space.
Then $M$ admits a constant-distortion embedding into some convex combination of 0-1 quasimetric spaces denoted by $D$.
Moreover we can sample a random 0-1 quasimetric space from $D$ in polynomial time.
\end{theorem}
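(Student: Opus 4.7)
The plan is to define the random $0$--$1$ quasimetric space directly from the random quasipartition $Q$ produced by the algorithm in the preceding subsection, and then read off the distortion from Lemmas \ref{lem:lower prob} and \ref{lem:upper prob}. Given a quasipartition $Q$ sampled from the distribution $\mathcal{D}$, define the associated $0$--$1$ quasimetric $d_Q$ on $V(G)$ by $d_Q(u,v)=0$ if $(u,v)\in Q$ and $d_Q(u,v)=1$ otherwise; transitivity of $Q$ (enforced in Step~13) ensures that $d_Q$ satisfies the triangle inequality. The convex combination over $\mathcal{D}$ gives a quasimetric
\[
d_{\mathcal D}(u,v) \;=\; \mathbf{E}_{Q\sim \mathcal D}\!\left[d_Q(u,v)\right] \;=\; \Pr_{Q\sim\mathcal D}[(u,v)\notin Q].
\]

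To show constant distortion, I would scale by $\Delta$ and show that $\Delta\cdot d_{\mathcal D}$ and $d_G$ agree up to a constant factor. The expansion bound comes directly from Lemma \ref{lem:upper prob}: for every $u,v \in V$,
\[
d_{\mathcal D}(u,v) \;\le\; \frac{14\, d_G(u,v)}{\Delta},
\]
so $\Delta\cdot d_{\mathcal D}(u,v)\le 14\,d_G(u,v)$. The contraction bound comes directly from Lemma \ref{lem:lower prob}:
\[
d_{\mathcal D}(u,v) \;\ge\; \frac{d_G(u,v)}{2\Delta},
\]
so $\Delta\cdot d_{\mathcal D}(u,v)\ge \tfrac12 d_G(u,v)$. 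Combining these two inequalities the distortion is at most $28$, which is $O(1)$.

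For the polynomial-time sampling claim, I would argue that the algorithm makes only $O(1)$ random draws from bounded intervals ($z_1,z_2\in[0,\Delta/10]$ and $z_3\in[0,\Delta]$) and that each of Steps~2--12 does $O(n)$ deterministic edge removals conditioned on these draws, while Step~13 (transitive closure on an $n$-vertex relation) runs in $O(n^3)$ time. Thus one sample from $\mathcal D$ can be produced in polynomial time. Alternatively, one can derandomize in the style of Theorem \ref{thm:boundedpathwidthalgo}: the outcome of each cutting step changes only when the random threshold crosses one of the distances $\overrightarrow{d}(v_1,\cdot)$ or $\overleftarrow{d}(v_1,\cdot)$ (respectively $\overrightarrow{d}(a,\cdot),\overleftarrow{d}(a,\cdot)$ for $a\in\{a_1,a_m,a'_1,a'_m\}$), giving only $O(n)$ distinct outcomes per random variable and hence $O(n^2)$ combinations to enumerate for the full support of $\mathcal D$. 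The main conceptual step is really the first one: checking that the $0$--$1$ quasimetric is well-defined (transitivity) and that the two probability lemmas previously proved give both the required upper bound on contraction and lower bound on expansion; the rest is bookkeeping.
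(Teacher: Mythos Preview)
Your proposal is correct and follows essentially the same approach as the paper: convert each sampled quasipartition $Q$ into the $0$--$1$ quasimetric $d_Q$, observe that $d_{\mathcal D}(u,v)=\Pr[(u,v)\notin Q]$, and then apply Lemmas~\ref{lem:lower prob} and~\ref{lem:upper prob} to obtain distortion at most $28$. Your write-up is in fact a bit more explicit than the paper's (you justify the triangle inequality via transitivity and sketch the running-time/derandomization argument), but the underlying argument is identical.
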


\begin{proof}
The required convex combination of 0-1 quasimetrics is obtained from the distribution $D$ returned by the Algorithm. Every quasipartition $Q$ in the support of $D$ can be replaced with a 0-1 quasimetric where for any $u,v \in V$ $d(u,v) = 0$ iff $(u,v) \in Q$ otherwise $d(u,v) =0$. Since $D$ is a probability distribution over quasipartitions this gives us a convex combination of 0-1 quasimetrics $\phi$. Now it remains to show that the distortion is $O(1)$.

For any $u,v \in V$ we denote the distance from $u$ to $v$ in $\phi$ $d_{\phi}(u,v) $. Now we have that ,
\[ 
d_{\phi}(u,v) = 1\cdot Pr_{Q \sim D}[(u,v) \notin Q]  + 0\cdot Pr_{Q \sim D}[(u,v) \in Q] =Pr_{Q \sim D}[(u,v) \notin Q] 
\]

From Lemmas \ref{lem:lower prob} and \ref{lem:upper prob} we have that,

\[ 
d(u,v)\cdot \frac{1}{2\Delta} \leq d_{\phi}(u,v) \leq 28 d(u,v) \cdot \frac{1}{2\Delta}
\]
This implies that the distortion is at most 28.
The bound on the running time is immediate from the description of the algorithm.
\end{proof}

By using Theorem \ref{thm:cycle_01}, we can obtain the following.

\begin{proof}[Proof of Theorem \ref{thm:cycle_ell1}]
First we observe that any quasipartition in the support of the distribution $\cal{D}$ returned by the algorithm is obtained by removing at most $28$ directed edges in $E(G)$ and enforcing transitivity. This follows immediately from our choice of $z_1,z_2$ and $z_3$. Next consider any $u,v \in V(G)$ and any $Q \in \supp(\cal{D})$ such that $(u,v) \not\in Q$. Let $S \subseteq E(G)$ denote the set of edges that we remove before enforcing transitivity to obtain $Q$. Now  we have $|S| \leq 28$. We also have that there exist $e_1,e_2 \in S$ such that removing $e_1$ and $e_2$ from $E(G)$ and enforcing transitivity ensures that $(u,v)$ is not in the resulting quasipartition. This is because $G$ is a directed cycle and there are exactly two directed paths from $u$ to $v$. Therefore it suffices to remove a directed edge from each of the two paths to obtain a quasipartition that does not contain $(u,v)$. This implies that if we consider all possible ways to select two edges (one in the clockwise direction and one in the counterclockwise direction such that they don't overlap i.e. we do not pick $(x,y)$ and $(y,x)$ for all $(x,y) \in E(G)$) from $S$, and then consider the quasipartition obtained by removing them from $E(G)$ and enforcing transitivity we have a set $P_Q$ of at most $\binom{28}{2}$ possible quasipartitions and $(u,v)$ is not in at least one of them. Now we consider the following distribution of quasipartitions. First we pick $Q$ from $\cal{D}$. Then we pick uniformly at random a quasipartition from $P_Q$. This gives us a new distribution of quasipartitions $\cal{D}'$. Note that every quasipartition $Q' \in \supp(\cal{D}')$ is a directed cut metric. This is because removing two directed edges in opposite directions as described earlier and enforcing transitivity on the remaining edges to obtain $Q'$ partitions $V(G)$ into $U \subset V(G)$ and $V(G) \setminus U$ such that $(x,y) \not\in Q'$ if $x \in U$ and $y \in V(G) \setminus U$ and $(x,y) \in Q'$ otherwise. Therefore $\cal{D}'$ is a convex combination of directed cut metrics. Since the bound from lemma \ref{lem:lower prob} decreases by at most a factor of 28 when applied to $\cal{D}'$ it follows that the distortion of $\cal{D}'$ is at most $28$ times larger than that of $\cal{D}$. 
\end{proof}

\section{Embedding directed trees into directed $\ell_1$}\label{sec:treel1}

In this section we describe a method for embedding directed trees into directed $\ell_1$ with distortion one.
Let $G=(V,E)$ be a directed tree, and let $w$ be a weight function on the edges of $G$. Let $M=(X,d)$ be the shortest path quasimetric space induced by $G$. The following algorithm gives us a distribution D over quasipartitions of $G$.

\begin{description}

\item{\textbf{Input:}} A directed tree $G$.
\item{\textbf{Output:}} A random quasipartition $Q$. \\
\item{\textbf{Initialization:}}: 
Set $Q=E(G)$.

\item{\textbf{Step 1.}} Let $W = \sum\limits_{e \in E} w(e)$.
\item{\textbf{Step 2.}} Let $D_E$ be a distribution over $E$, where each edge $e \in E$ is sampled with probability $w(e) / W$.
\item{\textbf{Step 3.}} Pick an edge $e \in E$ from the above distribution, and remove $e$ from $Q$.
\item{\textbf{Step 4.}} 
 Enforce transitivity on $Q$; that is, for all $u,v,w \in V(G)$ if $(u,v) \in Q$ and $(v,w) \in Q$ then add $(u,w)$ to $Q$.

\end{description}

\begin{lemma}\label{lem:treeL_1}
For every $u,v \in V$ we have $\Pr_{Q \sim D}[(u,v) \notin Q]= \frac{d(u,v)}{W}$.
\end{lemma}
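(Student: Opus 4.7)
The plan is to exploit the unique-path structure of trees to determine exactly which sampled edges separate $(u,v)$ in the quasipartition $Q$.

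First I would set up notation. Since $G$ is a directed tree, the underlying undirected graph $G^\UN$ is a tree, so for $u \neq v$ with $d(u,v) < \infty$ there is a unique directed shortest path $P(u,v)$ from $u$ to $v$ in $G$; its edges project onto the unique $u$--$v$ path of $G^\UN$, all oriented in the direction from $u$ toward $v$, and $d(u,v) = \sum_{e \in P(u,v)} w(e)$. (The case $u = v$ is handled trivially, since $d(u,u) = 0$ and $Q$ is reflexive.)

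Next I would analyze the effect of Steps 3 and 4. After Step 3 we have $Q = E(G) \setminus \{e\}$ for the single sampled edge $e$, and after the transitive closure of Step 4, $(u,v) \in Q$ iff there is a directed walk from $u$ to $v$ in $G \setminus \{e\}$. The key structural claim is that such a walk exists iff $e \notin P(u,v)$. The easy direction is immediate: if $e \notin P(u,v)$ then $P(u,v)$ itself is such a walk. For the converse, suppose $e = (a,b) \in P(u,v)$. Deleting the undirected edge $\{a,b\}$ from $G^\UN$ splits the tree into two components $C_a$ and $C_b$, with $u,a \in C_a$ and $v,b \in C_b$. Every directed walk from $u$ to $v$ in $G$ must eventually cross from $C_a$ to $C_b$, but the only directed edge in $G$ going from $C_a$ to $C_b$ is $(a,b) = e$. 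Hence removing $e$ blocks every such walk, giving $(u,v) \notin Q$.

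Combining these, the event $(u,v) \notin Q$ is exactly the event that the sampled edge lies in $P(u,v)$, and over the distribution $D_E$ of Step 2 the per-edge events are mutually exclusive (exactly one edge is removed). Hence
\[
\Pr_{Q \sim D}[(u,v) \notin Q] \;=\; \sum_{e \in P(u,v)} \frac{w(e)}{W} \;=\; \frac{d(u,v)}{W},
\]
as claimed. The only mildly substantive step is the tree-disconnection argument, and this is a direct consequence of the classical fact that removing any edge of a tree disconnects it into two pieces; everything else is bookkeeping about the sampling distribution.
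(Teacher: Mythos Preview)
Your proof is correct and follows essentially the same approach as the paper's: identify the unique shortest path $P(u,v)$, argue that $(u,v)\notin Q$ iff the single removed edge lies on $P(u,v)$, and sum the disjoint per-edge probabilities. If anything, you are more careful than the paper in justifying the ``only if'' direction of that equivalence via the tree-disconnection argument; the paper simply asserts the biconditional.
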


\begin{proof}
Let $P=(a_1=u, a_2, \ldots, a_m=v)$ be the unique shortest path from $u$ to $v$ in $G$. We have that $(u,v) \notin Q$ iff there exists an $i \in \{1,2,\ldots, m-1\}$ such that $(a_i,a_{i+1}) \notin Q$. By the construction, for every $i \in \{1,2,\ldots, m-1\}$ we have that $\Pr_{Q \sim D}[(a_i,a_{i+1}) \notin Q]= \frac{d(a_i,a_{i+1})}{W}$. Now note that the algorithm only removes one edge from $Q$, and thus we have
\[
\Pr_{Q \sim D}[(u,v) \notin Q]= \sum_{i=1}^{m-1}\Pr_{Q \sim D}[(a_i,a_{i+1}) \notin Q] = \frac{d(u,v)}{W},
\]
as desired.
\end{proof}

Now we are ready to prove the main result of this section.

\begin{proof}[Proof of Theorem \ref{thm:treel1}]
The above algorithm gives us a convex combination of quasipartitions, and thus a convex combination of $0$-$1$ quasimetrics with distortion one. The support of this convex combination consists of $0$-$1$ quasimetrics that are also directed cut metrics. This is equivalent to an embedding into directed $\ell_1$, as desired.
\end{proof}


\section{Applications to directed cut problems}\label{sec:application}
\paragraph{Directed \mcut problem} Consider the Directed \mcut problem. Let $P$ be the set of all directed paths from a source terminal to its corresponding sink terminal. For every $e \in E(G)$, we define an indicator variable $x(e)$ that indicates whether $e$ belongs to a cut or not. We have the following integer program for the problem:

\begin{equation*}
\begin{array}{ll@{}ll}
\text{minimize}  & \displaystyle\sum\limits_{e \in E(G)} c(e)&x(e) &\\
\text{subject to}& \displaystyle\sum\limits_{e \in p}   &x(e) \geq 1,  &\forall p \in P\\
                 &                                                &x(e) \in \{0,1\}, &\forall e \in E(G)
\end{array}
\end{equation*}

By relaxing this integer program, we get the following LP relaxation:

\begin{equation*}
\begin{array}{ll@{}ll}
\text{minimize} & \displaystyle\sum\limits_{e \in E(G)} c(e)&x(e) &\\
\text{subject to}& \displaystyle\sum\limits_{e \in p}   &x(e) \geq 1,  &\forall p \in P\\
                 &                                                &x(e) \geq 0, &\forall e \in E(G)
\end{array}
\end{equation*}

Let $x$ be the solution to the above LP relaxation. Let $M$ be the quasimetric space induced by $x$. Suppose there exists a $(1-\varepsilon)$-bounded $\beta$-Lipschitz distribution over quasipartitions of $M$ $\cal{D}$. Then 
we can sample a quasipartition $Q$ of $M$ from $\cal{D}$. Let $S_Q = E(G) \setminus Q$. Since $Q$ is $(1-\varepsilon)$-bounded, we have that $S_Q$ is a valid \mcut solution. We have:
\[
\mathbb{E}(c(S_Q)) = \sum_{e \in S} \mathbb{E}(c(e)) = \sum_{e \in S} c(e) \cdot \Pr(e \notin Q) \leq \sum_{e \in S} c(e) \cdot \beta\frac{x(e)}{1-\varepsilon} \leq \frac{\beta}{1-\varepsilon} \OPT
\]

This means that in expectation $S_Q$ is a $\frac{\beta}{1-\varepsilon}$-approximation for the optimum \mcut solution. Therefore it follows that there exists a quasipartition $Q^* \in \supp(\cal{D})$ such that $c(S_{Q^*}) \leq \frac{\beta}{1-\varepsilon} \OPT$. 

\begin{proof}[Proof of Theorem \ref{thm:multicutresults}]
The proof follows by combining the aforementioned result with the results of Sections \ref{sec:treewidth} and \ref{sec:pathwidth}.
\end{proof}

\paragraph{Directed Non-Bipartite \spcut} The standard LP relaxation for this problem is as follows:

\begin{equation*}
\begin{array}{ll@{}ll}
\text{minimize} & \displaystyle\sum\limits_{e \in E(G)}  c(e)x(e) \\
\text{subject to}& \displaystyle\sum\limits_{(s_i,t_i) \in T} \dem(i) d_x(s_i,t_i)  \geq 1 \\
                 &                                            x(e) \geq 0, &\forall e \in E(G)
\end{array}
\end{equation*}

Here $d_x(u,v)$ denotes the shortest path distance induced by the function $x$ in $G$. We also denote the optimal value of the objective function by $\OPT$. Note that we may assume that $\displaystyle\sum\limits_{(s_i,t_i) \in T} \dem(i) d_x(s_i,t_i)  = 1$ when the optimum is achieved.

It is shown in \cite{klein1993excluded} that the existence of a $(\sigma q, \sigma' \frac{C}{q})$-decomposition for undirected graphs implies that the integrality gap for the (undirected) uniform demand \spcut problem is $O(\sigma \sigma')$. A similar argument to the one used in Lemma 3.5 in that paper gives the following result.

\begin{theorem}\label{thm:sparsestcut}
Let $G$ be a directed graph. Suppose for all $r>0$ and all $w:E(G) \to \{0 \cup \mathbb{R}^{+} \}$ there exists an $r$-bounded $\beta$-Lipschitz distribution over quasipartitions of the shortest path quasimetric induced by $w$ on $G$. Then the integrality gap of the LP relaxation for the Directed Non-Bipartite \spcut problem with uniform demands and any choice of capacities on $G$ is $O(\beta)$.
\end{theorem}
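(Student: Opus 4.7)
The plan is to mirror the Klein--Plotkin--Rao rounding argument (Lemma 3.5 of \cite{klein1993excluded}) in the directed quasimetric setting, using the hypothesized Lipschitz quasipartition as the random decomposition. Let $x^*$ be an optimal LP solution and let $d^* = d_{x^*}$ denote the shortest-path quasimetric it induces on $V(G)$. I would normalize so that $\sum_{i} \dem(i)\, d^*(s_i,t_i) = 1$, so that $\OPT = \sum_e c(e)x^*(e)$.

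For a scale $r > 0$ to be chosen, I would sample a random quasipartition $Q$ from the $r$-bounded $\beta$-Lipschitz distribution for $(V(G),d^*)$ given by hypothesis, and define the associated directed edge set
\[
S \;=\; \{(u,v) \in E(G) : (u,v) \notin Q\}.
\]
The transitivity of $Q$ ensures that $S$ is a valid cut: if $(s_i,t_i)\notin Q$, then no directed $s_i$-to-$t_i$ path in $G$ avoids $S$, since otherwise chaining consecutive edges (each of which is in $Q$) and applying transitivity would force $(s_i,t_i) \in Q$. In particular, every demand pair with $d^*(s_i,t_i) > r$ is separated by $S$, because the $r$-boundedness of $Q$ forbids $(s_i,t_i) \in Q$ for such pairs.

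Two probabilistic estimates would then drive the argument. First, using the $\beta$-Lipschitz property and the fact that $d^*(u,v) \le x^*(u,v)$ on every directed edge $(u,v)\in E(G)$, I would bound the expected cut cost as
\[
\mathbb{E}[c(S)] \;=\; \sum_{(u,v)\in E(G)} c(u,v)\cdot \Pr[(u,v)\notin Q] \;\le\; \sum_{(u,v)\in E(G)} c(u,v)\cdot \beta\,\frac{d^*(u,v)}{r} \;\le\; \frac{\beta\,\OPT}{r}.
\]
Second, for the demand side, $\mathbb{E}[D(S)] \ge \lvert\{i : d^*(s_i,t_i) > r\}\rvert$. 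I would then use the KPR-style balancing for uniform demands, combined with the normalization $\sum_i d^*(s_i,t_i) = 1$, to pick $r$ so that $r\cdot \mathbb{E}[D(S)] = \Omega(1)$, giving $\mathbb{E}[c(S)] \le O(\beta)\cdot \OPT \cdot \mathbb{E}[D(S)]$. A standard $\mathbb{E}[c(S) - \lambda D(S)]\le 0$ argument with $\lambda = O(\beta)\cdot \OPT$ then extracts a deterministic realization $S$ with $c(S)/D(S) \le O(\beta)\,\OPT$, yielding the $O(\beta)$ integrality gap.

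The main obstacle is the scale-selection step: one must adapt the KPR balancing to the directed/quasimetric setting and confirm that, for uniform demands, a single scale $r$ suffices to balance the cost-side factor of $1/r$ against the count of far-apart pairs with constant (rather than $\log n$) overhead. The asymmetry of $d^*$ also has to be handled with care when translating the probabilistic separation bound into a statement about directed cuts, but since the cut $S$ is defined directly from the quasipartition $Q$ and the Lipschitz property already accounts for direction, this step should be routine once the scale selection is in place.
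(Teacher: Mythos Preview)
Your framework is right up to the point you yourself flag as the obstacle, and that obstacle is real: the ``pick $r$ so that $r\cdot |\{i: d^*(s_i,t_i)>r\}|=\Omega(1)$'' step cannot be carried out by a direct balancing argument. From $\sum_i d^*(s_i,t_i)=1$ one can only conclude that \emph{some} scale $r$ satisfies $r\cdot N(r)=\Omega(1/\log k)$ (take $d_i$ harmonically distributed to see the loss is tight), which would yield an $O(\beta\log n)$ gap, not $O(\beta)$. So the proposal as written has a genuine gap exactly where you say it does.

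The paper avoids scale selection altogether. It fixes $r=\tfrac{1}{4n^2}$ once and for all, so $\mathbb{E}[C(S_Q)]\le 4n^2\beta\,\OPT$, and then analyses the strongly connected components $U_1,\ldots,U_m$ of $G$ after deleting the cut edges. Because $Q$ is $r$-bounded and transitive, every SCC has quasimetric diameter at most $r$. If every $U_i$ has size below $2n/3$, a balanced grouping of the $U_i$ shows $D(S_{Q^*})=\Omega(n^2)$ directly (this is stronger than your $D(S)\ge N(r)$, since pairs in different SCCs are separated regardless of whether their $d^*$ exceeds $r$), and sparsity $O(\beta)\,\OPT$ follows. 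If some $U_i$ has size at least $2n/3$, then $\Omega(n^2)$ ordered pairs lie inside $U_i$ with $d^*\le \tfrac{1}{4n^2}$, so they contribute at most $1/4$ to $\sum d^*$; the remaining mass is concentrated on pairs touching the small complement, and the paper dispatches this case by invoking the region-growing argument of Leighton--Rao (Lemma~16 of \cite{leighton1999multicommodity}). The missing idea in your sketch is precisely this SCC-based case split, together with the appeal to region growing in the concentrated case, in place of a single-scale balancing.
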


\begin{proof}
It is known that the solution of the LP relaxation for the Directed Non-Bipartite \spcut \newline
problem is a weight function $w:E(G) \to \{0 \cup \mathbb{R}^{+} \}$ on $G$ that induces a quasimetric space $M=(V(G),d)$ \cite{memoli2016quasimetric}. Let $\cal{D}$ be a $\frac{1}{4n^2}$-bounded $\beta$-Lipschitz distribution over $M$. Every quasipartition $Q \in \supp(\cal{D})$ induces a cut $S_Q$ of $G$. If we pick randomly pick $Q$ from $\cal{D}$ we have that the $\displaystyle \mathop{\mathbb{E}}_{Q \sim \cal{D}} [C(S_Q)] \leq \sum\limits_{i} c(e)\frac{\beta d(e)}{r} \leq 4n^2 \beta \OPT$. This implies that there exists a quasipartition $Q^* \in \supp(\cal{D})$ such that $S_{Q^*} \leq 4n^2 \beta \OPT$. Now consider the cut $S_{Q^*}$. Let $T = {U_1,U_2 , \ldots , U_m}$ be the collection of vertex sets of all maximal strongly connected components of $G[E \setminus S_{Q^*}]$. Suppose $|U_i| < \frac{2n^2}{3}$ for all $i \in \{1,\ldots,m\}$. Then there exists $r \in \{1,\ldots,m\}$ such that $\frac{n}{6} \leq |U_1 \cup \ldots \cup U_r| \leq \frac{5n}{6}$ and $\frac{n}{6} \leq  |U_{r+1} \cup \ldots \cup U_m| \leq \frac{5n}{6}$. Furthermore we have that $D(S_{Q^*}) \geq \frac{5n^2}{36}$. This is because for all $u \in \{U_1 \cup \ldots \cup U_r \}$ and $v \in \{U_{r+1} \cup \ldots \cup U_m \}$ we have that either there is no path from $u$ to $v$ or there is no path from $v$ to $u$ in $G[E \setminus S_{Q^*}]$ since each $U_i$ is a maximal strongly connected component. So we have that the sparsity of $ S_{Q^*}]$ is at most $\frac{4 n^2 \beta \OPT}{5n^2/36} = O(\beta \OPT)$. Suppose $|U_i| \geq \frac{2n^2}{3}$ for some $i \in \{1,\ldots,m\}$. Then we have that for all $u,v \in U_i$ $d(u,v) \leq \frac{1}{4n^2}$ and $d(v,u) \leq \frac{1}{4n^2}$. So we can use the same argument from Lemma 16 in  \cite{leighton1999multicommodity} to obtain a cut with sparsity $O(\beta \OPT)$. This implies that the integrality gap of the LP relaxation is $O(\beta)$ concluding the proof.  
\end{proof}

\begin{proof}[Proof of Theorem \ref{thm:uniformgap}]
The proof follows by combining 
Theorem \ref{thm:sparsestcut} with Theorems \ref{thm:tw2} and \ref{thm:boundedpathwidth}.
\end{proof}

\begin{proof}[Proof of Theorem \ref{thm:spcutresults}]
The proof follows by combining 
Theorem \ref{thm:sparsestcut} with Theorems \ref{thm:tw2} and \ref{thm:boundedpathwidthalgo}. 
\end{proof}


\begin{proof}[Proof of Theorem \ref{thm:nonuniformgap}]
Theorem A.1 from \cite{charikar2006directed} due to Charikar, Makarychev and Makarychev implies that, the non-uniform flow-cut gap for a digraph $G$ is upper bounded by the minimum distortion for embedding any shortest path quasimetric space supported on $G$, into a convex combination of 0-1 quasimetrics. Combined with Theorems \ref{thm:treel1} and \ref{thm:cycle_01} this implies the statement of the theorem.
\end{proof}

\section{Limitations of the Klein-Plotkin-Rao partitioning scheme}\label{sec:limits}
Let $G$ be a directed planar graph, and let $M$ denote the shortest-path quasimetric space of $G$. In order to find a constant Lipschitz distribution over $r$-bounded quasipartitions of $M$, one may try to generalize previous algorithm of \cite{klein1993excluded}, where their work is on undirected graphs. The following algorithm is implicit in their work for planar undirected graphs.

\begin{description}
\item{\textbf{Input:}}
A planar connected graph $G$.
\item{\textbf{Output:}}
A random $r$-bounded partition $Q$ of the shortest-path metric space of $G$.
\item{\textbf{Initialization.}}
Set $G^* = G$ and $Q=E(G)$.

\item{\textbf{Step 1.}} 
Pick $z \in [0,r]$ uniformly at random. Let $c=3$.

\item{\textbf{Step 2.}} 
For every connected component $C$ of $G^*$, proceed as follows. Pick an arbitrary $x \in V(C)$. For all $(u,v) \in E(G^*)$ remove $(u,v)$ from $Q$ and $E(G^*)$ if $d_{G^*}(x,v)> i\cdot r + z$ and $d_{G^*}(x,u) \leq i\cdot r + z$ for some integer $i \geq 0$. Set $c := c - 1$.

\item{\textbf{Step 3.}} 
If $c> 0$, recursively call Steps $1-2$ for every connected component of $G^*$.

\item{\textbf{Step 4.}} 
Enforce trasitivity on $Q$. 
\end{description}

\begin{figure}[h]
\begin{center}
\scalebox{1.3}{\includegraphics{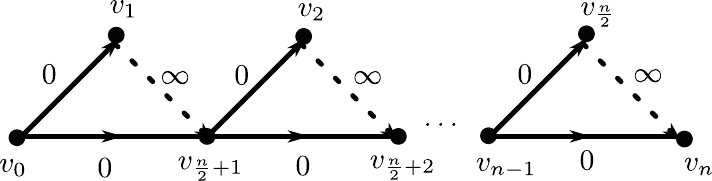}}
\caption{The graph $G$}
\label{fig:kprcounter}
\end{center}
\end{figure}

Note that in this algorithm, we start with $c = 3$, and thus the resulting partition is constant Lipschitz. They show that for planar graphs, $3$ rounds is enough to get an $r$-bounded partition. However, this might not be true in the directed setting.
A natural generalization of the above algorithm to the directed setting is as follows. Instead of undirected edges in step 2, we consider directed ones, and in step $3$, we consider weakly connected components of $G^*$. This generalization fails to result in an $r$-bounded quasipartition for the directed case. We provide the following counter example.
Let $G$ be the graph as shown in figure \ref{fig:kprcounter}. It is easy to check that $G$ is a graph of path-width $2$. Now suppose that the first time we call step $2$ of the algorithm, we pick $x = v_0$. Therefore, no edges will be removed from $Q$. Now suppose that the next call of step $2$, picks $x=v_1$, and the next call picks $x=v_2$, and so on. After $3$ levels of calling the second step, we do not get an $r$-bounded quasipartition. This means that in order to get an $r$-bounded quasipartition, we need to call this step $\Omega(n)$ times, and thus the resulting quasipartition will not be $O(1)$-Lipschitz.

\section{Lower bound for random embeddings of the directed cycle into trees}\label{sec:lowerbound}

\begin{proof}[Proof of Theorem \ref{thm:lowerbound}]
We will consider the directed cycle in the following description. Let the set of vertices $V = \{ v_0,v_2, \ldots, v_{n-1} \}$. Let the set of edges be $E = \{(v_i,v_{(i+1 \bmod n)})| i \in [0,n-1] \}$. Furthermore let all the edges have unit weight.

Let $S$ denote the set of all non-contracting embeddings of $G$ into directed trees. Let $Q$ denote the set of all distributions over edges in $G$. Let $D$ denote the set of all distributions over $S$. The least distortion of any random embedding of $G$ into directed trees is denoted by $\OPT$ where $\OPT = \min\limits_{h \in D} \max\limits_{(u,v) \in E} \displaystyle \mathop{\mathbb{E}}_{H \sim F} {\left[\frac{d_{H}(u,v)}{d_G(u,v)} \right]} = \min\limits_{F \in D} \max\limits_{(u,v) \in E} \displaystyle \mathop{\mathbb{E}}_{H \sim F} {[d_{H}(u,v)]}$ since all edges have unit weight. This is lower bounded by the Von Neumann dual problem given by $\max\limits_{q \in Q} \min\limits_{H \in S} \displaystyle \mathop{\mathbb{E}}_{(u,v) \sim q} \left[d_{H}(u,v)\right]$. Therefore for all $q \in Q$ we have that $\min\limits_{H \in S} \displaystyle \mathop{\mathbb{E}}_{(u,v) \sim q} \left[d_{H}(u,v)\right] \leq \OPT$. Now we will set $q$ to be the uniform distribution over edges in $E$. Next consider any $H \in S$. We have that $\displaystyle \mathop{\mathbb{E}}_{(u,v) \sim q} \left[d_{H}(u,v)\right] = \frac{1}{n} \sum\limits_{i \in [0,n-1]} d_H(v_i,v_{(i+1 \bmod n)})$. Let us denote the expression $\sum\limits_{i \in [0,n-1]} d_H(v_i,v_{(i+1 \bmod n)})$ by $X$. Let $L$ be the set of vertices of degree $1$ in the underlying undirected graph of $H$. Furthermore let the vertices in $L$ be labeled as follows. $L = \{ v_{x_0},v_{x_2}, \ldots, v_{x_{m-1}}\}$ where for any $j \in [0,m-1]$ we have that $x_j \in [0,n-1]$ and $x_1 < x_2 \ldots < x_m$. We have that $X \geq \sum\limits_{j \in [0,m-1]} d_H(v_i,v_{(i+1 \bmod m)}) $. Let us denote $\sum\limits_{j \in [0,m-1]} d_H(v_i,v_{(i+1 \bmod m)}) $ by $Y$. Note that $Y$ is the length of a walk traversing all the leaf vertices in $H$ in the order given by the indices of the vertices in $L$ and terminating at the starting vertex $v_{x_0}$. Since every directed edge in $H$ separates at least one pair of leaves in one direction it must be that every directed edge in $H$ is traversed at least once during this walk. So this implies that $Y \geq \sum\limits_{(u,v) \in E(H)} d_H(u,v)$. Now consider any directed edge $(u,v) \in E(H)$. We have that $d_H(u,v) + d_H(v,u) \geq n$. This is because $H$ is non-contracting and for any $u,v \in V(G)$ we have that $d_G(u,v) + d_G(v,u) = n$. Therefore we have that $X \geq Y \geq n(n-1)$. This implies that given our choice of $q$ and for any $H \in S$ we have $\displaystyle \mathop{\mathbb{E}}_{(u,v) \sim q} \left[d_{H}(u,v)\right] = \frac{1}{n} X \geq \Omega(n)$. Therefore $\OPT \geq \Omega(n)$, which concludes the proof.  
\end{proof}

\bibliography{bibfile}

\end{document}